\documentclass[mnsc,nonblindrev]{informs3_nojournal} 

\OneAndAHalfSpacedXI
\usepackage{booktabs}
\usepackage[backref=page]{hyperref}
\usepackage[capitalise]{cleveref}
\usepackage{algorithm,algorithmic}
\usepackage{tikz}
\usepackage{enumitem}
\usetikzlibrary{automata, positioning}
\usepackage{subcaption}
\newcommand{\ep}{\epsilon}

\usepackage{enotez}
\let\footnote=\endnote

\usepackage{bbm}

\hypersetup{
    colorlinks=true,
    linkcolor=blue,
    filecolor=magenta,      
    urlcolor=cyan,
    pdftitle={Unit Demand Pricing for Min Buying},
    pdfpagemode=FullScreen,
    }
    
\def\bold#1{{\sffamily \bfseries#1}}

\def\E{\mathbb{E}}
\def\e{\mathbf{e}}
\def\I{\mathbb{I}}
\def\cI{\mathcal{I}}
\def\rev{\mathcal{R}}

\def\cF{\overline{\text{F}}}
\def\F{\text{F}}

\def\p{{\bf p}}
\def\bp{{\bf p}}

\def\bV{{\bf V}}
\def\v{{\overline{v}}}

\def\rev{\mathcal{R}}

\def\Tri{\text{Tri}}

\graphicspath{ {images/} }

\usepackage{natbib}
 \bibpunct[, ]{(}{)}{,}{a}{}{,}%
 \def\bibfont{\small}%

\TheoremsNumberedThrough %Preferred (Theorem 1, Lemma 1, Theorem 2)
\ECRepeatTheorems

\EquationsNumberedThrough    % Default: (1), (2), ...
\MANUSCRIPTNO{}

\begin{document}
%%%%%%%%%%%%%%%%

% Enter authors following the given pattern:
\RUNAUTHOR{\bold{Cui et al.}}
% \RUNAUTHOR{\bold{Blinded}}

\RUNTITLE{Unit-Demand Pricing for Min-Buying Customers}

% Enter the full title:
\TITLE{\Large \bold{Unit-Demand Pricing for Min-Buying Customers}}

\ARTICLEAUTHORS{%
\AUTHOR{Titing Cui$^{1}$, Michael L. Hamilton$^{2}$, Jourdain Lamperski$^{3}$, Kasra Ganjizadeh Tari$^{3}$}
\AFF{$^{1}$University of Tulsa, $^{2}$Baruch College, $^{3}$University of Pittsburgh,  \EMAIL{\\tic3653@utulsa.edu, michael.hamilton@baruch.cuny.edu, lamperski@pitt.edu, kag339@pitt.edu}} 
}

\ABSTRACT{
% OR requires one paragraph, <200 word abstract. 
We study the unit-demand pricing problem under \textit{min-buying} behavior, where customers purchase the lowest-priced item that yields non-negative utility. In contrast to the quasi-linear model, we show that uniform pricing becomes substantially more effective under min-buying. First, we study symmetric environments. We show that for min-buying customers with independent and identically distributed valuations, a uniform price across items is optimal. We then extend this optimality result to all valuations drawn from exchangeable and affiliated distributions. However, uniform pricing is not optimal in all symmetric environments: for min-buying customers we construct an exchangeable and non-affiliated valuation distribution for which the gap between uniform pricing and optimal item pricing is logarithmic in the number of items. This gap is tight and matches the worst-case gap in the quasi-linear setting. Next, we consider asymmetric environments in which valuations are independent but not identically distributed. When these valuations are regular and customers are min-buying, we show that a uniform price guarantees approximately 69.86\% of the optimal item-pricing revenue and this is tight. Overall, our results highlight that the value of non-uniform pricing substantially diminishes, and often disappears entirely, under more conservative demand models in which customers engage in satisficing behaviors such as min-buying.
}

\KEYWORDS{unit-demand pricing, min-buying, algorithmic pricing, price discrimination} 

\HISTORY{\underline{First Version}: March 14, 2025. \underline{This Version}: \today}

\maketitle

% {\color{red} To do: 
% \begin{itemize}
%     \item Journal: Operations Research - Standard Paper
%     \begin{itemize}
%         \item Area: Markets, Platforms, and Revenue Management. 
%         \item DE: René Caldentey 
%         \item AE: Negin Golrezaei, Jake Feldman, Stefanus Jasin, Ming Hu (alt), Brendan Lucier (alt).
%         \item Reviewers: Billy Jin (bzj3@cornell.edu), Shixin Wang (GTech), Tarek Abdullah (Northwestern), Omar El Housni, Gerardo Berbeglia, Jake Feldman, Shreyas Sekar, Zhe Liu, Xiaobo Li
%     \end{itemize}
% \end{itemize}
% }

% OR-Short Paper Submission: A focused technical paper is a short paper that has results on a technical problem of broad interest to the OR community. The OPRE journal in the past has had success publishing shorter papers, and this category serves to specifically remind us that page length does not equate with importance. The maximum length for a focused technical paper is 20 pages (excluding references), and the Electronic Companion (EC) is for code and experiment details only (no proofs – proofs should be in the published portion of the paper).

% {\color{red} To do: 
% \begin{itemize}
% % \item[1. ] Call CER distributions, triangular-revenue distribution everywhere. Remove the canonical. 
% \item[2. ] Check and clean the proof of \cref{thm:indep_reg}.
% \item[3. ] Clean up writing in \cref{sec:reg}.
% \item[4. ] Submit to Operations Research on August 14th 
% \end{itemize}
% }

\section{Introduction}\label{sec:intro}
In a unit-demand pricing problem (UDP), the task is to find prices for $n$ items that maximize a seller's expected revenue given some distribution over customer valuations. Once prices are set, each unit-demand customer draws a vector of valuations, and purchases the item they prefer most (or nothing) according to some decision rule. The most commonly studied decision rule is when customers are \textit{quasi-linear utility-maximizers} who choose the item that maximizes the difference between their valuation and the price, or no item when all differences are negative. The next most common decision rule is called \textit{min-buying}, as popularized by \cite{rusmevichientong2006nonparametric}. Under min-buying, customers purchase the lowest priced item that gives them non-negative utility. 

Both quasi-linear utility-maximization and min-buying are useful modeling choices in different settings. While quasi-linear utility-maximization is an intuitive and natural way to model rational customer behavior, it depends on assumptions about how customers make choices that do not hold in all settings. Namely, it requires that customers translate their preferences into a common numerical scale and then choose based on utility comparisons between all pairs of items. Research in behavioral economics provides many examples where, due to limited attention, cognitive costs, and/or other biases, customers fail to maximize their quasi-linear utility in this way and instead are better modeled as following simple heuristics to make purchasing decisions (for a survey of results in this area, see \cite{kahneman2006anomalies}). Inspired by such settings, when customers are not modeled as fully rational, a common alternative is to suppose that customers engage in \textit{satisficing} behavior. Under satisficing behavior customers screen items by whether they are ``good enough'' (i.e., yield non-negative utility) and then pick the cheapest acceptable option \citep{kahneman2003maps}. When the screening is done in ascending price order this is equivalent to following a min-buying rule.

Beyond its use as a model of bounded rationality, min-buying is also closely connected to several important problems in revenue management and pricing. Min-buying rules are commonly used to model how customers make decisions under incomplete information. For instance, in settings with opaque products (i.e., products for which some attribute of the good, such as color or style, is hidden until after purchase) a common assumption is that customers are \textit{risk-averse}, meaning that they value the opaque product according to the minimum-valued possible allocation. Models of opaque selling for risk-averse customers are often related to min-buying UDPs \citep{elmachtoub2021power,chen2025approximation,housni2025price}. UDPs with min-buying customers are also structurally identical to so-called order-oblivious posted-price mechanisms (OPMs), in which a single good is offered sequentially to customers at buyer-specific prices and the mechanism must perform well regardless of the order in which the offers are considered \citep{chawla2010multi}, as well as to certain ranked assortment-optimization problems \citep{aouad2018approximability,berbeglia2020assortment}.

For either choice of decision rule, UDPs of the above form are foundational models for understanding the value of \textit{price discrimination}, which concerns setting different prices for horizontally differentiated items, and \textit{price differentiation}, which concerns setting prices for vertically differentiated items. Much recent work on UDPs has focused on quantifying the value of price discrimination or differentiation relative to the strategy of offering a single uniform price for all items. For quasi-linear customers, however, the story is mixed. On the one hand, in the natural setting of IID valuations and quasi-linear utility-maximizing customers, a uniform price need not be optimal \citep{thanassoulis2004haggling}, and optimal discriminatory pricing can earn up to twice as much revenue as uniform pricing \citep{chawla2010multi,dutting2016revenue}. (Note pricing in this setting is referred to as \textit{discriminatory} since, when valuations are IID, every permutation of the price vector yields identical revenue, thus the assignment of relatively high and low prices to particular items is arbitrary). On the other hand, a complementary line of work shows that uniform pricing can be remarkably effective even in asymmetric (i.e., vertically differentiated) settings. For quasi-linear utility-maximizing customers with independent but non-identical item valuations satisfying a regularity condition, an optimally chosen uniform price always earns at least $38\%$ of the optimal non-uniform pricing revenue \citep{jin2019tight}. This guarantee is surprisingly strong given how different the item distributions may be, potentially reflecting stark differences in item quality, and highlights the general effectiveness of uniform pricing.

Viewed through this lens, much of the literature on UDPs can be understood as quantifying when non-uniform prices are necessary. Because uniform prices are fair, easy to implement, and easy to optimize, they are attractive to sellers so long as restricting to a uniform price does not result in too much revenue loss. For quasi-linear utility maximizers, the performance of uniform pricing relative to optimal non-uniform pricing is well understood: prior work provides tight or nearly tight bounds in canonical regimes, including IID valuations, independent regular but non-identical valuations, and general worst-case distributions. However, the practical and managerial implications of these results are less clear, with relatively weak guarantees in natural symmetric settings and surprisingly strong results in asymmetric ones. Further, as these results are closely tied to the assumption that customers maximize quasi-linear utility, an assumption which may not hold in many practical settings, it is then natural to ask how these classical conclusions change when customers are min-buying rather than quasi-linear utility maximizers. In this paper, we answer this question by proving analogues of these foundational results under min-buying behavior. In contrast to the mixed results in the quasi-linear case, we find that uniform pricing is optimal in many symmetric environments and, in general, achieves substantially stronger guarantees.

\subsection{Our Contributions}

A summary of our main contributions is as follows:

\begin{itemize}
    \item[1. ] First, we study the value of discriminatory pricing in symmetric valuation environments when customers are min-buying. We consider three classes of symmetric valuations: IID distributions, affiliated and exchangeable distributions, and general exchangeable distributions. We show that when customers are min-buying and valuations are IID, a uniform price is always optimal (cf. \cref{thm:iid}) and thus discriminatory pricing has no additional power in this setting. We then extend this result to the more general class of affiliated and exchangeable distributions and show uniform pricing continues to be optimal (cf. \cref{thm:exchang_aff}). Finally, we complement these positive results with a lower-bounding construction for general exchangeable valuations that demonstrates, both in the min-buying and quasi-linear settings, that discriminatory pricing can earn logarithmically more revenue than uniform pricing (cf. \cref{thm:exchang}) matching the known upper-bound. 
        
    \item[2. ] Second, we study the value of differentiated pricing when valuations are drawn asymmetrically across items. Our main result in this setting is to show that for min-buying customers, when valuations are drawn from independent and (not necessarily identical) regular distributions, a uniform price is guaranteed to earn $\approx 69.86\%$ of the revenue of the optimal non-uniform pricing and this is tight (cf. \cref{thm:indep_reg}). This resulting approximation substantially improves over the $\approx 38 \%$ that can be guaranteed in the quasi-linear case, further emphasizing the differences between these two models. 
\end{itemize}

Overall, our work highlights how the unit-demand pricing problem changes when customers are min-buyers versus quasi-linear utility-maximizers. We situate our results within the broader literature on unit-demand pricing and show that, both for price discrimination among horizontally differentiated items and for price differentiation across (potentially) vertically quality-differentiated items, the gains from non-uniform pricing depend critically on the customer decision rule, and are markedly smaller under min-buying (often vanishing altogether). En route to our results, we also develop techniques for how to analyze pricing in general symmetric settings and how to exploit the structure of min-buying behavior which may be of independent interest.

\subsection{Literature Review}\label{sec:lit}
The literature on unit-demand pricing is vast and interdisciplinary, with contributions from economics, operations research/management, marketing, and computer science. In this subsection, we describe some of the most relevant streams of work on min-buying rules and various unit-demand pricing problems, and connect these results with our findings.

\subsubsection*{Models of Min-Buying}
Min-buying models of customer behavior have a long history, going back at least to \citet{simon1955behavioral}, who highlights the potential value of alternative models of customer decision-making, including satisficing rules based on simple min/max criteria. More recently, behavioral evidence that quasi-linear maximization can, in some settings, provide a poor explanation of customer choice has spurred renewed interest in min-buying models \citep{gigerenzer1996reasoning,kahneman2003maps}. These min-buying-style models are also supported by empirical research in several contexts. For example, \citet{stuttgen2012satisficing} use eye-tracking data to show that many customers engage in satisficing behavior (rather than comparing across all available items) and \citet{karle2015loss} provide theoretical and empirical evidence that customers may exhibit loss aversion and gravitate toward lower-priced items more strongly than would be expected under a fully rational, quasi-linear utility-maximizing model.

Building on their use as a model of bounded rationality, a growing body of work studies pricing and revenue-management problems under min-buying-type assumptions (e.g., \cite{goyal2016near,cao2022network,bhardwaj2025algorithms}). Min-buying rules are also commonly used to model customer decisions under incomplete information, such as arises in settings with \textit{opaque products}, where information about the good is hidden until after purchase. In models of opaque selling, it is common to assume that customers are \textit{risk-averse}, meaning that they value an opaque product according to the minimum-valued possible allocation \citep{elmachtoub2021power,chen2025approximation,housni2025price}. Furthermore, unit-demand pricing under min-buying is closely connected to certain assortment-optimization models \citep{aouad2018approximability}. In particular, \citet{berbeglia2020assortment} establish an exact correspondence between min-buying UDPs and revenue-ordered assortments, allowing our results to translate directly into new guarantees for specific instances of their model.

\subsubsection*{UDPs in Symmetric Settings}
There is a deep body of work studying the value of price discrimination in symmetric settings. When valuations are drawn IID and customers are quasi-linear utility maximizers, a stream of papers \citep{thanassoulis2004haggling,chawla2007algorithmic,chawla2010multi,dutting2016revenue} shows that non-uniform discriminatory prices can strictly outperform uniform pricing. For $n$ items, the ratio between the revenue of the optimal item pricing and that of the optimal uniform pricing can be as large as $2-\frac{1}{n}$, and this bound is tight. Beyond their direct implications for pricing, these results have also been influential in the algorithmic game theory literature because of their close connections to prophet inequalities, which have spurred substantial follow-up work; see, for example, \cite{lucier2017economic,correa2019pricing}.

Beyond IID valuations, prior work has studied other symmetric settings in which valuations are correlated across items and exhibit some form of positive dependence \citep{anunrojwong2022robustness}, as well as the general setting in which valuations are exchangeable. In particular, for general exchangeable valuations and quasi-linear utility-maximizing customers, \cite{elmachtoub2021power} show that discriminatory pricing can earn at most a logarithmic factor more revenue than uniform pricing and ask whether this bound is tight. We answer this question in the affirmative by constructing an exchangeable instance in which, under min-buying, discriminatory pricing earns a $\Theta(\log n)$ factor more revenue than uniform pricing, where $n$ is the number of items. This lower bound also applies to quasi-linear customers since, for any fixed price vector, revenue is weakly higher under quasi-linear utility maximization than under min-buying, whereas uniform-pricing revenue is identical under the two decision rules. Conversely, every upper bound for quasi-linear customers applies to min-buying UDPs. Together, these observations establish that the logarithmic gap is tight up to constant factors under both decision rules.

In parallel with work characterizing approximation guarantees for uniform pricing, a stream of papers studies the algorithmic problem of solving UDPs, sometimes referred to as \textit{item pricing}, in symmetric settings. For quasi-linear utility-maximizing customers with IID valuations, \cite{chen2014complexity,chen2018complexity} show that finding the optimal item pricing is NP-hard. On the positive side, when valuations are IID and bounded, \cite{cai2015extreme} provide an additive PTAS, while \cite{bowers2025polynomial} give a multiplicative PTAS. In this paper, we study the analogous problem for min-buying customers and show that uniform pricing is always optimal when valuations are IID. Our result simultaneously characterizes the performance of uniform pricing and reduces the corresponding item-pricing problem to a one-dimensional optimization.

\subsubsection*{UDPs Beyond Symmetric Settings}
Outside symmetric settings, there is also a vast body of work studying the value of price differentiation. The most relevant stream is that of \citet{alaei2019optimal,jin2019tight,jin2020tight}, which show that when customers are quasi-linear utility maximizers and valuations are drawn independently but non-identically from regular distributions, the revenue of an optimal uniform price is always within a factor of approximately $2.62$ of the optimal differentiated-pricing revenue, and that this factor is tight. In our work, we study the analogous problem for min-buying customers and characterize the exact worst-case approximation factor, showing that an optimal uniform price is within a factor of approximately $1.43$ of the optimal differentiated pricing. Equivalently, our result characterizes the tight revenue gap between anonymous pricing and optimal order-oblivious posted pricing for independent regular valuations (and alternatively, the result of \cite{jin2019tight} can be cast as giving the tight approximation factor between differentiated pricing and uniform pricing when customers are \textit{max-buying}, meaning they purchase the highest priced item for which they obtain non-negative utility). On the algorithmic side, there is a significant literature on solving or approximating the optimal item-pricing problem when valuations are non-identical and customers are quasi-linear utility maximizers; see \cite{aggarwal2004algorithms,guruswami2005profit,balcan2008item,gallego2021bounds}. Several of these techniques and guarantees also extend to related min-buying formulations. However, relative to this literature, our focus is on establishing tight constant-factor approximation guarantees for the simpler policy class of uniform pricing.

% Finally, we highlight one interesting setting of differentiated pricing where two valuations are drawn independently according to shifted uniform distributions. This setting is a natural testbed to showcase ideas in algorithmic game theory, see for instance \cite{pavlov2011optimal,daskalakis2015strong}. In this work we also highlight this setting an show in this case, uniform pricing is revenue optimal for min-buying customers. 

\subsection{Paper Outline}
The remainder of the paper is organized as follows. In \cref{sec:model}, we describe the model and give some preliminary results about unit-demand pricing for min-buying customers. In \cref{sec:symm}, we study the performance gap between uniform pricing and discriminatory pricing when valuations are drawn symmetrically. In \cref{sec:reg}, we consider when items are valued asymmetrically and bound the gap between uniform pricing and differentiated pricing. Finally, in \cref{sec:conclusions} we conclude and describe avenues for future work. All omitted examples and proofs referenced in the main body can be found in \cref{sec:examples,sec:proofs} in the Appendix.

\section{Model and Preliminaries}\label{sec:model}
We consider a seller offering $n$ items, described by the index set $[n] := \{1, 2, \ldots, n\}$, to customers with non-negative, random valuations $\bV = (V_1, \ldots, V_n)$. For each item $i$, we will denote the marginal valuation distribution by $V_i \sim \F_i$. Unless otherwise stated, we suppose each marginal distribution $\F_i$ is absolutely continuous with density $f_i$\footnote{Continuity is not strictly necessary for our results. We assume it for ease of exposition, as it allows us to avoid complications that occur when valuations exactly equal prices with positive probability.}. We denote the survival function (cCDF) of a distribution $\F_i$ by $\cF_i(x) := 1 - \F_i(x)$. Assuming knowledge of the valuation distribution, the seller sets prices $\p = (p_1, p_2, \ldots, p_n)$ for the $n$ items, where, unless otherwise stated, we will assume the items are indexed so that $p_1 \le p_2 \le \ldots \le p_n$.

Facing item prices $\p$, we suppose customers are unit-demand and min-buying, and thus purchase the minimum priced item such that their valuation is at least the price, or nothing if all prices exceed their valuations. Since items are indexed to be in ascending price order we can equivalently say a customer with valuation vector $\textbf{v} = (v_1, v_2, \ldots, v_n)$ purchases the item $i^* := \min\{i : v_i \ge p_i\}$, or no item if $v_i < p_i$ for all $i \in [n]$. We will denote the expected revenue of a seller using prices $\p$ by $\rev_{DP}(\p)$, and let $\rev_{DP} := \max_{\p}\rev_{DP}(\p)$ with the argument suppressed to denote the maximum achievable expected revenue. When $\p$ is such that the prices for the items are all the same, i.e., $\p = p\cdot \e_n$ for some $p \in \mathbb{R}$, where $\e_n := (1, 1, \ldots, 1) \in \mathbb{R}^n$ is the vector of 1's, we will refer to this as \textbf{u}niform \textbf{p}ricing and denote the revenue by $\rev_{UP}(p)$. When referring to the expected revenue of the optimal uniform pricing we again omit the argument and denote it by $\rev_{UP} := \max_p \rev_{UP}(p)$. We assume that all pricing problems considered have finite, attained optima and that $\rev_{UP} > 0$.

In this paper, we will compare $\rev_{DP}$ and $\rev_{UP}$ for several important classes of valuation distributions. In the remainder of this section, we will describe some preliminary results about these revenue functions that will aid us in our subsequent analysis.

\subsection{Preliminaries} \label{sec:prelims}

In this subsection, we will describe some preliminary results related to the structure of unit-demand pricing for min-buying customers. First, note that when valuations are independent, the expected revenue of a uniform price $p$ can be expressed simply as: 
\
\begin{equation}\label{eq:rev_up}
    \rev_{UP}(p) = p\left(1 - \prod_i \F_i(p)\right),
\end{equation}
\
which is the price $p$ multiplied by the probability that at least one valuation is at least that price. The revenue of uniform pricing is the same regardless of whether customers are min-buying or quasi-linear utility maximizing. 

The min-buying behavior becomes relevant for non-uniform price vectors. Specifically, for some prices $\p$, when valuations are independent and items are re-indexed so that $p_i \le p_{i+1}$ for all $i \in [n-1]$, the min-buying assumption allows us to express the revenue generated by $\p$ in a simple product/sum form:
\
\begin{align}
    \rev_{DP}(\p) :=  \sum_{i = 1}^n p_i \Pr \Bigl(i = \min\{j \in [n] : V_j \geq p_j\} \Bigr) 
    &= \sum_{i =1}^n  p_i \Pr( V_i \geq p_i \text{ and } V_j < p_j, \ \forall j \in [i-1] ) \notag\\
    &= \sum_{i =1}^n  p_i \Pr(V_i \geq p_i )   \prod_{j =1}^{i-1} \Pr(V_j < p_j) \notag\\
    &= \sum_{i =1}^n  p_i \cF_i(p_i) \prod_{j =1}^{i-1}  \F_j(p_j). \label{eq:rev_dp}
\end{align}
\
Often we will reference how a pricing policy operates over a restricted subset of items $S \subset [n]$. We denote the sub-vector of a price vector $\p$ over an index set $S$ by the subscript $\p_S$, i.e., $\p_S = (p_j)_{j \in S}$, and denote the revenue of a price vector over the restricted set $S$ by the superscript as: 
\
\begin{equation}\label{eq:ea-refined-dp}
\rev_{DP}^S(\p_S) := \sum_{i \in S} p_i \Pr\left(i = \min\{j \in S : V_j \geq p_j\}\right).
\end{equation}
\
For unit-demand pricing with min-buying customers and independent valuations, there is a natural recursive formula for the expected revenue of an item pricing relative to a restricted set of the items. Specifically, for any $k \in [n]$, let $S_k := \{1,\ldots,k\}$ be the set of the first $k$ items, and define $\bar S_k := [n] \setminus S_k$ to be the complement set of the last $n - k$ items. Then for any $k$, \cref{eq:rev_dp} can alternatively be expressed over $S_k$ and $\overline{S}_k$ as:
\
\begin{align}
    \rev_{DP}(\p) & = \sum_{i = 1}^k p_i \cF_{i}(p_i) \prod_{j =1}^{i-1}  \F_j(p_j) + \sum_{i = k+1}^n  p_i \cF_{i}(p_i) \prod_{j =1}^{i-1}  \F_j(p_j) \notag \\
    & = \rev_{DP}^{S_k}(\p_{S_k}) + \left(\sum_{i = k+1}^n  p_i \cF_{i}(p_i) \prod_{j = k+1}^{i-1}  \F_j(p_j) \right) \prod_{j = 1}^{k} \F_j(p_j) \notag \\ 
    & = \rev_{DP}^{S_k}(\p_{S_k}) + \rev_{DP}^{\bar S_k}(\p_{\overline{S}_k})  \prod_{j = 1}^{k} \F_j(p_j). \label{eq:rec_rev}
\end{align}

\cref{eq:rec_rev} captures a form of separability in revenue between the $k$ lowest priced items and the remaining $n-k$. If applying this property to $S_1$ we have:
\
\begin{align}\label{eq:obj_rec_1} 
\rev_{DP}(\p) 
& = \rev_{DP}^{S_1}(p_{S_1}) + \rev_{DP}^{\bar S_1}(\p_{\overline{S}_1})\F_1(p_{S_1}),
\end{align}
which neatly separates the $n$ item pricing problem for min-buying customers into a single item pricing problem in the first term (the term $\rev_{DP}^{S_1}(p_{S_1})$ equals $p_{S_1} \cF_1(p_{S_1})$ which is the revenue of item 1 alone offered at the scalar price $p_{S_1}$), and an $n-1$ item pricing problem in the second term. The two terms are tied together only by the constraint that the prices are ascending. Specifically, from \cref{eq:obj_rec_1} we observe that if $\p^*$ is an optimal price vector, then $\p^*_{\overline{S}_1}$ can be chosen to be an optimal item pricing over $\overline{S}_1$ subject to the constraint that the prices are increasing, i.e., that $\p_{\overline{S}_1} \ge p^*_{S_1} \e_{n-1}$ component-wise.

This is a key way in which the revenue function of item pricing for min-buying customers differs from item pricing for quasi-linear utility maximizers. In the quasi-linear case, items ``compete'' through their net utilities $v_j - p_j$, so the demand for any one item depends on the full price vector. In contrast, min-buying induces a strong form of cannibalization where lower-priced items screen customers away from higher-priced items. Armed with this insight, we are now ready to study UDP for min-buying customers when valuations are symmetric. 

\section{Uniform Pricing vs. Discriminatory Pricing}\label{sec:symm}
In this section, we will study unit-demand pricing for min-buying customers when valuations are drawn symmetrically. This setting models discriminatory pricing where a seller offers a substitutable set of products, for instance various colors of the same item, at different prices. As mentioned in the introduction, a surprising and influential line of work has shown that, even under symmetric valuations, discriminatory pricing can substantially outperform uniform pricing when customers are quasi-linear utility maximizers. Here, we consider when this advantage persists for min-buying customers by studying the performance of uniform pricing in three increasingly general symmetric environments: IID valuations, exchangeable and affiliated valuations, and general exchangeable valuations.

\subsection{Uniform Pricing Guarantees for IID Valuations}\label{sec:iid}
The case of IID valuations is the most commonly studied symmetric setting of the unit-demand pricing problem. For quasi-linear utility-maximizing customers with IID valuations, \cite{chawla2007algorithmic,dutting2016revenue} show that discriminatory pricing can earn up to twice as much revenue as uniform pricing, and that this factor is tight as the number of items grows large. Our first result shows that this phenomenon disappears for min-buying customers: when valuations are IID, discriminatory pricing offers no advantage over uniform pricing.

\begin{theorem}[IID Valuations]\label{thm:iid}
    Suppose there are $n$ items, customers are min-buying, and they draw valuations IID from some positive-valued distribution $V_i \sim \F$ for $i \in [n]$. Then a uniform price is revenue-maximizing, i.e., $\rev_{DP} = \rev_{UP}$.
\end{theorem}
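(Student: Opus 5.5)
We prove the claim by induction on $n$, using the recursive decomposition \cref{eq:obj_rec_1}. The plan is to show that for any ascending price vector $p_1\le \dots\le p_n$, replacing the whole vector by a single suitably chosen uniform price does not decrease revenue. In the IID case, \cref{eq:rev_dp} gives
\[
\rev_{DP}(\p)=\sum_{i=1}^n p_i\cF(p_i)\prod_{j<i}\F(p_j),
\]
and for a uniform price $\rev_{UP}(p)=p(1-\F(p)^n)=\sum_{i=1}^n p\cF(p)\F(p)^{i-1}$. The base case $n=1$ is trivial.

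For the inductive step, suppose that for $n-1$ items any ascending price vector is weakly dominated by some uniform price. Given an optimal ascending $\p$ on $n$ items, apply the hypothesis to $\p_{\bar S_1}$. One subtlety is that the dominating uniform price must respect the constraint $\ge p_1$. We handle this by a stronger induction statement: for prices all $\ge p_1$, the dominating uniform price can be taken $\ge p_1$. Equivalently, we show that the best uniform price on $\{p: p\ge p_1\}$ beats any ascending vector with minimum at least $p_1$.

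The cleaner route is to reduce to two prices. By \cref{eq:obj_rec_1} and induction we may assume $\p=(a,b,\dots,b)$ with $a\le b$, where the last $n-1$ items share the uniform price $b$. Revenue is then
\[
G(a,b)=a\cF(a)+\F(a)\,b\bigl(1-\F(b)^{n-1}\bigr).
\]
We must show $\max(\rev_{UP}(a),\rev_{UP}(b))\ge G(a,b)$. Write $x=\F(a)\le y=\F(b)$, $r_a=a(1-x)$, $r_b=b(1-y)$, and $s_k=1+y+\dots+y^{k-1}$. Then $G=r_a+x\,r_b\,s_{n-1}$, while $\rev_{UP}(a)=r_a s_n(x)$ and $\rev_{UP}(b)=r_b s_n(y)$. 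The key observation is that $G$ is linear in the pair $(r_a,r_b)$, so it suffices to exhibit weights $\lambda\in[0,1]$ with
\[
\lambda\, s_n(x)\ge 1 \quad\text{and}\quad (1-\lambda)\,s_n(y)\ge x\,s_{n-1}(y).
\]
Taking $\lambda=1/s_n(x)$, the requirement becomes
\[
s_n(y)\bigl(1-1/s_n(x)\bigr)\ge x\,s_{n-1}(y),
\]
that is, $s_n(y)\,x\,s_{n-1}(x)\ge x\,s_n(x)\,s_{n-1}(y)$, or
\[
\frac{s_n(y)}{s_{n-1}(y)}\ge \frac{s_n(x)}{s_{n-1}(x)}.
\]
This holds because $s_n/s_{n-1}=1+t^{n-1}/s_{n-1}(t)$ is increasing in $t$, and $x\le y$ since $\F$ is monotone.

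I expect the main obstacle to be making the induction legitimately produce the two-price form while preserving the ordering constraint. The conditional subproblem on $\bar S_1$ requires prices $\ge p_1$, and the inductive hypothesis must yield a uniform price $b\ge a$. One fallback is to run the argument from the top instead: repeatedly merge the two highest distinct price levels using the same convex-combination trick, which keeps ordering automatically. Alternatively, we can prove directly the statement ``any ascending vector is dominated by the best uniform price among $\{p_1,\dots,p_n\}$.'' Since the candidates are among the existing prices, they automatically satisfy the constraint, and the inductive hypothesis gives a dominating $b\in\{p_2,\dots,p_n\}$ with $b\ge p_1$, which closes the induction.
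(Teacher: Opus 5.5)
Your proposal is correct and matches the paper's proof essentially step for step. It uses the same strengthened induction claim (some uniform price from $\{p_1,\dots,p_n\}$ dominates), the same reduction via \cref{eq:obj_rec_1} to the two-tier vector $(p_1,\hat p\,\e_{n-1})$, and the same convex combination with $\lambda = 1/s_n(\F(p_1)) = (1-\F(p_1))/(1-\F(p_1)^n)$, closed by the monotonicity of $s_n(t)/s_{n-1}(t) = (1-t^n)/(1-t^{n-1})$. Writing everything through the polynomials $s_k$ also stays well defined when $\F(\hat p)=1$, so it removes the separate case the paper handles.
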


\cref{thm:iid} shows that, for min-buying customers and IID valuations, the natural intuition that symmetrically valued products should be priced symmetrically holds true. \cref{thm:iid} also implies that the algorithmic problem of setting prices in this setting is straightforward. Whereas, in the case of quasi-linear utility-maximizing customers, the unit-demand pricing problem is NP-hard to solve even when valuations are IID \citep{chen2018complexity}, when customers are min-buying the corresponding algorithmic problem simplifies to a one-dimensional optimization to determine the optimal uniform price.

Before we formally prove \cref{thm:iid}, to build intuition it is helpful to consider how non-uniform prices can capture revenue in symmetric settings. Specifically, a non-uniform price vector creates high- and low-price ``tiers''. The high-price tiers can increase revenue in cases where a customer both has strong preferences and those preferences happen to be for high-priced items. This benefit is offset by a cannibalization effect where the low-priced items can draw demand away from higher-priced items. For min-buying customers, this cannibalization effect is much stronger than in the quasi-linear case. A min-buying customer who finds any lower-priced item acceptable (i.e. would get non-negative utility from purchase) never purchases a higher-priced acceptable item, regardless of how much more they value the higher-priced item. As discriminatory pricing yields no advantage for min-buying customers, this implies that revenue lift for discriminatory pricing in the quasi-linear case is driven by customers who find multiple items acceptable and nevertheless opt for the more expensive of the acceptable items. We prove this formally next. 

% \subsection{Proof of \cref{thm:iid}}

\begin{proof}{Proof of \cref{thm:iid}.}
We will prove that revenue is maximized by uniform pricing by induction on the number of items, $n$. In fact, we will prove the stronger statement that, for every price vector $\bp = (p_1, \ldots, p_n)$, there exists a price $\hat p \in \{p_1, \ldots, p_n\}$ such that $\rev_{UP}(\hat p) \ge \rev_{DP}(p_1, \ldots, p_n)$. The base case of $n = 1$ follows immediately, so let $n \ge 2$ and suppose the claim holds for $n-1$ items. Fix an arbitrary price vector $\p = (p_1, \ldots, p_n)$ and suppose the items are re-ordered so that $p_1 \le \ldots \le p_n$ (by symmetry this is WLOG). 

It follows from \cref{eq:obj_rec_1} and the induction hypothesis that there exists a price $\hat p \in \{p_2, \ldots, p_n\}$ (and thus $\hat p \geq p_1$), such that:
\
\begin{align}\label{eq:iid_reduce_two_tiers}
    \rev_{DP}(\p)
    &= p_1 \bigl(1 - \F(p_1)\bigr) + \F(p_1) \rev_{DP}^{\overline S_1}(p_2, \ldots, p_n) \notag\\
    &\le p_1\bigl(1 - \F(p_1)\bigr) + \F(p_1) \rev_{UP}^{\overline S_1}(\hat p) \notag\\
    &= \rev_{DP}\bigl(p_1,\hat p \e_{n-1}\bigr),
\end{align}
i.e., that the revenue $\rev_{DP}(\p)$ is dominated by a discriminatory pricing $\tilde \p := (p_1, \hat p, \ldots, \hat p)$. 

Now we will show $\rev_{DP}\bigl(p_1,\hat p \e_{n-1}\bigr)$ is dominated by one of two possible uniform prices, either $p_1$ or $\hat p$. To that end, first note that if $\F(\hat p) = 1$, then
\
\begin{equation*}
    \rev_{DP}\bigl(p_1,\hat p\e_{n-1}\bigr) 
    = p_1\bigl(1 - \F(p_1)\bigr)
    \le \rev_{UP}(p_1),
\end{equation*}
and the result follows. Thus suppose that $\F(\hat p) < 1$ and note since $p_1 \le \hat p$, this also implies that $\F(p_1) < 1$. Define $\lambda := \frac{1 - \F(p_1)}{1 - \F(p_1)^n} $, and consider:
\
\begin{align}\label{eq:iid_convex_combination}
    \max\left\{\rev_{UP}(p_1), \rev_{UP}(\hat p) \right\}
    &\ge \lambda \rev_{UP}(p_1) + (1 - \lambda) \rev_{UP}(\hat p) \notag\\
    &= p_1 \bigl(1 - \F(p_1)\bigr) + \hat p \F(p_1) \left(1-\F(\hat p)^{n-1}\right) 
    \frac{1-\F(\hat p)^n}{1-\F(\hat p)^{n-1}} \frac{1-\F(p_1)^{n-1}}{1-\F(p_1)^n} \notag\\
    & \geq p_1 (1 - \F(p_1)) +  \F(p_1) \hat p (1 - \F(\hat p)^{n-1})  \nonumber \\
    & = \rev_{DP}(\tilde{\p}), 
\end{align}
where the first inequality follows from the fact that $0 < \lambda \leq 1$, and the second inequality follows because (i) the function $\frac{1-x^n}{1-x^{n-1}}$ is increasing for $x \in [0,1)$, and (ii) since $p_1 \leq \hat p$ we have that $\F(p_1) \le \F(\hat p)$ and thus,
\
\begin{equation*}
    \frac{1 - \F(\hat p)^n}{1 - \F(\hat p)^{n-1}} \geq \frac{1 - \F(p_1)^n}{1 - \F(p_1)^{n-1}}.
\end{equation*}
Finally, the last equality follows from \cref{eq:obj_rec_1}. Thus from \cref{eq:iid_reduce_two_tiers} and \cref{eq:iid_convex_combination}, we see that uniform pricing always dominates any discriminatory pricing. By the principle of mathematical induction, uniform pricing is revenue maximizing for all $n$. 
\hfill\Halmos\end{proof}

We note that the proof of \cref{thm:iid} requires no regularity assumptions on the valuation distribution, nor does it require a detour through mechanism design or prophet inequalities as optimality proofs often do in the quasi-linear setting. In this sense, it is a purely pricing-based proof that proceeds by exploiting the special structure generated by min-buying behavior and IID valuations. The proof of \cref{thm:iid} also serves as a useful template for more general proofs of uniform pricing optimality: proceeding by induction on the number of items, reducing to the two price case, and then showing that discriminatory pricing is upper-bounded by a convex combination of two uniform-pricing revenues. We will extend this argument in the next subsection. 

\subsection{Uniform Pricing Guarantees for Exchangeable and Affiliated Valuations}\label{sec:affiliated}
Although IID valuations are commonly studied, they are not the most general model of symmetric valuations. In the remainder of this section, we consider the broader symmetric setting in which valuations are drawn from an exchangeable distribution. Exchangeability requires that the joint distribution of the valuation vectors be invariant to any permutation of the items. Thus, while exchangeability preserves symmetry across items, it still allows for item valuations to be correlated. This correlation makes exchangeable distributions expressive enough to capture a variety of market characteristics that IID valuations cannot. For example, an exchangeable distribution can represent variation in breadth of preference by controlling how many items a customer is likely to value highly. It can also capture heterogeneity in the strength of preferences by allowing for mixtures of high- and low-valuation customer types (as we do in \cref{ex:affiliated_ql_gap}). 

Unfortunately, uniform pricing is not optimal for all exchangeable distributions, even when customers are min-buying. To illustrate, we give a simple example where discriminatory pricing outperforms uniform pricing in \cref{ex:simple_exchangeable_gap} next. 

\example[$\rev_{DP}>\rev_{UP}$ for an Exchangeable Distribution]
\label{ex:simple_exchangeable_gap}
Suppose there are two items, customers are min-buying, and their valuations are drawn according to the exchangeable three-point distribution:
\[
    (V_1, V_2)
    =
    \begin{cases}
        (1,1), & \text{with probability } \frac12,\\
        (2,0), & \text{with probability } \frac14,\\
        (0,2), & \text{with probability } \frac14.
    \end{cases}
\]
For this instance, a uniform price of 1 and uniform price of 2 both earn revenue 1 and thus $\rev_{UP} = 1$. Now consider the discriminatory pricing $\p = (1, 2)$. Customers with a valuation vector of $(1, 1)$ purchase the first item at price $1$, customers with a valuation vector of $(2,0)$ also purchase the first item at price $1$, and customers with a valuation vector $(0,2)$ purchase the second item at price $2$. Therefore,
\[
    \rev_{DP}(1,2) = \frac12 \cdot 1 + \frac14 \cdot 1 + \frac14 \cdot 2
    = \frac54.
\]
Thus, discriminatory pricing earns strictly more revenue than uniform pricing, even with only two items and an exchangeable distribution supported on three valuation vectors.
\hfill\Halmos\endexample

\cref{ex:simple_exchangeable_gap} presents a barrier to generalizing \cref{thm:iid} to all exchangeable valuation distributions. However, note that the distribution in \cref{ex:simple_exchangeable_gap} encodes a form of negative dependence among item valuations: conditional on a customer valuing one item highly, at $2$, their valuation for the other item is $0$. Although such negative dependence may be plausible in some settings, when items are valued symmetrically because they are substitutes within a common product category, it may also be natural to suppose that valuations are positively associated. That is, customers who value some items highly may also tend to value the remaining items highly. In the mechanism-design literature, this form of positive dependence is commonly modeled through an \textit{affiliation} condition \citep{milgrom1981good,milgrom1982theory,anunrojwong2022robustness}. We give a formal definition next.

\begin{definition}[Affiliation]\label{ass:affiliation}
A joint distribution over valuations $V_1, \ldots, V_n$ is \textbf{affiliated} if it has a joint density or probability mass function $f$ such that, for all valuation vectors $\textbf{v}$ and $\textbf{v}'$,
\
\begin{equation}\label{eq:ea-refined-affiliation}
    f(\textbf{v} \vee \textbf{v}') f(\textbf{v} \wedge \textbf{v}') \ge f(\textbf{v}) f(\textbf{v}'),
\end{equation}
%-
where $\textbf{v} \vee \textbf{v}'$ and $\textbf{v} \wedge \textbf{v}'$ denote the component-wise maximum and minimum of $\textbf{v}$ and $\textbf{v}'$, respectively.
\end{definition}

Unpacking \cref{eq:ea-refined-affiliation}, it encodes positive dependence by enforcing that aligned valuations where the vector is all high (the pairwise maxes) or all low (the pairwise minimums) are more likely than mixes of high and low valuations.  This is easiest to see for two items, where affiliation is equivalent to requiring that $f(x_1, x_2) f(y_1, y_2) \ge f(x_1, y_2) f(y_1, x_2)$ for all $x_1 < y_1$ and $x_2 < y_2$. 

As IID valuations are trivially affiliated, the setting of exchangeable and affiliated valuations is strictly more general than the IID setting studied in \cref{thm:iid}. Moreover, the distribution in \cref{ex:simple_exchangeable_gap} is not affiliated. Thus we may consider whether uniform pricing remains optimal when valuations are both exchangeable and affiliated. The challenge in handling this more general setting is that the positive dependence among item valuations can be quite intricate. Furthermore, for quasi-linear customers, there exist exchangeable and affiliated distributions under which discriminatory pricing earns strictly more than twice the revenue of the optimal uniform pricing; we provide a simple construction in \cref{ex:affiliated_ql_gap}. Thus, relative to the IID setting, discriminatory pricing can be even more powerful for quasi-linear customers when valuations are exchangeable and affiliated. Nevertheless, imposing affiliation in addition to exchangeability is sufficient to recover the optimality of uniform pricing for min-buying customers, as we show next.

\begin{theorem}[Exchangeable and Affiliated Valuations]\label{thm:exchang_aff}
Suppose there are $n$ items, customers are min-buying, and their valuations are drawn according to an exchangeable and affiliated joint distribution. Then a uniform price is revenue-maximizing, i.e., $\rev_{DP} = \rev_{UP}$. 
\end{theorem}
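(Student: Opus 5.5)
We follow the template of the proof of \cref{thm:iid}: induction on $n$, reduction to two price tiers, and domination of the two-tier revenue by a convex combination of two uniform-price revenues. Without independence, \cref{eq:obj_rec_1} is replaced by its conditional analogue. For prices $p_1\le\cdots\le p_n$,
\begin{equation*}
\rev_{DP}(\p)=p_1\Pr(V_1\ge p_1)+\Pr(V_1<p_1)\,\E\bigl[\text{revenue of }(p_2,\ldots,p_n)\text{ on items }2,\ldots,n \mid V_1<p_1\bigr].
\end{equation*}
Exchangeability and affiliation are both preserved under restricting to the coordinates $2,\ldots,n$ and conditioning on the lattice event $\{V_1<p_1\}$. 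Affiliation is closed under marginalization and under conditioning on sublattices; this is standard from Milgrom--Weber and Karlin--Rinott. The conditional law of $(V_2,\ldots,V_n)$ given $V_1<p_1$ is therefore again exchangeable and affiliated on $n-1$ items. We prove the strengthened claim that some $\hat p\in\{p_1,\ldots,p_n\}$ satisfies $\rev_{UP}(\hat p)\ge\rev_{DP}(\p)$ for every exchangeable affiliated law. Applying this inductively to the conditional law dominates $\rev_{DP}(\p)$ by $\rev_{DP}(p_1,\hat p\e_{n-1})$ with $\hat p\ge p_1$, exactly as in \cref{eq:iid_reduce_two_tiers}.

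It remains to handle two tiers: $\rev_{DP}(a,b\e_{n-1})\le\max\{\rev_{UP}(a),\rev_{UP}(b)\}$ for $a\le b$. Write
\begin{align*}
A&:=\Pr(\max_i V_i\ge a), & \alpha&:=\Pr(V_1\ge a),\\
B&:=\Pr(\max_i V_i\ge b), & \beta&:=\Pr\bigl(V_1<a,\ \max_{i\ge2}V_i\ge b\bigr).
\end{align*}
Then $\rev_{DP}(a,b\e_{n-1})=a\alpha+b\beta$. Choose $\lambda=\alpha/A\in(0,1]$, mirroring the IID proof, so that $\lambda\rev_{UP}(a)=a\alpha$. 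It then suffices to show $(1-\lambda)B\ge\beta$, i.e.
\begin{equation*}
\frac{\beta}{B}\le\frac{A-\alpha}{A}=\frac{\Pr(V_1<a,\ \max_{i\ge2}V_i\ge a)}{\Pr(\max_i V_i\ge a)}.
\end{equation*}
Set $M:=\max_{i\ge 2}V_i$. The left side is $\Pr(V_1<a,\ M\ge b)/\Pr(\max_i V_i\ge b)$, so the inequality says that the function
\begin{equation*}
t\mapsto \frac{\Pr(V_1<a,\ M\ge t)}{\Pr(\max_i V_i\ge t)}
\end{equation*}
is nonincreasing in $t\ge a$. Equivalently, we need $\Pr(V_1<a\mid \max_iV_i\ge t,\ M\ge t)\cdot\Pr(M\ge t\mid \max_i V_i\ge t)$ to be nonincreasing in $t$.

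In the IID case this monotonicity is the monotonicity of $(1-x^n)/(1-x^{n-1})$. Here we get it from affiliation. The event $\{M\ge t\}$ is increasing in $t$ in the lattice sense, so the conditional law of $V_1$ given $M\ge t$ is stochastically increasing in $t$. This is the FKG/Milgrom--Weber monotone conditioning property, so $\Pr(V_1<a\mid M\ge t)$ is nonincreasing. We then need
\begin{equation*}
\Pr(V_1<a,\ M\ge t)\,\Pr(\max_iV_i\ge t')\ \ge\ \Pr(V_1<a,\ M\ge t')\,\Pr(\max_iV_i\ge t)\qquad\text{for } t\le t'.
\end{equation*}
To get it, decompose $\{\max_i V_i\ge t\}=\{M\ge t\}\cup\{V_1\ge t,\ M<t\}$. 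Exchangeability lets us rewrite the second piece via the coordinate achieving the maximum. We then need the ratio $\Pr(V_1\ge t,\ M<t)/\Pr(M\ge t)$ not to grow faster than the IID-type counterpart. This reduces to comparisons of the form
\begin{equation*}
\Pr(V_1<a,\ M\ge t')\,\Pr(V_1\ge t,\ M<t)\le\Pr(V_1<a,\ M\ge t)\,\Pr(V_1\ge t',\ M<t'),
\end{equation*}
together with $\Pr(M\ge t')\Pr(V_1<a,M\ge t)\ge\Pr(M\ge t)\Pr(V_1<a,M\ge t')$. Each is a four-function (Ahlswede--Daykin/FKG) inequality applied to indicators of lattice-monotone events in the pair $(V_1,M)$, which inherits affiliation as a monotone transform of an affiliated vector.

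The main obstacle is this last step: proving the monotonicity in $t$ of the ratio above from affiliation and exchangeability alone. The IID algebra does not apply, and one must pick the right decomposition so that each comparison is a genuine FKG-type inequality with both events of matching monotonicity. I would first try to prove it for $n=2$ directly from $f(x_1,x_2)f(y_1,y_2)\ge f(x_1,y_2)f(y_1,x_2)$ by integrating over rectangles. I would then lift to general $n$ via the affiliated pair $(V_1,M)$, using exchangeability to relate $\{V_1\ge t,M<t\}$ to $\{M\ge t\}$. If that fails, a fallback is to choose $\lambda$ differently, for instance $\lambda$ matching the $b$-coefficient instead. Another fallback is to prove the two-tier inequality first for discrete affiliated laws and pass to limits.
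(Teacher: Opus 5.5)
Your induction, the closure of exchangeability and affiliation under conditioning on $\{V_1<p_1\}$, the reduction to the two-tier vector $(a,b\e_{n-1})$, and the choice $\lambda=\alpha/A$ all match the paper. The gap is the final monotonicity step, which you leave open, and the route you sketch for it does not work.

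\emph{A missed simplification.} For $t\ge a$ you have $\{V_1<a,\ \max_{i\ge2}V_i\ge t\}=\{V_1<a,\ \max_iV_i\ge t\}$. So your ratio is exactly $\Pr(V_1<a\mid\max_iV_i\ge t)$, and no decomposition is needed.

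\emph{Your first comparison is false.} The decomposition demands two inequalities separately, and the first one,
\[
\Pr(V_1<a,M\ge t')\Pr(V_1\ge t,M<t)\le\Pr(V_1<a,M\ge t)\Pr(V_1\ge t',M<t'),
\]
fails for exchangeable affiliated laws. Take $n=3$ and values $v_1<a\le t\le v_2<t'\le v_3$. Let valuations be conditionally i.i.d.\ given a type. With probability $1-h$, each $V_i$ has law $(1-\epsilon,\epsilon,0)$ on $(v_1,v_2,v_3)$; with probability $h$, it has law $(\frac{1-\epsilon}{2},\frac{\epsilon}{2},\frac12)$. The likelihood ratio $(\frac12,\frac12,\infty)$ is nondecreasing, so the mixture is exchangeable and affiliated. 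As $h\to0$, the left side divided by $h$ tends to $\frac38\epsilon(1-\epsilon)^3$, while the right side divided by $h$ tends to $\frac18\epsilon(1-\epsilon)(2-\epsilon)$. The inequality therefore fails for small $\epsilon$, even though the target itself holds. It is not an FKG consequence, because the two pairs of events are not ordered as TP2 requires.

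\emph{The pair $(V_1,M)$ cannot carry the argument.} First, $(V_1,\max_{i\ge2}V_i)$ is not affiliated merely ``as a monotone transform'': affiliation is not preserved by max-aggregation. Here it holds only through exchangeability of $V_2,\dots,V_n$ and an order-statistic argument. Second, even granted, affiliation of that pair cannot yield the target, because $\{\max_iV_i\ge t\}=\{V_1\ge t\}\cup\{M\ge t\}$ is not a rectangle. For instance, let $V_1$ and $M$ be independent and uniform on $\{x_1,x_2\}$ and $\{m_1,m_2\}$, with $x_1<a\le t\le x_2<t'\le m_2$ and $m_1<t$. Then the conditional probability rises from $1/3$ at $t$ to $1/2$ at $t'$.

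\emph{The missing idea.} Condition on the maximum of all $n$ coordinates, and use exchangeability through averaging rather than through a single pair.
\begin{enumerate}
\item The order statistics $V^{(1)}\le\cdots\le V^{(n)}$ have density $n!f(y)\mathbf 1\{y_1\le\cdots\le y_n\}$. This is affiliated because the ordered cone is a sublattice, so each pair $(V^{(k)},V^{(n)})$ is affiliated.
\item For these pairs, $\{V^{(n)}\ge t\}$ is a genuine rectangle condition. Integrating $h_k(x,z)h_k(y,w)\ge h_k(x,w)h_k(y,z)$ over $x<a\le y$ and $s\le z<t\le w$ shows that $\Pr(V^{(k)}<a\mid V^{(n)}\ge t)$ is nonincreasing in $t$.
\item Exchangeability then gives $\Pr(V_1<a\mid V^{(n)}\ge t)=\frac1n\sum_{k=1}^{n}\Pr(V^{(k)}<a\mid V^{(n)}\ge t)$, which is nonincreasing for $t\ge a$.
\end{enumerate}
Taking $t\in\{a,b\}$ then closes your two-tier step.
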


The proof of \cref{thm:exchang_aff}, while more involved, follows the same general structure as the proof of \cref{thm:iid}. First, we use the induction hypothesis, together with standard closure properties of exchangeable and affiliated distributions, to reduce from considering arbitrary discriminatory price vectors to just two-level pricings in which the first item has a low price ($p_1$) and all remaining items have a common higher price ($\hat p > p_1$). We then bound the revenue of this two-level pricing by the maximum of $\rev_{UP}(p_1)$ and $\rev_{UP}(\hat p)$. However, an additional issue arises; under the two-level pricing, either the first item is purchased at price $p_1$, or it is rejected, meaning $V_1 < p_1$, and the remaining items are considered for purchase. Unlike in the IID setting, for exchangeable distributions rejection of the first item may change the joint distribution of the remaining valuations, so we must control how this conditioning affects the probability of a future purchase. Intuitively, it is here where affiliation rescues the proof as it implies that customers whose valuation for the first item is low also tend to have lower valuations for the higher-priced items, lowering the revenue potential of discriminatory pricing. 

Formally, by following the same convex-combination argument used in the proof of \cref{thm:iid}, we show that to gain the necessary control over the revenue after conditioning in the exchangeable and affiliated case, it is enough to establish the following inequality. Letting $M := \max_i V_i$, for all $a < b$ we require that
\begin{equation}\label{eq:max_rejection_main}
    \Pr\left(V_1 < a \mid M \ge a \right) \ge \Pr\left(V_1 < a \mid M \ge b\right).
\end{equation}
In words, \cref{eq:max_rejection_main} states that conditional on the customer's maximum valuation exceeding a higher threshold, any particular valuation is less likely to remain below the low-price threshold $a$. Unfortunately, we can not prove \cref{eq:max_rejection_main} by applying the affiliation condition (i.e. \cref{eq:ea-refined-affiliation}) directly to the pair $(V_1, M)$. The maximum $M$ is an aggregation of the valuation vector which includes $V_1$, and affiliation is not generally preserved under arbitrary increasing aggregations. Positive association alone is also insufficient: it compares fixed increasing events, whereas \cref{eq:max_rejection_main} requires a monotone comparison across different conditioning thresholds.

We resolve this difficulty by passing to the full vector of order statistics. Let $V^{(k)}$ denote the $k^{th}$ order statistic and note $M = V^{(n)}$. By exchangeability, the order-statistic vector has density $g(y_1, \ldots, y_n) := n!f(y_1,\ldots,y_n) \mathbf 1\{y_1\le\cdots\le y_n\}$, and importantly, the ordered cone $\{y_1 \le \ldots \le y_n\}$ is a sub-lattice. Its indicator is therefore affiliated, and multiplying the original affiliated density by this indicator preserves affiliation \citep{karlin1981total}. Hence the order statistics are affiliated. Marginalizing the order-statistic vector then shows that each pair $(V^{(k)}, M)$, for $k < n$ is affiliated. This affiliation then implies that, for every fixed $a$ and $k < n$, the function $\phi_{k, a}(t) := \Pr(V^{(k)} < a \mid M \ge t)$ is non-increasing. Finally, using exchangeability and an averaging argument, we translate these order-statistic comparisons back to an individual coordinate to show that, conditional on the symmetric event $\{M \ge t\}$, the probability that $V_1 < a$ equals the average probability that each order statistic lies below $a$. This establishes \cref{eq:max_rejection_main} and completes the comparison between the two-level pricing and the two corresponding uniform prices.

In the next subsection, we turn to general exchangeable distributions and show that, without positive-dependence assumptions, the worst-case performance of uniform pricing deteriorates to $\Theta(\log n)$ under both min-buying and quasi-linear utility maximization.

\subsection{Uniform Pricing for General Exchangeable Distributions}\label{sec:exch}
In this final subsection, we complement our results demonstrating the optimality of uniform pricing for min-buying customers in the IID and affiliated and exchangeable settings with a negative result showing that, in the most general symmetric settings, even under min-buying, uniform pricing may earn logarithmically less than optimal discriminatory pricing. For general exchangeable valuations and quasi-linear utility-maximizing customers, \cite{elmachtoub2021power} show that uniform pricing is a $1 + \log(n)$ approximation to the revenue of optimal discriminatory pricing, and this guarantee applies to min-buying customers as well since the revenue of any pricing with quasi-linear customers upper-bounds the revenue with min-buying ones (for uniform pricing they are identical). \cite{elmachtoub2021power} also conjecture that their bound is not tight and that a constant-factor approximation may be possible. In the following theorem, we show that this conjecture is incorrect and that, in fact, for both the min-buying and quasi-linear settings, the logarithmic approximation is best possible up to constant factors.

\begin{theorem}[Exchangeable Valuations]\label{thm:exchang}
Suppose there are $n$ items and customers are min-buying. There exists an exchangeable distribution such that $\rev_{DP} \ge \frac{3}{32} \log(n) \rev_{UP}$.
\end{theorem}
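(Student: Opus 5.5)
The plan is to construct the instance explicitly. It mixes a logarithmic number of ``customer types''. Type $k$ has value $2^k$ for a uniformly random subset of items and $0$ for all other items. The type probabilities decay geometrically, so that every uniform price earns only $O(1)$. The subset sizes shrink with $k$, so that a well-chosen discriminatory price vector sells to each type at (a constant fraction of) its own value with constant probability. Exchangeability is automatic because the random subset is uniform over subsets of the given size, so the joint law is invariant under permutations of the items.

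Concretely, I would take $n = 2^L$ and use types $k = 1, \ldots, L$. Type $k$ has probability $q_k \propto 2^{-k}$, with normalizing constant at most $1$. A type-$k$ customer draws a uniform subset $S$ of size $s_k := \max\{1, 2^{L-k-1}\}$ and has $V_i = 2^k$ for $i \in S$ and $V_i = 0$ otherwise. The discriminatory price vector partitions the items into blocks $B_1, \ldots, B_L$ with $|B_k| \approx 2^{k-1}$, and every item in $B_k$ gets price $2^k$. Cheap blocks are thus small and expensive blocks large.

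For the uniform-pricing bound, note that each customer's set $S$ is nonempty. A uniform price $p \in (2^{k-1}, 2^k]$ therefore sells exactly to types $j \ge k$, so
\[
\rev_{UP}(p) \le 2^k \sum_{j \ge k} q_j \le 2^k \cdot 2^{1-k} = 2,
\]
and prices outside $(0, 2^L]$ earn nothing. Hence $\rev_{UP} = O(1)$.

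For the discriminatory bound, a type-$k$ customer pays exactly $2^k$ whenever $S$ avoids $B_1 \cup \cdots \cup B_{k-1}$ but meets $B_k$. Blocks $B_j$ with $j > k$ are priced above $2^k$ and are irrelevant. The cheaper blocks contain roughly $2^{k-1}$ items in total, a $2^{k-1-L}$ fraction. With $s_k = 2^{L-k-1}$, the expected number of hits of $S$ in the cheaper blocks is about $1/4$, and the expected number of hits in $B_k$ is also about $1/4$. A hypergeometric calculation, bounded via $(1-x)^s$ estimates and Bonferroni, then shows this event has probability at least an absolute constant $c$. So $\rev_{DP}(\p) \ge c \sum_k q_k 2^k = \Theta(L) = \Theta(\log n)$, and dividing by $\rev_{UP} \le 2$ gives the gap.

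The main obstacle is bookkeeping the constants so the final bound reads exactly $\frac{3}{32}\log(n)\rev_{UP}$. This involves rounding the block sizes to integers summing to $n$, handling the boundary types where $s_k = 1$ or $k$ is near $L$, the normalization of $q_k$, and the natural log versus $\log_2$. I would first get a clean $\Omega(\log n)$ with generous constants, then tune $s_k$, the block sizes and the type range to reach $3/32$. If the hypergeometric estimate is messy, I would instead sample $S$ with i.i.d. inclusion, which keeps exchangeability and makes the computation a product of independent misses.
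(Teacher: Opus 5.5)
Your construction works, but the logarithm arises differently than in the paper. The paper uses $n$ types with values $2r_i$, where $r_i=4^{n-i}$, probabilities $\lambda_i=3/(8r_i)$, and set sizes $k_i=\max\{1,\lfloor n/(2(n-i+1))\rfloor\}$. It gives each item its own price $r_i$. A type-$i$ customer pays $r_i$ (half its value) exactly when $i$ is the largest index in its random set, which has probability $\binom{i-1}{k_i-1}/\binom{n}{k_i}\ge 1/(4(n-i+1))$. Each of the $n$ types therefore contributes only $\Theta(1/(n-i+1))$, and the gap is the harmonic sum $\frac{3}{32}H_n$ against $\rev_{UP}\le 1$. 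You instead use $\log_2 n$ scales with doubling blocks and full-value prices, so every scale sells with constant probability and the logarithm is just the number of scales.

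The paper's route buys uniformity: it works verbatim for every $n$, needs no rounding, and $3/32$ falls out of a short product estimate. Yours buys three things: a more transparent ``constant per scale'' picture; values confined to $[2,n]$ rather than reaching $2\cdot 4^{n-1}$, so your instance also shows the elementary bucketing bound $O(\log(v_{\max}/v_{\min}))$ is tight; and a better asymptotic constant.

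What remains is the explicit constant and general $n$. Both go through, but not with the crudest estimates. Write $a=|B_1\cup\cdots\cup B_{k-1}|$, $b=|B_k|$, $s=s_k$, and let $\pi_k$ be the probability that a type-$k$ set avoids $B_1\cup\cdots\cup B_{k-1}$ and meets $B_k$. With $q_k=2^{-k}$ you have $\rev_{UP}<2$ and $\rev_{DP}\ge\sum_k\pi_k$. So the $\pi_k$ must average at least $\frac{3}{16}\ln 2\approx 0.130$; the $\log$ in the statement is natural, since the paper's own bound rests on $H_n\ge\ln n$.

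The bound ``meet $B_k$ minus meet a cheaper block'' is essentially zero here, because $b\approx a$. The mixed bound $1-sa/n-(1-(a+b)/n)^s$ gives only about $3/4-e^{-1/2}\approx 0.143$ in the bulk, and rounding for non-powers of two can erase that margin. Instead, bound the avoidance probability by $\prod_{j<s}\bigl(1-a/(n-j)\bigr)\ge\bigl(1-a/(n-s+1)\bigr)^s$. This gives roughly $\pi_k\approx e^{-1/4}-e^{-1/2}\approx 0.17$ per type, and the edge types do better: about $1/4$ at $k=1$ and at least $1/2$ at $k=L$.

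Finally, since $\rev_{DP}\ge\rev_{UP}$ always, the statement is vacuous whenever $\frac{3}{32}\ln n\le 1$, so only large $n$ matter. For those, take $L=\lfloor\log_2 n\rfloor$ and $s_k\approx n/2^{k+1}$, and let $B_L$ absorb the leftover items; this costs only lower-order terms.
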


The proof of \cref{thm:exchang} follows by explicitly constructing an exchangeable distribution over multiple customer types, where each type $i$ is associated with a valuation level, $v_i$, and a number of items, $k_i \in [n]$, that the customer values. Conditional on a type, a subset of items of the prescribed size is chosen uniformly at random, and the customer assigns the same positive valuation to every item in that subset, valuing all remaining items as zero. As the valued subset is chosen uniformly at random for each customer, the resulting joint distribution is exchangeable. The valuation levels and type probabilities are then chosen in an equal-revenue fashion so that no single uniform price can extract more than a constant amount of revenue. 

The key feature of the construction is that customers with higher valuation levels value fewer items, creating a strong form of negative dependence in the valuations. Consequently, under a discriminatory price vector with geometrically decreasing prices, a type with a high valuation has a non-negligible probability of being directed toward an appropriately high-priced item, whereas types with lower valuations are more likely to purchase at one of the lower prices. Each valuation level therefore contributes on the order of the reciprocal of the number of remaining price levels to discriminatory-pricing revenue. Summing these contributions yields a harmonic, and hence logarithmic, revenue advantage.

Combined with Theorem~3 of \citet{elmachtoub2021power}, \cref{thm:exchang} implies that, for general exchangeable valuations, the worst-case approximation factor of uniform pricing is $\Theta(\log n)$ for both min-buying and quasi-linear utility-maximizing customers. Thus overall, uniform pricing under min-buying provides much stronger guarantees than in the quasi-linear case when valuations across items are symmetric and exhibit a form of positive association (affiliation). However, when the items are symmetric but their valuations may be arbitrarily negatively correlated, uniform pricing can perform similarly poorly under the two decision rules. 

In the next section, we move beyond symmetric environments and study the performance of uniform pricing when valuations across items are independent but not necessarily identically distributed across items.
\section{Uniform Pricing vs. Differentiated Pricing}\label{sec:reg}
In this section, we turn our attention to unit-demand pricing problems in which item valuations are independent but not identically distributed. These problems model differentiated pricing, where a seller offers an array of items that need not be perfect substitutes. For instance, a seller may offer standard, professional, and premium versions of software, where customers purchase at most one option but differences in their valuations can arise from differences in features, rather than solely from idiosyncratic preferences. We refer to pricing in this setting as \textit{differentiated}, rather than discriminatory, because differences in prices may be justified by genuine differences in item quality reflected in the valuations. With independent but non-identical valuations, we no longer expect uniform pricing to be optimal. Nevertheless, a surprising line of work shows that, when customers are quasi-linear utility maximizers and the valuation distributions are regular, a uniform price can achieve a significant fraction of the optimal differentiated-pricing revenue \citep{alaei2019optimal,jin2019tight}. Here, we study the analogous comparison between uniform and differentiated pricing for min-buying customers.

\subsection{Independent Regular Valuations}\label{sec:independent-regular}
Before stating our main result, we note that, with independent but non-identical valuations, some distributional restriction is necessary for uniform pricing to provide a nontrivial revenue guarantee. To see this, consider the general case in which valuations for the $n$ items are drawn independently but not identically according to $V_i \sim F_i$, $i \in [n]$. In this setting, an optimal uniform price always earns at least a $1/n$ fraction of the revenue of the optimal differentiated pricing, since uniform pricing can replicate the largest revenue contribution obtained from any one of the $n$ items. Unfortunately, without additional assumptions, this trivial guarantee is tight even when customers are min-buying. We demonstrate this in \cref{ex:notReg} in the appendix using a construction similar to one in \citet{hartline2009simple}. As no constant-factor approximation to differentiated pricing is possible in the general independent setting, it is standard to assume that the valuation distributions are regular. Formally, for a distribution $F$, let $\rev_F(q) := q F^{-1}(1 - q)$ for $q \in (0,1)$, with endpoint values defined by one-sided limits, denote $F$'s revenue-quantile function, where $F^{-1}$ is the generalized quantile function. We call $F$ \emph{regular} if $\rev_F$ is concave. Many common distributions are regular, including the uniform and exponential distributions. Regularity is also implied by several well-known stronger conditions, including the monotone hazard rate (MHR) condition and log-concavity of the density.

For the remainder of this section, we assume that each item's valuation is drawn independently from a regular distribution. Our main result characterizes the worst-case guarantee for the revenue of uniform pricing relative to optimal differentiated pricing. To state our result precisely, we first need to define two functions. Let $A(x) := -\frac{\log(1 - x^2)}{x}$ for $x \in (0, 1)$, with $A(0) := 0$, and let $c(x)$ solve the ordinary differential equation:
\begin{equation}\label{eq:reg-refined-ode}
    c'(x) = A'(x)\left(1 - x c(x)\right),
\end{equation}
with initial condition $c(0) = 1$. Let $x^\star \approx 0.6986$ be the unique solution to $x c(x) = 1$ (we prove uniqueness in \cref{sec:proof_reg}). We will show that, when valuations are independent and regular, the optimal uniform price guarantees approximately $69.86\%$ of the optimal differentiated-pricing revenue.

\begin{theorem}[Independent and Regular Valuations]\label{thm:indep_reg}
Suppose customers are min-buying and their valuations are drawn independently from continuous regular distributions $V_i\sim F_i$, $i\in[n]$. Then the optimal uniform- and differentiated-pricing revenues satisfy
\[
    \frac{\rev_{UP}}{\rev_{DP}}\ge x^\star \approx 0.6986,
\]
and this factor is tight.
\end{theorem}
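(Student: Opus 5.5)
We normalize $\rev_{UP}=1$, so that $p\,(1-\prod_i F_i(p))\le 1$ for every $p$. Our aim is then to show $\rev_{DP}(\p)\le 1/x^\star$ for every ascending price vector $\p$. The plan is to first reduce to a worst-case family of distributions, and then solve a continuous optimization whose optimality conditions give the ODE \eqref{eq:reg-refined-ode}.

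\textbf{Step 1: reduction to a worst-case family.} Fix the ascending prices $p_1\le\cdots\le p_n$ and the sale probabilities $q_i=\cF_i(p_i)$. By \eqref{eq:rev_dp}, $\rev_{DP}(\p)$ depends only on the pairs $(p_i,q_i)$. On the other hand, the uniform-pricing constraint only gets harder to satisfy as each $F_i(p)$ becomes smaller pointwise. Since regularity means the revenue curve in quantile space is concave, the distribution with $\cF_i(p_i)=q_i$ that minimizes $F_i(p)$ pointwise, i.e. maximizes $\cF_i$, is the triangular-revenue distribution. Its revenue curve is linear from $(0,p_iq_i)$ to $(q_i,p_iq_i)$ and then decreases linearly to $(1,0)$; equivalently $\cF_i(p)=\frac{p_iq_i}{p_iq_i+p(1-q_i)}$ for $p\ge p_i$, with a point mass at $p_i$. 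Replacing each $F_i$ by this extremal distribution keeps $\rev_{DP}(\p)$ fixed and only relaxes the normalization constraint. The problem thus becomes a finite-dimensional program over the pairs $(p_i,q_i)$, subject to
\[
1-\prod_i F_i(p)\le 1/p \quad \text{for all } p .
\]
Because $F_i(p)=0$ below $p_i$ for these distributions, only prices $p\ge p_1$ matter in this constraint.

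\textbf{Step 2: continuum limit and the ODE.} Next we argue that splitting an item into many items with tiny sale probabilities (weakly) increases the achievable ratio. Taking $n\to\infty$ then gives a continuous "price-level density". We parametrize by $x=1/p$, or by a normalized price variable, and track $c(x)$ as the accumulated residual revenue potential, namely the probability of reaching price level $x$ unsold, suitably scaled. The term $-\log(1-x^2)/x$ should arise as follows. The extremal distributions contribute a factor to $\prod_i F_i(p)$ of the form $1-\frac{p_iq_i}{p_iq_i+p(1-q_i)}$, and these factors integrate to logarithms. The worst case should make the uniform-pricing constraint tight simultaneously at every price. Imposing tightness at all prices yields a linear ODE in which the surviving mass $1-xc(x)$ drives $c'$, giving exactly $c'(x)=A'(x)(1-xc(x))$ with $c(0)=1$. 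The optimum ratio is reached at the point where $xc(x)=1$, i.e. where no further mass can be added; this point is $x^\star$. Along the way we prove uniqueness of the root by showing $xc(x)$ is strictly increasing while it is below $1$.

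\textbf{Step 3: tightness.} We discretize the continuous solution into $n$ triangular-revenue items with prices on a fine grid, allocating quantiles according to $c$. We then verify that uniform revenue is at most $1+o(1)$ while differentiated revenue tends to $1/x^\star$.

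\textbf{Main obstacle.} The hardest step is Step 2: rigorously justifying that the constraint is tight at all prices at the optimum, and that the continuum relaxation upper-bounds every discrete instance. This is an infinite-dimensional variational problem with a monotone (ascending price) constraint. Following the approach of \citet{jin2019tight}, I would first try a Lagrangian/potential argument: exhibit a dual certificate built from $c(x)$ and show that each individual item's contribution $p_iq_i\prod_{j<i}F_j(p_j)$ is bounded by the increment of the potential $\Phi=x\,c(x)$ between consecutive price levels. Summing these increments over items then telescopes to at most $1/x^\star$.
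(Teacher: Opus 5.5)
Your skeleton (triangular reduction, an ODE envelope, discretized tightness) matches the paper's. However, Step~2, which carries the entire upper bound, is asserted rather than proved, and the heuristic you give for it leads to the wrong extremal problem.

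\textbf{Step~1 is reversed.} $\Tri(a_i,p_i)$ with $a_i=p_iq_i/(1-q_i)$ \emph{minimizes} demand: $\cF_i(p)=a_i/(a_i+p)$ for $p\le p_i$ and $\cF_i(p)=0$ for $p>p_i$. So it \emph{maximizes} $F_i$ pointwise, which is why it relaxes the constraint, and $F_i$ is not zero below $p_i$. Consequently, at a uniform price $p$ only items with caps at least $p$ can sell. The useful constraints are those at the caps, $\prod_{i:\v_i\ge\v_k}(1+a_ix_k)\le 1/(1-x_k)$ with $x_k=1/\v_k$.

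\textbf{Splitting into infinitesimal items does not give $A$.} Suppose all items are infinitesimal with cumulative parameter mass $S(x)$ at reciprocal caps $\le x$. Then the constraint reads $xS(x)\le-\log(1-x)$, and tightness gives the rate $B'(x)$ with $B(x)=-\log(1-x)/x$. Since $A-B=-\log(1+x)/x$ is increasing, $B'<A'$, and the resulting envelope peaks strictly below $1/x^\star$. The factor $1-x^2$ and the initial condition $c(0)=1$ come from a single \emph{unsplit} top item $\Tri(1,P)$ with $P\to\infty$. At price $1/x$ its rejection probability is $1/(1+x)$, which exceeds the $e^{-x}$ of an infinitesimal split with the same continuation revenue. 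So splitting it strictly helps uniform pricing, and your claim that splitting weakly increases the ratio is false.

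The paper needs a separate canonicalization lemma to create this terminal item. It truncates to cap-pricing, aggregates equal caps, and drops caps $\le 1$. It then compresses the items up to the first index where continuation revenue exceeds one into $\Tri(1,P)$ plus one boundary item, using the rejection bound $Q(t)\le t/(t+C)$ and $C_r\le 1+C_{r-1}$.

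\textbf{The discrete-to-continuous comparison is missing.} The right state is not a probability of reaching price level $x$ unsold. It is the continuation revenue of the expensive items, built from the highest cap downward as $C_k=(a_k+C_{k-1})/(1+a_kx_k)$, a scalar recursion precisely because of min-buying. The paper then proceeds in three steps:
\begin{enumerate}
\item[(a)] It decouples each item from the price at which the constraint is evaluated, using $\log(1+a_ix_k)\ge x_k\log(1+a_i)$. This gives the cumulative budget $\sum_{i\le k}\log(1+a_i)\le A(x_k)$.
\item[(b)] It lowers each $x_k$ to the binding value $A^{-1}(s_k)$, which can only increase $C_k$.
\item[(c)] Using the integrating factor $\exp(\int_y^x tA'(t)\,dt)$ and $A'(x)(1-x)^2\le 1$, it shows one discrete step is dominated by the ODE flow whenever the incoming value is at least one.
\end{enumerate}
Induction then gives $C_N\le c(\widehat x_N)\le 1/x^\star$.

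Your potential argument cannot work as stated. We have $\Phi=xc(x)\le x/x^\star<1/x^\star$ on $[0,1)$, so summing its increments yields a bound strictly below $1/x^\star$, which the tightness instances violate. Also, the per-item terms $p_iq_i\prod_{j<i}F_j(p_j)$ are weighted by rejections of the \emph{cheaper} items, the opposite of the direction in which the recursion closes.

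\textbf{Tightness.} Your construction must keep the unsplit terminal item. If the initial unit of continuation revenue comes from small items, the cumulative mass $1+A(x)$ exceeds the allowed $-\log(1-x)/x$, so $\rev_{UP}>1$. And since triangular-revenue distributions have atoms, you still need a smoothing step to obtain the continuous regular instances the theorem requires.
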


\cref{thm:indep_reg} is the min-buying analogue of the tight result of \citet{jin2019tight} for quasi-linear customers. Through the equivalence between min-buying UDPs and order-oblivious posted pricing, it also characterizes the tight approximation ratio between anonymous pricing and optimal order-oblivious posted-price mechanisms. Note, \cref{thm:indep_reg} provides a substantially stronger guarantee for uniform pricing under min-buying than in the quasi-linear setting, increasing the guaranteed fraction of optimal differentiated-pricing revenue from approximately $38\%$ to nearly $70\%$. Thus, although uniform pricing is not generally optimal in asymmetric valuation environments, it is a remarkably effective and often near-optimal policy for the broad class of independent regular valuations. In the remainder of this section, we prove \cref{thm:indep_reg}.

\subsection{Proof of \cref{thm:indep_reg}}\label{sec:proof_reg}
At a high level, our proof follows an approach similar to that developed by \citet{alaei2019optimal,jin2019tight,jin2020tight} for the quasi-linear case, successively narrowing the class of valuation distributions we must consider until the search for a worst-case instance reduces to a one-dimensional continuous optimization. Specifically, the proof proceeds in four steps. First, we reduce from general regular distributions to so-called triangular-revenue distributions, whose revenue-quantile functions are triangular (cf. \cref{lem:reg-refined-ratio-reduction}). This reduction preserves the relevant differentiated-pricing revenue while weakly reducing uniform-pricing revenue. Second, we further reduce to a subclass of canonical triangular-revenue distributions in which the high value items are condensed into one item, the low value items are dropped, and where it suffices to consider differentiated pricing that prices each item at its ``cap'' (cf. \cref{lem:reg-refined-canonicalization}). Third, we associate the differentiated pricing revenue over canonical instances with a continuous upper-bound which we analyze to show that $\rev_{UP}/\rev_{DP}$ is bounded below by $x^\star$ (cf. \cref{lem:reg-refined-Tri-bound}). Finally, we construct a sequence of instances for which this ratio converges to $x^\star$, establishing the tightness of \cref{thm:indep_reg} (cf. \cref{lem:lower_bound}).

\subsubsection{Reduction to Triangular-Revenue Distributions}
\label{sec:reg-refined-reduction}
To begin, we identify the regular distributions that are least favorable to uniform pricing relative to differentiated pricing. For parameters $a, \v > 0$, we say that a random variable $V$ follows a \textit{triangular-revenue distribution}, denoted by $V \sim \Tri(a,\v)$, if its cumulative distribution function (CDF) satisfies:
\begin{equation}\label{eq:reg-refined-Tri-cdf}
    \Pr(V\le t)=\frac{t}{t+a}, \qquad 0\le t < \v,
\end{equation}
with the remaining probability mass placed at $\v$, i.e., $\Pr(V = \v) = a/(a + \v)$. As $\v$ is the maximum valuation, we refer to it as the \textit{cap} of the triangular-revenue distribution. The revenue-quantile function of this distribution is triangular:
\begin{equation}\label{eq:reg-refined-Tri-revenue-curve}
    \rev_{\Tri(a,\v)}(q)=
    \begin{cases}
        \v q, & 0\le q\le \frac{a}{a+\v},\\
        a(1-q), & \frac{a}{a+\v} \le q \le 1.
    \end{cases}
\end{equation}
The two branches of the revenue-quantile function meet at $q = a/(a + \v)$, where the slope changes from $\v$ to $-a$. Thus, the function is concave, and the item's revenue is maximized at price $\v$, corresponding to quantile $q = a/(a + \v)$. In the following lemma, we show that the worst-case analysis for independent regular valuations under min-buying can be reduced to triangular-revenue distributions.

\begin{lemma}[Reduction to Triangular-Revenue Distributions]
\label{lem:reg-refined-ratio-reduction}
Consider an instance in which customers are min-buying and their valuations are drawn independently from continuous regular distributions $V_i \sim F_i$, $i \in [n]$. There exists a collection of at most $n$ independent triangular-revenue distributions such that
\[
    \widetilde{\rev}_{DP} \ge \rev_{DP}
    \qquad\text{and}\qquad
    \widetilde{\rev}_{UP} \le \rev_{UP},
\]
where $\widetilde{\rev}_{DP}$ and $\widetilde{\rev}_{UP}$ are the optimal differentiated- and uniform-pricing revenues, respectively, over the triangular-revenue distributions.
\end{lemma}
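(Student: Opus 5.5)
Fix an optimal differentiated price vector $\p^* = (p_1^*,\ldots,p_n^*)$ for the original instance, indexed so that $p_1^* \le \cdots \le p_n^*$. Discard any item with $\cF_i(p_i^*) = 0$: it never sells, and removing it cannot increase $\rev_{UP}$ (with independence, deleting items only raises $\prod_i \F_i(p)$). This is why the lemma speaks of \emph{at most} $n$ items. For each remaining item $i$, let $q_i := \cF_i(p_i^*)$ and $r_i := p_i^* q_i$. Replace $\F_i$ by the triangular-revenue distribution with cap $\v_i := p_i^*$ and peak quantile $q_i$, that is, $\Tri(a_i, \v_i)$ with
\[
    \frac{a_i}{a_i+\v_i} = q_i, \qquad\text{equivalently}\qquad a_i = \frac{r_i}{1-q_i}
\]
(if $q_i = 1$, take $a_i$ large, or a point mass at $\v_i$, via a limiting argument). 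The triangular revenue curve then passes through the point $(q_i, r_i)$ of $\rev_{F_i}$ and is linear on each side of it.

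For the differentiated side, I would price the new instance at the same vector $\p^*$. Under $\Tri(a_i,\v_i)$ we have $\Pr(\tilde V_i \ge \v_i) = a_i/(a_i+\v_i) = q_i$, which matches $\cF_i(p_i^*)$. Since \cref{eq:rev_dp} depends only on the sale probabilities $\cF_i(p_i)$ and the prices, $\widetilde{\rev}_{DP}(\p^*) = \rev_{DP}(\p^*)$, and therefore $\widetilde{\rev}_{DP} \ge \rev_{DP}$.

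For the uniform side, by \cref{eq:rev_up} and independence it suffices to show pointwise stochastic dominance $\Pr(\tilde V_i \ge t) \le \cF_i(t)$ for every $t$ and every $i$. Then $1 - \prod_i \tilde \F_i(t) \le 1 - \prod_i \F_i(t)$ for all $t$, so $\widetilde{\rev}_{UP}(t) \le \rev_{UP}(t)$ and hence $\widetilde{\rev}_{UP} \le \rev_{UP}$. The comparison splits into two ranges.
\begin{itemize}
    \item For $t > \v_i$, the left side is $0$ and the inequality is trivial.
    \item For $t \le \v_i$, the left side is $a_i/(a_i+t)$. Let $q = \cF_i(t) \ge q_i$. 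The triangular curve with this quantile lies on its decreasing branch $a_i(1-q)$. That branch is the chord from $(q_i, r_i)$ to $(1, 0)$. By concavity of $\rev_{F_i}$ and $\rev_{F_i}(1) \ge 0$, this chord lies below $\rev_{F_i}$ on $[q_i, 1]$.
\end{itemize}

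The main obstacle is converting the revenue-curve comparison on $[q_i,1]$ into the survival-function comparison at a fixed price $t$. I would do this with the standard geometric fact that $\cF(t)$ is the quantile at which the ray of slope $t$ from the origin meets the revenue curve, since $\rev_F(q) = tq$ exactly when $F^{-1}(1-q) = t$. A pointwise larger concave curve meets each such ray at a larger quantile. Concretely, on $[q_i,1]$ the function $\rev_{F_i}(q) - tq$ is concave and dominates $a_i(1-q) - tq$. The latter vanishes at $\tilde q = a_i/(a_i+t)$ and is nonnegative to its left. Hence $\rev_{F_i}(\tilde q) \ge t\tilde q$, which gives $F_i^{-1}(1-\tilde q) \ge t$, i.e. $\cF_i(t) \ge \tilde q$ (using continuity of $F_i$). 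The care needed is with generalized inverses and the endpoint cases $q_i \in \{0,1\}$, which I would handle by one-sided limits.
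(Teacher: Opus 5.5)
Your proposal follows the paper's proof essentially step for step. It uses the same choice $a_i = p_i q_i/(1-q_i)$, which preserves each acceptance probability at the optimal prices and hence the revenue of $\p^*$, and the same chord-below-a-concave-revenue-curve comparison on $[q_i,1]$, converted into $\Pr(\widetilde V_i \ge t) \le \cF_i(t)$ by evaluating at the quantile $a_i/(a_i+t)$.

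The one point to tighten is the case $q_i = 1$. A large-$a_i$ limit only gives $\widetilde{\rev}_{DP} \ge \rev_{DP} - \ep$, and a point mass is not a $\Tri(a,\v)$ with finite $a$. Instead, as the paper does, observe that every sale then occurs at a price at most $p_i$, so $\rev_{DP} \le p_i = \rev_{UP}(p_i) \le \rev_{UP}$. A single item $\Tri(2p_i,2p_i)$, whose optimal uniform and differentiated revenues both equal $p_i$, then satisfies the lemma exactly.
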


The intuition behind \cref{lem:reg-refined-ratio-reduction} follows similar reductions in the quasi-linear literature. Specifically, once an item's revenue at some designated differentiated price $p_i$ is fixed, regularity forces a minimum amount of demand for that item at every lower price $p < p_{i}$. The idea of a triangular-revenue distribution reduction is that it can replace the original valuation distribution in such a way that it preserves the item's acceptance probability at $p_i$, i.e., the probability that $V_i \ge p_i$ and thus it could be considered for purchase, while making the item as unattractive as regularity permits at any lower price. As preserving these acceptance probabilities preserves the revenue of the overall differentiated price vector, and the reduction hurts every uniform pricing, triangular-revenue distributions are therefore extremal for the ratio of uniform to differentiated pricing.

Given \cref{lem:reg-refined-ratio-reduction}, we no longer need to consider arbitrary concave revenue-quantile functions. Instead, each item can be associated with a triangular-revenue distribution and characterized by only two parameters: its cap $\v$, which is the maximum price it could be offered for, and $a$, which determines its acceptance probability at prices below the cap. The next question is then, how large can the revenue gap between differentiated and uniform pricing be for independent triangular-revenue valuations?

\subsubsection{Further Reduction to Canonical Triangular-Revenue Distributions}
\label{sec:reg-refined-canonical}
\cref{lem:reg-refined-ratio-reduction} reduces the analysis of independent regular valuations to instances consisting of independent two-parameter triangular-revenue distributions. Although these instances are simpler, optimizing the worst-case ratio over arbitrary triangular-revenue instances remains challenging. In particular, the different items may yield revenue at wildly different scales, and an optimal differentiated pricing may set some prices below their corresponding caps, leaving an additional set of price variables to optimize over. In the next step, we further reduce the class of instances we must consider to one in which the relatively high-valued items are condensed and where it suffices to analyze pricing each item at its cap.

The reduction is driven by a simple property of min-buying. Define the \emph{continuation revenue} from a collection of (relatively) expensive items as the expected revenue they generate conditional on the customer reaching them, or equivalently, conditional on no cheaper item being purchased. Specifically, suppose the continuation revenue from all items priced above $\v$ is $C$, and consider adding an independent item $V_j \sim \Tri(a,\v)$ priced at its cap, $p_j = \v$. The new item is purchased with probability $a/(a + \v)$; otherwise, the customer rejects it and proceeds to the more expensive items. Hence, the new continuation revenue is
\begin{equation}\label{eq:continuation}
    T_{a, \v}(C)
    = \frac{a}{a + \v}\v + \frac{\v}{a + \v}C
    = \frac{a + C}{1 + a/\v}.
\end{equation}
Thus, once the items are ordered by price, the entire collection of more expensive items affects the revenue calculation for each cheaper item only through the scalar continuation value $C$. Once $C$ is fixed, the size/cardinality, individual distributions, and prices of the more expensive items need not be tracked separately. This scalar-state property of min-buying allows us to compress a complicated triangular-revenue instance into what we call a \textit{canonical} triangular-revenue instance, following the terminology of \citet{alaei2019optimal}. Our reduction successively removes information that is irrelevant to the continuation recursion while ensuring that uniform pricing does not become more profitable.

Formally, a canonical instance consists of a terminal item $V_0 \sim \Tri(1, P)$ and $N$ additional independent items $V_k \sim \Tri(a_k, \v_k)$, whose caps are indexed in decreasing order, $P > \v_1 > \ldots > \v_N > 1$. We then evaluate a canonical instance under its \textit{cap-pricing}, which is the pricing that sets each item's price at its cap with $p_0 = P$ and $p_k = \v_k$ for each $k \in [N]$. Let $C_k$ denote the continuation revenue obtained after starting with the terminal item and incorporating the first $k$ additional items in decreasing-cap order. By \cref{eq:continuation},
\[
    C_0 = \frac{P}{P + 1},
    \qquad
    C_k = \frac{a_k + C_{k-1}}{1 + a_k/\v_k},
    \qquad k \in [N].
\]
The importance of this representation is that, instead of optimizing over arbitrary triangular-revenue distributions and differentiated prices, we need only consider an ordered sequence of caps $(\v_1, \ldots, \v_N)$ and associated parameters $(a_1, \ldots, a_N)$, with the cap-pricing revenue governed by the scalar recursion in \cref{eq:continuation}.

The next lemma shows that every triangular-revenue instance can be transformed into a canonical instance without increasing optimal uniform-pricing revenue and with an arbitrarily small loss, if any, relative to the original optimal differentiated-pricing revenue.

\begin{lemma}[Reduction to Canonical Triangular-Revenue Distributions]
\label{lem:reg-refined-canonicalization}
Consider an instance $\cI$ of $n$ independent triangular-revenue valuations $V_i \sim F_i$, $i \in [n]$, normalized so that $\rev_{UP}=1$. For every $\ep > 0$, there exists an integer $N \in \{0, 1, \ldots, n\}$, a value $P > 1$, and a canonical instance $\widetilde{\cI}$ consisting of independent valuations $\widetilde V_0 \sim \Tri(1,P)$ and $\widetilde V_k \sim \Tri(a_k, \v_k)$, $k \in [N]$, with caps satisfying $P > \v_1 > \ldots > \v_N > 1$, such that, under the cap-pricing $\widetilde\p = (P, \v_1, \ldots, \v_N)$,
\[
    \widetilde{\rev}_{DP}(\widetilde\p) \ge \rev_{DP} - \ep
    \qquad \text{and} \qquad
    \widetilde{\rev}_{UP} \le 1,
\]
where $\widetilde{\rev}_{DP}(\widetilde\p)$ and $\widetilde{\rev}_{UP}$ are the revenues of the differentiated pricing $\widetilde\p$ and the optimal uniform-pricing on the transformed instance $\widetilde{\cI}$, respectively.
\end{lemma}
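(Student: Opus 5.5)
Start from an optimal differentiated price vector $\p$ for $\cI$, sorted so that $p_1 \le \cdots \le p_n$. The plan has four steps.

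\emph{Step 1 (cap-pricing).} Reduce to the case where every item is priced at its cap. If item $i \sim \Tri(a_i,\v_i)$ has $p_i < \v_i$, replace it by $\Tri(a_i, p_i)$. This keeps $\Pr(V_i \ge p_i) = a_i/(a_i+p_i)$, so by \cref{eq:rev_dp} the revenue of $\p$ is unchanged. It also lowers the distribution stochastically, so $\Pr(V_i < t)$ weakly increases at every $t$. By \cref{eq:rev_up}, $\rev_{UP}(t) = t(1-\prod_i F_i(t))$ then weakly decreases, so $\rev_{UP} \le 1$ is preserved. Items priced above their caps never sell, so drop them. Dropping items also only lowers uniform-pricing revenue.

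\emph{Step 2 (ties and low items).} Items with equal prices can be perturbed by tiny amounts so that the caps become distinct. The revenue change is continuous in the caps and can be absorbed into $\ep$. For uniform pricing, perturb caps downward only, so uniform revenue does not increase; this is why the lemma allows an $\ep$ loss. Next, items with cap $\v_i \le 1$ contribute at most their cap times a probability. I would argue they can be removed without loss. In \cref{eq:continuation}, $T_{a,\v}(C) \ge C$ exactly when $\v \ge C$. The continuation revenue of the expensive items satisfies $C \ge 1$: otherwise the items above could be replaced by a uniform price achieving more than $C$. So a cheap item with $\v \le 1 \le C$ weakly lowers the revenue and can be dropped. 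The claim $C \ge 1$ needs care; if it fails, I would instead use optimality of $\p$ to say such items only get included when $\v_i \ge C$, and argue separately about the range below $1$.

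\emph{Step 3 (terminal item).} This is the main obstacle. Condense all items whose caps are large into the single terminal item $\Tri(1,P)$. The recursion \cref{eq:continuation} shows that the top group enters the revenue only through its continuation value $C_{\text{top}}$. Choose $P$ so that $P/(P+1) = C_{\text{top}}$ (this requires $C_{\text{top}} < 1$); otherwise use the largest $P$ permitted and keep the extra items as additional items. Then check that uniform revenue does not increase. The normalization $\rev_{UP} = 1$ gives, for every $t$, $\prod_i F_i(t) \ge 1 - 1/t$. A $\Tri(1,P)$ item has $F(t) = t/(t+1)$, which is exactly the extremal per-item bound for $a = 1$. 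I expect the key inequality to be the following. Pick the split index so that the items above it have $\sum a_i$ effectively controlled, using $\prod_i t/(t+a_i) \ge t/(t+\sum_i a_i)$ together with the uniform bound at large $t$, which forces $\sum a_i \le 1$ asymptotically. Then replacing them by one item with $a = 1$ gives a CDF product no larger in the relevant range, so uniform revenue stays at most $1$, while the cap $P$ is chosen to match the continuation revenue.

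\emph{Step 4 (assembly).} The remaining items, with caps in $(1,P)$, form the $N$ additional items. The cap-pricing revenue is the iterated $T$ applied to $C_0$, which matches the original revenue up to $\ep$. Finally, verify $\widetilde{\rev}_{UP} \le 1$ by combining the monotonicity facts from Steps 1–3.
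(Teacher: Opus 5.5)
There is a genuine gap in Step 3, which is the step that carries the lemma. Matching a top group with continuation value $C_{\text{top}}<1$ by a single item $\Tri(1,P)$ with $P/(P+1)=C_{\text{top}}$ preserves cap-pricing revenue, but it does not keep $\widetilde{\rev}_{UP}\le 1$. The new item is accepted with probability $1/(1+t)$ at every $t\le P$, and this can exceed the acceptance probability of the group it replaces.

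Your own inequality points the wrong way. If $\sum_i a_i\le 1$, then $\prod_i t/(t+a_i)\ge t/(t+\sum_i a_i)\ge t/(t+1)$. So the replacement has a \emph{smaller} rejection product, and a ``CDF product no larger'' means it sells \emph{more} at every uniform price. That is the opposite of what you need.

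Here is a concrete instance. Take $V_1\sim\Tri(0.9,1000)$ and $V_2\sim\Tri(b,2)$, with $b\approx 0.7586$ chosen so that $\rev_{UP}(2)=1$. One checks that $\rev_{UP}=1$, and that cap-pricing $(1000,2)$ is optimal with revenue $\approx 1.202$, so your Steps 1--2 change nothing. The top item has continuation $\approx 0.899<1$, so your rule gives $P\approx 8.92$. The new instance then has $\rev_{UP}(2)=2\bigl(1-\frac{2}{3}\cdot\frac{2}{2.7586}\bigr)\approx 1.033>1$. Rescaling afterwards costs a constant factor, not $\ep$. In general, $P/(P+1)=C_{\text{top}}$ can also force $P\le 1$, or $P$ below the remaining caps.

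The missing idea is to let the terminal item represent only continuation value essentially equal to $1$, with $P\to\infty$. The excess is carried by a separate boundary item placed where the recursion crosses $1$.
\begin{itemize}
\item Index caps in decreasing order with $C_0=0$. If the final continuation $C_N$ is at most $1$, the lone item $\Tri(1,P)$ with $P$ large already works.
\item Otherwise, let $r$ be the first index with $C_{r-1}\le 1<C_r$. Replace items $1,\dots,r$ by $\Tri(1,P)$ and $\Tri(\bar a,p_r)$, where $\bar a=p_r(C_r-1)/(p_r-C_r)$ is chosen so that $T_{\bar a,p_r}(1)=C_r$.
\item Two inequalities you do not have give demand dominance at every $t\le p_r$. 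The first is the prefix rejection bound $\prod_{i<r}t/(t+a_i)\le t/(t+C_{r-1})$, proved by induction using $C_m\le a_m+C_{m-1}$. The second is the crossing bound $C_r\le 1+C_{r-1}$, which follows from $\rev_{UP}(p_r)\le 1$ together with the prefix bound.
\item Combined, they yield $(t+1)(t+\bar a)\le(t+C_{r-1})(t+a_r)$. That is, the two new items are rejected at least as often as items $1,\dots,r$.
\item For $t>p_r$, only the terminal item can sell, giving revenue $t/(t+1)<1$.
\end{itemize}

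Your Steps 1--2 are essentially fine; downward perturbation is a valid substitute for the paper's exact aggregation of equal caps. However, the removal of low-cap items should be justified differently from the uniform-price argument you give. One option is the recursion: at the first cap $p_k\le 1$, $C_N\le\max\{p_k,C_{k-1}\}$. The other is optimality of $\p$, as in your fallback, which forces each cap to be at least the incoming continuation.
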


The proof of \cref{lem:reg-refined-canonicalization} consists of three conceptually simple transformations. First, if an item with a triangular-revenue distribution is priced below its cap, we can truncate the distribution at its designated differentiated price without changing its acceptance probability at that price, while weakly reducing its demand under uniform pricing. Second, items with a common cap can be aggregated into a single triangular-revenue item. Third, after indexing the remaining items in decreasing-cap order, we compress the first several items, those with the largest caps and hence the highest differentiated prices, into a terminal equal-revenue component and at most one boundary item. Similar aggregation and ordered-triangular reductions have been used by \citet{alaei2019optimal,jin2020tight}. Here, however, the transformations are chosen to preserve or improve the min-buying continuation revenue. The proof also explains why items with caps below one can be excluded from the worst-case analysis.

\cref{lem:reg-refined-canonicalization} preserves the two comparisons needed for the worst-case analysis: the canonical instance has uniform-pricing revenue no greater than the normalized benchmark, while its designated cap-pricing loses at most an arbitrarily small amount of revenue relative to the original differentiated-pricing optimum. It is therefore sufficient to bound the cap-pricing revenue of canonical triangular-revenue instances.

\subsubsection{Bound for Canonical Triangular-Revenue Distributions}
\label{sec:reg-refined-Tri-bound}
The reduction to canonical instances leaves a substantially simpler extremal problem. Starting from the terminal continuation value $C_0 = P/(P+1)$, which approaches one as $P$ becomes large, it remains only to characterize how large the recursion $C_k=(a_k + C_{k-1})/(1 + a_k/\v_k)$ can become while optimal uniform-pricing revenue is constrained to be at most one. To achieve this, we must account for the tradeoff between creating revenue opportunities at individual caps and increasing demand under uniform pricing. Consider an item $k$ with valuation distribution $\Tri(a_k, \v_k)$ priced at its cap $\v_k$. Increasing $a_k$ raises the probability that the item is accepted at $\v_k$, thereby increasing its standalone revenue at that price. However, the same increase also raises its acceptance probability $a_k/(a_k + p)$ at every lower uniform price $p \le \v_k$. Thus, creating a larger revenue opportunity at one price simultaneously increases demand under every lower uniform price. Thus, revenue opportunities at different prices cannot be created independently.

This coupling can be seen directly from the constraint that uniform-pricing revenue not exceed one. At the uniform price $p = \v_k$, only the first $k$ items and the terminal item can be purchased, since their caps are at least $\v_k$. Every remaining item has a cap below $\v_k$ and is therefore rejected with probability one. The terminal item $V_0 \sim \Tri(1, P)$ is rejected with probability $\v_k/(\v_k + 1)$, while item $i \in [k]$ is rejected with probability $\v_k/(\v_k + a_i)$. Hence, by independence,
\begin{equation}\label{eq:uni_const}
    \rev_{UP}(\v_k)
    = \v_k\left(1 - \frac{\v_k}{\v_k + 1} \prod_{i=1}^k \frac{\v_k}{\v_k + a_i}\right)
    \le 1.
\end{equation}
Let $x_k := 1/\v_k$ denote the reciprocal cap. Rearranging and taking logarithms of \cref{eq:uni_const} gives the equivalent constraint:
\[
    \sum_{i=1}^k \log(1 + a_i x_k) \le -\log(1 - x_k^2).
\]
Note each item's contribution on the left still depends on the reciprocal cap $x_k$ at which the constraint is evaluated. To separate the item parameters from this price dependence, we use concavity of $\log(1 + a_ix)$, which gives $\log(1 + a_ix_k) \ge x_k\log(1+a_i)$. Applying this inequality and dividing by $x_k > 0$ yields the relaxed constraint:
\[
    \sum_{i=1}^k \log(1 + a_i)
    \le -\frac{\log(1 - x_k^2)}{x_k} = A(x_k).
\]
We interpret $A(x_k)$ as the cumulative uniform-pricing budget available up to reciprocal price $x_k = 1/\v_k$, with each item contributing the price-independent quantity $\log(1 + a_i)$. For fixed caps, the parameters $a_1, \ldots, a_k$ therefore cannot be chosen independently: increasing one parameter may generate additional differentiated-pricing revenue at its own price, but it consumes part of the same budget and leaves less room for revenue opportunities at other prices. This cumulative constraint captures how regularity links revenue opportunities across prices and simplifies the subsequent analysis of the continuation recursion.

This perspective is closely related to normalized common-price programs in the anonymous-pricing literature. For instance, \citet{alaei2019optimal} normalize anonymous-pricing revenue to one and impose the corresponding constraint at every price, and \citet{jin2020tight} exploit the triangular structure to obtain tractable cumulative constraints at the relevant price boundaries. Our analysis builds on this perspective, but with a different objective: under min-buying, allocating triangular mass across price scales affects differentiated-pricing revenue through the continuation recursion $C_k = (a_k + C_{k-1})/(1 + a_k x_k)$. We must therefore bound this recursion subject to the cumulative budget constraints above.

The key step is to compare the discrete canonical problem with a continuous relaxation. We first move each reciprocal cap to the smallest value consistent with its cumulative budget, making the budget constraint bind at every step. This change can only increase continuation revenue. After upper-bounding the terminal continuation value by one, we show that the resulting relaxed recursion is bounded above by the continuous benchmark $c(x)$, defined in \cref{eq:reg-refined-ode} by
\[
    c'(x)=A'(x)\bigl(1 - x c(x)\bigr), \qquad c(0)=1.
\]
This differential equation is the infinitesimal counterpart of the discrete continuation recursion. We also prove that the benchmark attains its maximum at $x^\star$, where $c(x^\star) = \frac{1}{x^\star}$. Thus, no canonical instance with uniform-pricing revenue at most one can generate cap-pricing revenue greater than $1/x^\star$. The following lemma summarizes this bound.

\begin{lemma}[Bound for Canonical Triangular-Revenue Distributions]
\label{lem:reg-refined-Tri-bound}
Consider a canonical triangular-revenue instance with independent valuations $V_0 \sim \Tri(1, P)$ and $V_k \sim \Tri(a_k, \v_k)$ for $k \in [N]$, where $a_k > 0$ for each $k \in [N]$ and $P > \v_1 > \ldots > \v_N > 1$. If the optimal uniform-pricing revenue satisfies $\rev_{UP} \le 1$, then the revenue from pricing each item at its cap satisfies
\[
    \rev_{DP}(P, \v_1, \ldots, \v_N) \le \frac{1}{x^\star}.
\]
\end{lemma}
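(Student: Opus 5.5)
We follow the outline given just before the statement, working with the reciprocal caps $x_k := 1/\v_k \in (0,1)$, which increase in $k$: $x_1 < \cdots < x_N < 1$. We also use the budget variables $s_k := \sum_{i=1}^k \log(1+a_i)$, with $s_0 := 0$.

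\emph{Step 1 (constraints).} By \cref{eq:uni_const} and the concavity relaxation, $\rev_{UP}\le 1$ implies $s_k \le A(x_k)$ for every $k$. The cap-pricing revenue equals $C_N$, computed by the recursion $C_0 = P/(P+1) < 1$ and $C_k = (a_k + C_{k-1})/(1 + a_k x_k)$.

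\emph{Step 2 (monotonicity).} The map $T(C) = (a + C)/(1 + a x)$ is increasing in $C$. Hence we may replace $C_0$ by $1$. Each $T$ is also decreasing in $x$, so we should move each $x_k$ as far left as the budget allows. Since $A$ is increasing on $(0,1)$, with $A(x)=x+O(x^3)$, the smallest admissible value is $x_k = A^{-1}(s_k)$. This is the smallest $x$ with $A(x)\ge s_k$, and the constraint then binds.

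We must check that this preserves the ordering $x_{k-1} \le x_k$. It does, because $s_k$ is increasing in $k$. It also does not increase any $x_k$, because the original $x_k$ satisfied $A(x_k)\ge s_k$. By induction, using monotonicity of $T$ in both arguments, the new recursion dominates the old one at every step. We thus obtain a relaxed recursion driven only by $(a_k)$:
\[
C_k = \frac{a_k + C_{k-1}}{1 + a_k x_k}, \qquad A(x_k) = \sum_{i\le k}\log(1+a_i), \qquad C_0 = 1.
\]

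\emph{Step 3 (comparison with the ODE).} We show by induction that $C_k \le c(x_k)$, with $c(0)=1=C_0$ at $x_0=0$. Rewrite the step as $C_k - C_{k-1} = a_k(1 - x_k C_{k-1})/(1+a_k x_k)$. Note that $1 - x_kC_{k-1} = (1-x_kC_k)(1+a_kx_k)-a_kx_k(1-x_kC_k)$; more cleanly,
\[
1 - x_kC_k = \frac{1 - x_k C_{k-1}}{1 + a_k x_k}.
\]
Meanwhile $A(x_k) - A(x_{k-1}) = \log(1+a_k)$. The natural comparison works with $u := 1 - xC$ and the ODE quantity $w(x) := 1 - x c(x)$. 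A short computation gives $w' = -c - xA'(x)w$ along the ODE.

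The intended inequality is that one discrete step, which consumes budget $\log(1+a_k)$, increases $C$ by at most what the ODE achieves over the interval $[x_{k-1},x_k]$, starting from a no-larger value. Write $\Phi_x(C)$ for the ODE flow. Since the ODE right-hand side $A'(x)(1-xc)$ is decreasing in $c$, flows preserve order, so it suffices to show that one discrete step from $C_{k-1}$ lands below $\Phi$ applied to $C_{k-1}$ over $[x_{k-1},x_k]$.

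I expect this single-step comparison to be the main obstacle. It is essentially the statement that the ODE is the supremum over refinements: splitting $a_k$ into many small items with the same total budget can only help. I would prove it by fixing the endpoint and showing that the discrete map with one large step is dominated by two half-budget steps, or by a direct convexity estimate. Concretely, use $a_k = e^{\Delta}-1$ with $\Delta = A(x_k)-A(x_{k-1})$, together with the ODE's integrating-factor solution. For the linear ODE $c' = A'(1 - xc)$, the solution is explicit, $c(x) = e^{-\int xA'}\bigl(1+\int_0^x A' e^{\int tA'}\bigr)$, which should allow the one-step inequality to be verified by comparing integrands using $x_{k-1}\le t\le x_k$. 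One subtlety is the case $C_{k-1} x_k > 1$, where a step decreases $C$. That case only helps the bound.

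\emph{Step 4 (maximizing $c$).} We then get $\rev_{DP} = C_N \le \sup_{x\in(0,1)} c(x)$. Since $c' = A'(1-xc)$ with $A'>0$, $c$ increases while $xc<1$ and decreases after. To show that $\sup c = c(x^\star) = 1/x^\star$, we need that $xc(x)$ crosses $1$ exactly once. This holds because $(xc)' = c + xA'(1-xc) = c > 0$ at any crossing, and $xc\to$ a value exceeding $1$ as $x\to1$, since $A\to\infty$. This gives uniqueness of $x^\star$ and the bound $1/x^\star$.
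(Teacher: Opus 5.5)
Your outline follows the paper's proof closely: the relaxed budgets $s_k\le A(x_k)$, moving each reciprocal cap to $A^{-1}(s_k)$ and the terminal value to $1$, the induction $C_k\le c(x_k)$, and the single crossing of $xc(x)=1$. The gap is the one step that carries the lemma: that a single budget-binding discrete step is dominated by the ODE flow over $[x_{k-1},x_k]$. You leave it unproved, and the principle you offer for it ("splitting can only help") is false in general.

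To see this, fix a reciprocal cap $x$. Replace one item with $a=(1+b)^2-1$ by two items with parameter $b$; the budget is the same. The continuation value changes by
\[
\frac{2b+b^2+C}{1+(2b+b^2)x}-\frac{2b+b^2x+C}{(1+bx)^2}
=\frac{b^2(1-x)(1-xC)}{\bigl(1+(2b+b^2)x\bigr)(1+bx)^2}.
\]
So splitting strictly \emph{hurts} whenever $xC<1$, which is exactly the phase where $C$ grows. Refinement wins only because the intermediate item sits at a smaller reciprocal cap, and whether that gain beats the splitting loss depends quantitatively on $A'$ and on the incoming value.

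Write $\widehat c$ for the flow with $\widehat c(y)=z$, and $d(x)=(a(x)+z)/(1+xa(x))$ with $a(x)=e^{A(x)-A(y)}-1$ for the single step. For $z=0$, a second-order expansion gives $d(y+\epsilon)-\widehat c(y+\epsilon)=\frac12A'(y)^2(1-y)\epsilon^2+O(\epsilon^3)>0$. So the comparison genuinely fails for small incoming values. A valid proof must use a lower bound on $z$ and a growth bound on $A'$. Neither half-budget splitting nor "comparing integrands using $x_{k-1}\le t\le x_k$" supplies these.

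The paper closes the gap as follows. Set $\mu(x)=\exp\bigl(\int_y^x tA'(t)\,dt\bigr)$ and $h=\mu(\widehat c-d)$. A direct computation gives
\[
h'(x)=\mu(x)\,\frac{a(x)}{(1+xa(x))^2}\Bigl(a(x)+z-A'(x)(1-x)(1-xz)\Bigr).
\]
The bracket is nonnegative for $z\ge1$ for two reasons:
\begin{itemize}
\item $\log(1-x^2)\le -x^2$ gives $A'(x)\le (1+x^2)/(1-x^2)$, hence $A'(x)(1-x)^2\le 1-x\le1$;
\item $z\ge1$ gives $1-xz\le 1-x$, so $A'(x)(1-x)(1-xz)\le 1\le a+z$.
\end{itemize}
This holds whatever the sign of $1-xz$, so no separate treatment of the case $C_{k-1}x_k>1$ is needed. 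Your "only helps" remark would in any case need the observation that once $xC>1$ every later step decreases, because $c$ itself may be decreasing there.

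You must also secure $z\ge1$ at every step. One way: note $C_{k-1}\ge1$ in the relaxed recursion, since each step is a convex combination of the incoming value and $1/x_k>1$. The other way, as in the paper: use monotonicity of the step in the incoming value and compare at $z=c(x_{k-1})$, which requires $c\ge1$ on $[0,1)$.

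A smaller correction to Step 4: $xc(x)$ does not tend to a value above $1$. The integrating-factor formula and L'H\^opital give $c(x)\to1$ as $x\to1$. Obtain the crossing as the paper does. If $xc<1$ on $(0,1)$, then $c$ is strictly increasing from $c(0)=1$, so $c(x)\ge c(x_1)>1$ for $x\ge x_1$. This eventually contradicts $c(x)<1/x\to1$.
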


\cref{lem:reg-refined-Tri-bound} completes the upper-bound argument in \cref{thm:indep_reg}. Starting with $n$ independent and regular valuation distributions, first normalize the instance so that $\rev_{UP}=1$. By \cref{lem:reg-refined-ratio-reduction}, we may then replace it by a triangular-revenue instance without decreasing differentiated-pricing revenue or increasing uniform-pricing revenue. If the transformed instance has uniform-pricing revenue below one, rescale its valuations and prices to make this revenue again equal to one; this can only increase its differentiated-pricing revenue. Next, apply \cref{lem:reg-refined-canonicalization} to produce a canonical instance whose cap-pricing loses at most an arbitrarily small amount of revenue. Finally, by \cref{lem:reg-refined-Tri-bound}, for every $\ep>0$,
\[
    \rev_{DP}-\ep
    \le\widetilde{\rev}_{DP}(\widetilde\p)
    \le\frac{1}{x^\star}.
\]
Taking $\ep \to 0$ and undoing the normalization gives $\rev_{UP}\ge x^\star\rev_{DP}$. All that now remains is to show that the bound is tight.

\subsubsection{Tightness of \cref{thm:indep_reg}}
\label{sec:reg-refined-tightness}
We now show that the factor $x^\star$ is tight by constructing a sequence of instances that reverse engineers the relaxation used in \cref{lem:reg-refined-Tri-bound}. As in the lower-bound constructions of \citet{alaei2019optimal,jin2020tight}, our construction will approximate the continuous extremal solution by increasingly fine, finite triangular-revenue instances.

To begin, fix an endpoint $X \in (0, 1)$. We divide the available budget $[0, A(X)]$ into many small pieces and associate one triangular-revenue item with each piece. Each item's cap is chosen so that the corresponding uniform-pricing constraint from \cref{eq:uni_const} binds. As the number of items increases and the terminal cap grows, the continuation recursion approaches the continuous envelope. Choosing $X = x^\star$ therefore yields cap-pricing revenue arbitrarily close to $1/x^\star$. The following lemma formalizes the resulting lower bound.

\begin{lemma}\label{lem:lower_bound}
For every $\ep > 0$, there exists a canonical triangular-revenue instance $\cI$ such that $\rev_{UP} = 1$ and $\rev_{DP} > \frac{1}{x^\star} - \ep$.
\end{lemma}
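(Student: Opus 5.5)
We construct a canonical instance that discretizes the continuous benchmark $c$ on $[0,x^\star]$, and then check two things: that every uniform price earns at most one, and that the cap-pricing recursion tracks $c$. Fix a large integer $m$ and a small step, and choose a terminal cap $P$ large so that $C_0 = P/(P+1)$ is within $\ep/3$ of one. Divide the budget interval $[0, A(x^\star)]$ into $m$ pieces $0 = \beta_0 < \beta_1 < \cdots < \beta_m = A(x^\star)$. We create items $k = 1,\dots,m$ with parameters $a_k$ defined by $\log(1+a_k) = \beta_k - \beta_{k-1}$. Reciprocal caps are $x_k$, increasing in $k$ (so caps decrease), and we fix them by requiring the exact constraint \cref{eq:uni_const} to bind: $\frac{1}{1-x_k^2}\cdot\frac{1}{1+ \cdot}$ form, i.e.
\[
    (1+x_k)\Bigl(1 - \tfrac{x_k}{x_k + 1}\cdot x_k\Bigr)^{-1}\ \text{replaced by}\ \ \frac{\v_k}{\v_k+1}\prod_{i\le k}\frac{\v_k}{\v_k+a_i} = 1 - x_k.
\]
Here the terminal item's rejection probability $\v_k/(\v_k+1)$ would strictly be with $P$ instead of cap $1$. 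Since $P$ is huge, the terminal item behaves like an equal-revenue item $\Tri(1,\infty)$ at all prices $\v_k$, so the binding condition becomes $\sum_{i\le k}\log(1+a_i x_k) = -\log(1-x_k^2)$, up to $o(1)$ as $P \to \infty$. As $m\to\infty$ each $a_i \to 0$, so $\log(1+a_i x)\approx a_i x \approx x\log(1+a_i)$. The concavity loss in the relaxation of \cref{lem:reg-refined-Tri-bound} therefore vanishes, and $x_k$ is essentially the point with $A(x_k) = \beta_k$. We must also choose $x_m$ slightly below $x^\star$ so that all $x_k<1$ and the caps exceed one. We can then rescale so that $\rev_{UP}=1$ exactly, as a final normalization.

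To verify $\rev_{UP}\le 1$ at \emph{all} prices, not just at caps, consider a price $p \in (\v_{k+1}, \v_k]$. The set of items accepting with positive probability is fixed: the terminal item and items $1,\dots,k$. The revenue is then $p\bigl(1-\frac{p}{p+1}\prod_{i\le k}\frac{p}{p+a_i}\bigr)$, and we need this to be at most one on the interval. In reciprocal form, we show that $h_k(x) := \sum_{i\le k}\log(1+a_ix) + \log(1-x^2)$ with $x = 1/p$ satisfies $h_k(x)\le -\log(1-x^2)+\log(1-x^2) $-type bound. Equivalently, the revenue function of $p$ on the interval is maximized at an endpoint or has its only interior critical point below one. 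We expect to handle this via concavity of $\log(1+a_i x)$ versus $-\log(1-x^2)$, which is convex. For $x\ge x_k$ the binding at $x_k$ plus convexity gives the constraint on $[x_k, x_{k+1})$, since the left side is concave, equals the right side at $x_k$, and has smaller slope there for fine discretization. For prices above $P$ the revenue is zero, and for prices below $\v_m$ (i.e., $x > x_m$) we take all items together and use the same convexity argument on $[x_m,1)$, together with revenue at most $p \le 1/x_m$ times a probability. If this endpoint argument fails, we can instead add a tiny margin by scaling all $a_k$ by $(1-\delta)$.

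For the differentiated side, cap-pricing revenue is $C_m$, where $C_k = (a_k + C_{k-1})/(1+a_kx_k)$. Writing $\Delta\beta = \log(1+a_k)\approx a_k$, we get $C_k - C_{k-1} \approx a_k(1 - x_k C_{k-1}) \approx A'(x)\,dx\,(1-x c)$. This is an Euler scheme for \cref{eq:reg-refined-ode} in the variable $x$ with initial value $C_0\approx 1 = c(0)$. Standard Euler-convergence (Lipschitz right-hand side on $[0,x^\star]$, with $A'$ bounded there since $x^\star<1$) gives $C_m \to c(x^\star) = 1/x^\star$ as $m\to\infty$ and $P\to\infty$. Choosing $m,P$ large then yields $\rev_{DP}\ge C_m > 1/x^\star - \ep$.

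The main obstacle is the second step: ensuring the uniform constraint holds between caps and beyond $\v_m$, rather than only at the caps where we forced binding. There is also a technical point near $x=0$, where $A'(0)=1$ is fine but the first few items have huge caps. That region needs care so that the terminal item's finite $P$ is compatible with $\v_1 < P$.
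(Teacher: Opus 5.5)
Your construction is essentially the paper's: one item per small piece of the budget $[0,A(x^\star)]$, each cap chosen so that the exact uniform-price constraint binds there, and the cap-pricing recursion read as a discretization of \cref{eq:reg-refined-ode}. The paper takes $a_k=h_m=A(\tilde x)/m$, which agrees with your uniform-partition choice $a_k=e^{h_m}-1$ up to order $h_m^2$.

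Your concave-versus-convex check of $\rev_{UP}\le 1$ between caps is a valid alternative to the paper's sign analysis of $G_k'$. However, the slope comparison at $x_k$ has nothing to do with fine discretization. The function $\psi_k(x):=\sum_{i\le k}\log(1+a_ix)+\log(1-x^2)$ is strictly concave and vanishes at both $x=0$ and $x=x_k$. Hence $\psi_k\le 0$ on $[x_k,1)$ for any $a_i>0$. Combined with $\psi_{k+1}>\psi_k$ on $(0,1)$, this also gives the existence, uniqueness and ordering of the $x_k$, which you assumed.

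Two smaller corrections:
\begin{itemize}
\item The binding condition carries no $o(1)$ error in $P$. For any $P>\v_1$, the terminal item is rejected at price $\v_k$ with probability exactly $\v_k/(\v_k+1)$. So $\rev_{UP}(\v_k)=1$ exactly and no rescaling is needed. Rescaling would in fact make the terminal item $\Tri(\alpha,\alpha P)$ and take you out of the canonical class.
\item The worry near $x=0$ goes away if you fix $m$, hence $\v_1$, before choosing $P>\v_1$. The finite-$P$ loss is then at most $1/(P+1)$, because differences in the initial value are damped by $1/(1+a_kx_k)$ at each step.
\end{itemize}

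The step you only assert is the convergence $C_m\to c(x^\star)$, and this is most of the paper's proof. In the variable $x$, a standard Euler argument needs per-step consistency, $a_k=A'(x_k)(x_k-x_{k-1})+o(h)$. That requires controlling the increments $A(x_k)-A(x_{k-1})$ against $\log(1+a_k)$. Your remark that $x_k$ essentially solves $A(x_k)=\beta_k$ only gives the cumulative bound $|A(x_k)-\beta_k|=O(h)$, which does not control increments on its own. The increment estimate is true, via a second-order expansion of $\frac{1}{x}\sum_{i\le k}\log(1+a_ix)$, but it has to be proved.

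The paper avoids this by changing variables to $s=A(x)$:
\begin{itemize}
\item The ODE becomes $y'(s)=1-A^{-1}(s)y(s)$, and the recursion $C_k-C_{k-1}=h_m(1-x_kC_k)$ is an implicit Euler step of size $h_m$.
\item Multiplying by $D_k=\prod_{j\le k}(1+h_mx_j)$ telescopes the recursion to $C_m=(1+h_m\sum_{k<m}D_k)/D_m$.
\item The cumulative estimate $0\le kh_m-A(x_k)\le\frac{1}{2}A(\tilde x)\tilde x\,h_m$ then suffices to pass to Riemann integrals.
\item The limit matches the closed form $e^{-\Phi}\bigl(1+\int e^{\Phi}\bigr)$ with $\Phi(s)=\int_0^s A^{-1}(u)\,du$.
\end{itemize}
Adopting that change of variables, or supplying the increment estimate, turns your sketch into a complete proof.
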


\cref{lem:lower_bound} shows that the continuous-envelope bound is asymptotically attainable. The uniform-pricing constraints can be made to bind on an increasingly dense collection of prices, while the corresponding continuation recursion converges to the envelope. Thus, the constant $1/x^\star$ describes the worst-case revenue gap.

One final technical issue remains. Canonical triangular-revenue distributions contain atoms at their caps, whereas \cref{thm:indep_reg} is stated for continuous regular valuations. We address this issue in \cref{lem:continuous_approx} in the appendix by smoothing each triangular-revenue instance into an independent continuous regular instance. The smoothing preserves the revenue of the designated cap-pricing while changing optimal uniform-pricing revenue by an arbitrarily small amount. Applying \cref{lem:continuous_approx} to the instances in \cref{lem:lower_bound} shows that, for every $\ep > 0$, there exists an independent continuous regular instance satisfying $\frac{\rev_{DP}}{\rev_{UP}}>\frac{1}{x^\star}-\ep$. We defer the statement and proof of the smoothing result to \cref{sec:omitted_pfs_2}. Together with the upper bound, this establishes \cref{thm:indep_reg}.
\section{Conclusion}\label{sec:conclusions}
In this paper, we examined unit-demand pricing problems where customers are min-buying and established optimality and approximation results for uniform pricing policies. Our findings highlight the differences between pricing for min-buying customers and pricing for quasi-linear utility maximizers. Notably, when valuations are IID or, more generally, exchangeable and affiliated, we show that discriminatory pricing provides no additional benefit in min-buying models. Further, when valuations are independent and regular but not necessarily
identically distributed, we demonstrate that the advantages of differentiated pricing are substantially reduced for min-buying customers relative to the quasi-linear case. 

Our work thus highlights the role of customer choice assumptions in determining the revenue benefits of discriminatory and differentiated pricing. When valuations are IID for instance, our result implies that any additional revenue generated by discriminatory pricing over uniform pricing is driven by how customers choose among acceptable items. Quasi-linear utility-maximization is a standard and intuitive way to model this choice; however, it is only one of many choice rules. In settings where customers are boundedly rational or risk-averse, min-buying may also provide a useful model of purchasing behavior. Under this alternative choice rule, our results support using a well-chosen uniform price, establishing its optimality or providing approximation guarantees under a range of distributional assumptions.

Our work also provides tools for further research on pricing for min-buying customers. One direction is to extend our analysis to richer feasibility environments, including multi-unit and combinatorial settings, and to characterize when uniform or other simple pricing policies are optimal or provide strong guarantees for suitably risk-averse customers. The equivalence between min-buying UDPs and order-oblivious posted-price mechanisms also invites mechanism-design questions about the value of price discrimination and control over the order of offers, as well as the revenue cost of requiring anonymous prices. Finally, future work could study behavioral models that interpolate between min-buying and quasi-linear utility maximization, such as customers who choose the cheapest acceptable item with some probability and the utility-maximizing item otherwise. Related extensions could allow customers to compare only a limited subset of products. Such models could clarify how the revenue benefits of non-uniform pricing depend on the precise customer choice rule.

% For instance, we leave open the question of whether or not the result in \cref{thm:mix_iid} can be strengthen to show optimal for uniform pricing when valuations are exchangeable and affiliated as studied in \cref{sec:mixtures}. We conjecture that uniform pricing is indeed optimal in this setting (and have no counter-examples even after extensive search), however making further progress in this direction will require new ideas. Second, we leave open the problem pinning down the exact approximation factor for uniform pricing when customers are min-buying and valuations are drawn independently but non-identically from MHR distributions, as studied in \cref{thm:mhr}. We suspect that our current analysis is not tight, but advancing beyond it will similarly also likely require new techniques. 

\printendnotes

% \newpage
%-----------------------------------
% \ACKNOWLEDGMENT{}
%-----------------------------------

% \newpage
\bibliographystyle{ormsv080}
\bibliography{bibliography.bib}

@String{Computing = "Computing" }

@String{Springer = "Springer-Verlag" }

@inproceedings{guruswami2005profit,
  title={On profit-maximizing envy-free pricing.},
  author={Guruswami, Venkatesan and Hartline, Jason D and Karlin, Anna R and Kempe, David and Kenyon, Claire and McSherry, Frank},
  booktitle={SODA},
  volume={5},
  pages={1164--1173},
  year={2005}
}

@inproceedings{chawla2007algorithmic,
  title={Algorithmic pricing via virtual valuations},
  author={Chawla, Shuchi and Hartline, Jason D and Kleinberg, Robert},
  booktitle={Proceedings of the 8th ACM Conference on Electronic Commerce},
  pages={243--251},
  year={2007}
}

@inproceedings{chawla2010multi,
  title={Multi-parameter mechanism design and sequential posted pricing},
  author={Chawla, Shuchi and Hartline, Jason D and Malec, David L and Sivan, Balasubramanian},
  booktitle={Proceedings of the forty-second ACM symposium on Theory of computing},
  pages={311--320},
  year={2010}
}

@inproceedings{balcan2008item,
  title={Item pricing for revenue maximization},
  author={Balcan, Maria-Florina and Blum, Avrim and Mansour, Yishay},
  booktitle={Proceedings of the 9th ACM Conference on Electronic Commerce},
  pages={50--59},
  year={2008}
}

@article{elmachtoub2021power,
  title={The power of opaque products in pricing},
  author={Elmachtoub, Adam N and Hamilton, Michael L},
  journal={Management Science},
  volume={67},
  number={8},
  pages={4686--4702},
  year={2021},
  publisher={INFORMS}
}

@article{gallego2021bounds,
  title={Bounds and heuristics for multi-product personalized pricing},
  author={Gallego, Guillermo and Berbeglia, Gerardo},
  journal={arXiv preprint arXiv:2102.03038},
  year={2021}
}

@article{cai2015extreme,
  title={Extreme value theorems for optimal multidimensional pricing},
  author={Cai, Yang and Daskalakis, Constantinos},
  journal={Games and Economic Behavior},
  volume={92},
  pages={266--305},
  year={2015},
  publisher={Elsevier}
}

@inproceedings{chen2014complexity,
  title={The complexity of optimal multidimensional pricing},
  author={Chen, Xi and Diakonikolas, Ilias and Paparas, Dimitris and Sun, Xiaorui and Yannakakis, Mihalis},
  booktitle={Proceedings of the twenty-fifth annual ACM-SIAM symposium on Discrete algorithms},
  pages={1319--1328},
  year={2014},
  organization={SIAM}
}

@article{chen2018complexity,
  title={The complexity of optimal multidimensional pricing for a unit-demand buyer},
  author={Chen, Xi and Diakonikolas, Ilias and Paparas, Dimitris and Sun, Xiaorui and Yannakakis, Mihalis},
  journal={Games and Economic Behavior},
  volume={110},
  pages={139--164},
  year={2018},
  publisher={Elsevier}
}

@article{lucier2017economic,
  title={An economic view of prophet inequalities},
  author={Lucier, Brendan},
  journal={ACM SIGecom Exchanges},
  volume={16},
  number={1},
  pages={24--47},
  year={2017},
  publisher={ACM New York, NY, USA}
}

@article{rusmevichientong2006nonparametric,
  title={A nonparametric approach to multiproduct pricing},
  author={Rusmevichientong, Paat and Van Roy, Benjamin and Glynn, Peter W},
  journal={Operations Research},
  volume={54},
  number={1},
  pages={82--98},
  year={2006},
  publisher={INFORMS}
}

@article{correa2019pricing,
  title={From pricing to prophets, and back!},
  author={Correa, Jos{\'e} and Foncea, Patricio and Pizarro, Dana and Verdugo, Victor},
  journal={Operations Research Letters},
  volume={47},
  number={1},
  pages={25--29},
  year={2019},
  publisher={Elsevier}
}

@article{alaei2019optimal,
  title={Optimal auctions vs. anonymous pricing},
  author={Alaei, Saeed and Hartline, Jason and Niazadeh, Rad and Pountourakis, Emmanouil and Yuan, Yang},
  journal={Games and Economic Behavior},
  volume={118},
  pages={494--510},
  year={2019},
  publisher={Elsevier}
}

@article{jin2020tight,
  title={Tight revenue gaps among simple mechanisms},
  author={Jin, Yaonan and Lu, Pinyan and Tang, Zhihao Gavin and Xiao, Tao},
  journal={SIAM Journal on Computing},
  volume={49},
  number={5},
  pages={927--958},
  year={2020},
  publisher={SIAM}
}

@article{dutting2016revenue,
  title={Revenue gaps for discriminatory and anonymous sequential posted pricing},
  author={D{\"u}tting, Paul and Fischer, Felix A and Klimm, Max},
  journal={CoRR, abs/1607.07105},
  year={2016}
}

@inproceedings{jin2019tight,
  title={Tight approximation ratio of anonymous pricing},
  author={Jin, Yaonan and Lu, Pinyan and Qi, Qi and Tang, Zhihao Gavin and Xiao, Tao},
  booktitle={Proceedings of the 51st Annual ACM SIGACT Symposium on Theory of Computing},
  pages={674--685},
  year={2019}
}

@inproceedings{aggarwal2004algorithms,
  title={Algorithms for multi-product pricing},
  author={Aggarwal, Gagan and Feder, Tom{\'a}s and Motwani, Rajeev and Zhu, An},
  booktitle={Automata, Languages and Programming: 31st International Colloquium, ICALP 2004, Turku, Finland, July 12-16, 2004. Proceedings 31},
  pages={72--83},
  year={2004},
  organization={Springer}
}

@article{thanassoulis2004haggling,
  title={Haggling over substitutes},
  author={Thanassoulis, John},
  journal={Journal of Economic theory},
  volume={117},
  number={2},
  pages={217--245},
  year={2004},
  publisher={Elsevier}
}

@article{berbeglia2020assortment,
  title={Assortment optimisation under a general discrete choice model: A tight analysis of revenue-ordered assortments},
  author={Berbeglia, Gerardo and Joret, Gwena{\"e}l},
  journal={Algorithmica},
  volume={82},
  number={4},
  pages={681--720},
  year={2020},
  publisher={Springer}
}

@inproceedings{hartline2009simple,
  title={Simple versus optimal mechanisms},
  author={Hartline, Jason D and Roughgarden, Tim},
  booktitle={Proceedings of the 10th ACM conference on Electronic commerce},
  pages={225--234},
  year={2009}
}

@article{kahneman2006anomalies,
  title={Anomalies: Utility maximization and experienced utility},
  author={Kahneman, Daniel and Thaler, Richard H},
  journal={Journal of Economic Perspectives},
  volume={20},
  number={1},
  pages={221--234},
  year={2006},
  publisher={American Economic Association}
}

@inproceedings{anunrojwong2022robustness,
  title={On the robustness of second-price auctions in prior-independent mechanism design},
  author={Anunrojwong, Jerry and Balseiro, Santiago and Besbes, Omar},
  booktitle={Proceedings of the 23rd ACM Conference on Economics and Computation},
  pages={151--152},
  year={2022}
}

@article{bhardwaj2025algorithms,
  title={Algorithms for min-buying in networks},
  author={Bhardwaj, Aaditya and Black, Ben and Dokka, Trivikram and Kirkbride, Christopher},
  journal={arXiv preprint arXiv:2502.14459},
  year={2025}
}

@article{aouad2018approximability,
  title={The approximability of assortment optimization under ranking preferences},
  author={Aouad, Ali and Farias, Vivek and Levi, Retsef and Segev, Danny},
  journal={Operations Research},
  volume={66},
  number={6},
  pages={1661--1669},
  year={2018},
  publisher={INFORMS}
}

@article{simon1955behavioral,
  title={A behavioral model of rational choice},
  author={Simon, Herbert A},
  journal={The Quarterly Journal of Economics},
  pages={99--118},
  year={1955},
  publisher={JSTOR}
}

@article{gigerenzer1996reasoning,
  title={Reasoning the fast and frugal way: models of bounded rationality.},
  author={Gigerenzer, Gerd and Goldstein, Daniel G},
  journal={Psychological Review},
  volume={103},
  number={4},
  pages={650--669},
  year={1996},
  publisher={American Psychological Association}
}

@article{bowers2025polynomial,
  title={Polynomial-Time Approximation Schemes via Utility Alignment: Unit-Demand Pricing and More},
  author={Bowers, Robin and Garbea, Marius and Pountourakis, Emmanouil and Taggart, Samuel},
  journal={arXiv preprint arXiv:2506.20030},
  year={2025}
}

@article{milgrom1982theory,
  title={A theory of auctions and competitive bidding},
  author={Milgrom, Paul R and Weber, Robert J},
  journal={Econometrica: Journal of the Econometric Society},
  pages={1089--1122},
  year={1982},
  publisher={JSTOR}
}

@article{karlin1981total,
  title={Total positivity properties of absolute value multinormal variables with applications to confidence interval estimates and related probabilistic inequalities},
  author={Karlin, Samuel and Rinott, Yosef},
  journal={The Annals of Statistics},
  pages={1035--1049},
  year={1981},
  publisher={JSTOR}
}

@article{kahneman2003maps,
  title={Maps of bounded rationality: Psychology for behavioral economics},
  author={Kahneman, Daniel},
  journal={American Economic Review},
  volume={93},
  number={5},
  pages={1449--1475},
  year={2003},
  publisher={American Economic Association}
}

@article{karle2015loss,
  title={Loss aversion and consumption choice: Theory and experimental evidence},
  author={Karle, Heiko and Kirchsteiger, Georg and Peitz, Martin},
  journal={American Economic Journal: Microeconomics},
  volume={7},
  number={2},
  pages={101--120},
  year={2015},
  publisher={American Economic Association 2014 Broadway, Suite 305, Nashville, TN 37203-2425}
}

@article{stuttgen2012satisficing,
  title={A satisficing choice model},
  author={St{\"u}ttgen, Peter and Boatwright, Peter and Monroe, Robert T},
  journal={Marketing Science},
  volume={31},
  number={6},
  pages={878--899},
  year={2012},
  publisher={INFORMS}
}

@article{goyal2016near,
  title={Near-optimal algorithms for the assortment planning problem under dynamic substitution and stochastic demand},
  author={Goyal, Vineet and Levi, Retsef and Segev, Danny},
  journal={Operations Research},
  volume={64},
  number={1},
  pages={219--235},
  year={2016},
  publisher={INFORMS}
}

@article{cao2022network,
  title={Network revenue management under a spiked multinomial logit choice model},
  author={Cao, Yufeng and Kleywegt, Anton J and Wang, He},
  journal={Operations Research},
  volume={70},
  number={4},
  pages={2237--2253},
  year={2022},
  publisher={INFORMS}
}

@article{chen2025approximation,
  title={Approximation Schemes for Dynamic Pricing with Opaque Products},
  author={Chen, Xingxing and DeValve, Levi and Feldman, Jacob and Huang, Yukai and Segev, Danny},
  journal={Available at SSRN 5400432},
  year={2025}
}

@article{housni2025price,
  title={Price and assortment optimization under the multinomial logit model with opaque products},
  author={El Housni, Omar and Elmachtoub, Adam N and Sheth, Harsh and Shi, Jiaqi},
  journal={arXiv preprint arXiv:2502.08124},
  year={2025}
}

@article{milgrom1981good,
  title={Good news and bad news: Representation theorems and applications},
  author={Milgrom, Paul R},
  journal={The Bell Journal of Economics},
  pages={380--391},
  year={1981},
  publisher={JSTOR}
}

@article{karlin1980classes,
  title={Classes of orderings of measures and related correlation inequalities. {I}. Multivariate totally positive distributions},
  author={Karlin, Samuel and Rinott, Yosef},
  journal={Journal of Multivariate Analysis},
  volume={10},
  number={4},
  pages={467--498},
  year={1980},
  publisher={Elsevier}
}

%-------------------------------------------------------
\newpage
\ECSwitch
\renewcommand{\theHsection}{EC.\arabic{section}}
\renewcommand{\theHequation}{EC.\arabic{equation}}
\renewcommand{\theHlemma}{EC.\arabic{lemma}}
\renewcommand{\theHexample}{EC.\arabic{example}}

\begin{APPENDICES}

\section{Omitted Examples}\label{sec:examples}

In this section, we include all examples which were omitted from the main body.

\example[Exchangeable \& Affiliated Valuations, Quasi-Linear Case]
\label{ex:affiliated_ql_gap}
In this example, we give an explicit construction and analysis of an instance where valuations are drawn from an exchangeable and affiliated distribution and where, for quasi-linear utility-maximizing customers, discriminatory pricing earns more than twice that of the optimal uniform pricing. Thus, the separation between the uniform pricing and discriminatory pricing in the quasi-linear case is strictly greater for exchangeable and affiliated distributions than it is in the IID setting. 

Suppose there are $n = 56$ items partitioned into three groups of items, $G_1$, $G_2$, and $G_3$, where $|G_1| = 1$, $|G_2| = 5$, and $|G_3| = 50$. Let customers be quasi-linear utility maximizers who form valuations by first drawing a latent type $T \in \{1, 2, 3\}$ according to $\Pr(T = 1) = \frac{99}{100}$, $\Pr(T = 2) = \frac{99}{10000}$, and $\Pr(T = 3) = \frac{1}{10000}$. Conditional on $T$, the customers draw valuations IID according to one of three distributions. For a type-1 customer, valuations are always equal to one, i.e., $V_i = 1$ with probability 1. For a type-2 customer, valuations are drawn according to a binary distribution such that $V_i = 1$ with probability $\frac{7}{10}$ and otherwise $V_i = 100$ with probability $\frac{3}{10}$. Finally, for a type-3 customer, valuations are drawn IID according to:
\[
    V_i=
    \begin{cases}
        10000 & \text{with probability } \frac1{25},\\[1mm]
        100    & \text{with probability } \frac{36}{125},\\[1mm]
        1       & \text{with probability } \frac{84}{125}.
    \end{cases}
\]
Thus overall, customers draw valuations according to a mixture of IID distributions, and the mixture family satisfies a monotone likelihood ratio property and hence is exchangeable and affiliated (this is easy to verify directly, we omit the computation for brevity). 

Now, over this instance we will first bound the revenue of uniform pricing. If $p \le 1$, then $\rev_{UP}(p) = p \le 1$ since valuations are always at least one, and this is tight when $p = 1$. If $p \in (1, 100]$, then only type-2 and type-3 customers can purchase and since these types have total probability $1/100$, $\rev_{UP}(p) \le \frac{p}{100} \le 1$. If $p \in (100, 10000]$, then only type-3 customers can purchase and since these occur with probability 1/10000, $\rev_{UP}(p) \le \frac{p}{10000} \le 1$. Finally, prices over 10000 earn no revenue, thus we conclude that $\rev_{UP} = 1$.

Now consider the feasible discriminatory pricing that charges $p_1 = .99$ for items in group $G_1$, $p_2 = 99.89$ for items in group $G_2$, and $p_3 = 9900.89$ for items in group $G_3$. A type-1 customer always purchases the item in $G_1$, contributing expected revenue of $\frac{99}{100}(0.99) = 0.9801$. A type-2 customer purchases an item in $G_2$ whenever at least one of the items in $G_2$ is valued at $100$ and all items in $G_1$ are valued at 1, otherwise they purchase an item in $G_1$. The expected revenue from a type-2 customer is:
\
\begin{equation*}
    \frac{99}{10000}\left( (99.89) \frac{7}{10}\left(1-\left(\frac{7}{10}\right)^5\right) 
    + .99\left(1 - \frac{7}{10}\left(1-\left(\frac{7}{10}\right)^5\right)\right) \right) \ge .5799.
\end{equation*}

Finally, a type-3 customer purchases an item in $G_3$ whenever none of
the six items in $G_1 \cup G_2$ are valued at $10000$ and at least one
item in $G_3$ is valued at $10000$. A lower-bound on the expected revenue from type-3 customers, considering only this purchasing event and otherwise assuming they buy an item in $G_1$ is:
\
\begin{equation*}
    \frac{1}{10000}\left(9900.89 \left(\frac{24}{25}\right)^6\left(1-\left(\frac{24}{25}\right)^{50} \right) + .99\left(1- \left(\frac{24}{25}\right)^6\left(1-\left(\frac{24}{25}\right)^{50} \right)\right)\right) \ge .6743
\end{equation*}
Combining the revenue from the three types, we have that $\rev_{DP} \ge .9801 + .5799 + .6743 = 2.2343$. Thus discriminatory pricing can earn strictly more than twice as much revenue as uniform pricing when valuations are exchangeable and affiliated, separating it from the IID case for quasi-linear customers. 
% Furthermore, for simplicity we have limited this construction to three customer types but extending the construction allows us to achieve any level of gap between uniform pricing and discriminatory pricing revenue. 
\hfill\Halmos\endexample

%%%%%%%%%%%%%%%%%%%%%%%%%%%%%%%%%%%%%%%%%%%%%
\vskip 10pt
%%%%%%%%%%%%%%%%%%%%%%%%%%%%%%%%%%%%%%%%%%%%%

\example[General Independent Valuations, Min-Buying Case] \label{ex:notReg}
In this example, we give an explicit construction and analysis of an instance with independent valuations and min-buying customers where a single price can guarantee no better than $\frac{1}{n}$ of the revenue of the optimal differentiated pricing. We note that a similar construction appears in Section 5 of \cite{hartline2009simple}, albeit in a slightly different context. 

Fix any $\ep \in (0, 1)$ and choose some $B > 2/\ep$. Suppose customers are min-buying and draw valuations $V_i$ for items $i \in [n]$ independently according to two-point distributions where $V_i = 0$ with probability $1 - B^{-i}$ and $V_i = B^i$ with probability $B^{-i}$. For this instance, the optimal uniform price must be in the support of the valuations. To bound the optimal uniform pricing revenue, consider the price $p = B^i$ for some $i \in [n]$. This uniform price earns:
\
\begin{align*}
    \rev_{UP}(B^i)
    = B^i \Pr\left(\max_j V_j \ge B^i\right)
    = B^i \left(1 - \prod_{j = i}^n (1 - B^{-j}) \right)
    \le B^i \sum_{j = i}^n B^{-j}
    = \sum_{j = 0}^{n-i} B^{-j}
    \le \frac{B}{B-1},
\end{align*}
where the first inequality follows from the union bound and the second by adding terms and summing the geometric series. Thus, for any price $p$ in the support, $\rev_{UP}(p) \le \frac{B}{B-1}$.

Now consider the differentiated pricing which uses prices $\p := (B, B^2, \ldots, B^n)$. Under this pricing, a min-buying customer buys item $i$ with probability $B^{-i} \prod_{j = 1}^{i - 1}\left(1- B^{-j}\right)$, and the revenue of differentiated pricing under $\p$ can be bounded as
\
\begin{align*}
    \rev_{DP}(\p) 
    &= \sum_{i=1}^n B^i B^{-i} \prod_{j = 1}^{i-1}(1- B^{-j}) \\
    &= \sum_{i=1}^n \prod_{j=1}^{i-1}(1-B^{-j}) \\
    &\ge \sum_{i=1}^n \prod_{j=1}^{\infty}(1-B^{-j})\\
    &\ge \sum_{i=1}^n (1 - \sum_{j=1}^{\infty}B^{-j})\\
    &= \sum_{i=1}^n (1 - \frac{1}{B-1})
    = n\frac{B-2}{B-1}
\end{align*}
where the second inequality follows from the fact that $\prod_j(1 - x_j) \ge 1 - \sum_j x_j$ for any collection $\{x_j\} \subset [0,1]$. Combining the upper-bound for uniform pricing with the lower bound for differentiated pricing we see $\frac{\rev_{DP}}{\rev_{UP}} \ge n \frac{B-2}{B} = n\left(1 - \frac{2}{B}\right) > n(1-\ep)$ as desired. Further, taking $\ep \to 0$, we obtain that the worst-case approximation factor of uniform pricing for general independent valuations is exactly $n$.
\hfill\Halmos\endexample

\section{Omitted Proofs}\label{sec:proofs}

In this section, we include proofs of all results which were omitted from the main body.

\subsection{Omitted Proofs from \cref{sec:symm}}
\begin{proof}{Proof of \cref{thm:exchang_aff}.}
We will prove the stronger statement that, for every exchangeable and affiliated joint distribution $F$ over valuations $V_1, \ldots, V_n$, and for every price vector $\bp = (p_1, \ldots, p_n)$, there exists a price $\hat p \in \{p_1, \ldots, p_n\}$ such that $\rev_{UP}(\hat p) \ge \rev_{DP}(\bp)$. As in the proof of \cref{thm:iid}, our proof will proceed by induction on the number of items, $n$. The base case $n = 1$ follows immediately, so suppose $n \ge 2$ and that the claim holds for $n - 1$ items under every exchangeable and affiliated distribution. Fix some arbitrary price vector $\bp = (p_1, \ldots, p_n)$ and note since the valuation distribution is exchangeable, we may relabel the items so that $p_1 \le p_2 \le \ldots \le p_n$ without loss of generality.

Now, even though valuations are no longer independent, we can still give a recursive analog of \cref{eq:rec_rev} as:
\
\begin{align}\label{eq:single_opt_choice_aff}
    \rev_{DP}(\bp; F) &=
    p_1 \Pr(V_1\ge p_1) + \Pr(V_1 < p_1) \rev_{DP}^{\overline S_1}
    \left((p_2, \ldots, p_n); F^{V_1 < p_1} \right),
\end{align}
where we will use the notation $\rev_{DP}(\cdot \, ; \, G)$ to emphasize that the revenue is taken as an expectation over valuations $\bV \sim G$, and where $F^{V_1 < p_1}$ denotes the conditional distribution of $(V_2, \ldots, V_n)$ after observing the event $V_1 < p_1$. We may also suppose that $\Pr(V_1 < p_1) > 0$ without loss of generality since, if $\Pr(V_1 < p_1) = 0$, then the lowest priced first item is always purchased and $\rev_{DP}(\bp) = \rev_{UP}(p_1)$. 

Now, in order to apply our induction hypothesis in \cref{eq:single_opt_choice_aff}, we must show that $F^{V_1<p_1}$ is also exchangeable and affiliated. To see this, recall that affiliation is preserved under products, conditioning on sublattice events, and marginalization \citep{karlin1980classes}, and note that, up to normalization, the density of $F^{V_1<p_1}$ is:
\[
    f^{V_1 < p_1}(v_2, \ldots, v_n) \propto \int_0^{p_1} f(x, v_2, \ldots, v_n) dx.
\]
The set $\left\{(v_1, \ldots, v_n) | v_1 < p_1 \right\}$ is a sublattice, hence its indicator satisfies the affiliation inequality in \cref{eq:ea-refined-affiliation}, and thus $f(v_1, \ldots, v_n) \I\{v_1 < p_1\}$ is affiliated. Marginalizing this function over $v_1$ shows that the unnormalized density $f^{V_1<p_1}$ is affiliated. Normalization preserves this property, and hence the conditional distribution is affiliated. Furthermore, as the original density $f$ is exchangeable, the conditional density remains exchangeable in the remaining coordinates $2, \ldots, n$.

We may therefore apply the induction hypothesis to the conditional distribution $F^{V_1 < p_1}$ and the truncated price vector $(p_2, \ldots, p_n)$. It follows then that there exists a price $\hat p \in \{p_2, \ldots, p_n\}$ such that:
\
\begin{equation*}
    \rev_{DP}^{\overline S_1}\left((p_2, \ldots, p_n); F^{V_1 < p_1} \right)
    \le \rev_{UP}^{\overline S_1}\left(\hat p ; F^{V_1 < p_1} \right).
\end{equation*}
Note also that since $\hat p \in \{p_2, \ldots, p_n\}$, it implies that $\hat p \ge p_1$. Taking the induction hypothesis and \cref{eq:single_opt_choice_aff} together, we have that:
\begin{align}\label{eq:reduce_two_tiers_aff}
    \rev_{DP}(\bp;F)
    \le p_1\Pr(V_1\ge p_1) + \Pr(V_1 < p_1) \rev_{UP}^{\overline S_1} \left( \hat p; F^{V_1 < p_1}\right) 
    = \rev_{DP}(\tilde\bp ; F),
\end{align}
where $\tilde\bp := (p_1, \hat p\e_{n-1})$. Thus, the original price vector $\bp$ is weakly dominated by the two-tier price vector $\tilde\bp$. To complete the proof, it remains to show that:
\begin{equation} \label{eq:two_tier_goal_aff}
    \rev_{DP}(\tilde\bp;F) \le \max\left\{\rev_{UP}(p_1;F), \rev_{UP}(\hat p;F) \right\}.
\end{equation}

To this end, define the following four events:
\begin{align*}
    &A_1 := \left\{\max_{i \in [n]} V_i \ge p_1 \right\},
    &&A_2 := \left\{\max_{i \in [n]} V_i \ge \hat p \right\}, \\
    &B_1 := \left\{ V_1 < p_1, \max_{i \in \{2, \ldots, n\}} V_i \ge p_1 \right\},
    && B_2 := \left\{V_1 < p_1, \max_{i \in \{2, \ldots, n\}} V_i \ge \hat p \right\}.
\end{align*}
$A_1$ and $A_2$ are the events in which an item is purchased under uniform prices $p_1$ and $\hat p$, respectively. $B_1$ and $B_2$ are the events that occur when the first item is rejected at price $p_1$ and another item is acceptable at price $p_1$ or $\hat p$, respectively. Note we may suppose $\Pr(A_2) > 0$ since, if not, the high-price tier never sells which implies that:
\begin{equation*}
    \rev_{DP}(\tilde\bp;F) = p_1 \Pr(V_1 \ge p_1) \le p_1\Pr(A_1) = \rev_{UP}(p_1;F),
\end{equation*} 
completing the proof. Also note that $\Pr(A_2) > 0$ implies that $\Pr(A_1) > 0$ because $A_2 \subseteq A_1$.

Now, following the proof of \cref{thm:iid}, define $\lambda := \frac{\Pr(V_1 \ge p_1)}{\Pr(A_1)}$. Since $A_1$ is the disjoint union of $\{V_1 \ge p_1\}$ and $B_1$, we have that $1 - \lambda = \frac{\Pr(B_1)}{\Pr(A_1)}$. Consequently,
\begin{align}\label{eq:goal_aff}
    \max\left\{\rev_{UP}(p_1;F), \rev_{UP}(\hat p;F)\right\} 
    &\ge \lambda \rev_{UP}(p_1; F) + (1 - \lambda)\rev_{UP}(\hat p; F) \notag\\
    &\quad= p_1 \Pr(V_1 \ge p_1) + \hat p \frac{\Pr(B_1)}{\Pr(A_1)} \Pr(A_2).
\end{align}
To complete the proof, it suffices to show that $\frac{\Pr(B_1)}{\Pr(A_1)} \Pr(A_2) \ge \Pr(B_2)$ since $\rev_{DP}(\tilde\bp;F) = p_1 \Pr(V_1 \ge p_1) + \hat p \Pr(B_2)$. By rearranging and conditioning, this inequality is equivalent to showing that:
\
\begin{equation}\label{eq:star_aff}
    \Pr(V_1 < p_1 \mid A_1) \ge \Pr(V_1 < p_1 \mid A_2).
\end{equation}
The remainder of the proof will be to show \cref{eq:star_aff}. Before proceeding, let us pause to note that \cref{eq:star_aff} is intuitive for exchangeable and affiliated valuation distributions. It states that the higher the lower bound imposed on the customer's maximum valuation is (here, $\hat p$ rather than $p_1$), the more the conditional marginal distribution of each item valuation should shift upward. In particular, $V_1$ should become less likely to lie below the low-price threshold $p_1$. To formalize this intuition, we will combine exchangeability with the positive-dependence structure implied by affiliation by passing to the joint distribution of the order statistics.

Let $V^{(i)}$ denote the $i^{th}$ order statistic of the valuations, and let $M := V^{(n)}$ denote the maximum. Since the original density $f$ is exchangeable, the joint density of the order statistics is
\begin{equation} \label{eq:order_stat_density_aff}
    g(y_1, \ldots, y_n) := n! f(y_1, \ldots, y_n) \I\{y_1 \le \ldots \le y_n\}.
\end{equation}
The ordered cone $\mathcal C_n := \left\{(y_1, \ldots, y_n) |  y_1 \le \ldots \le y_n \right\}$ is a sublattice. Therefore its indicator function satisfies \cref{eq:ea-refined-affiliation}, and $g(y_1, \ldots, y_n)$ is an affiliated density. Further, again since affiliation is preserved under marginalization, for every $k \in [n - 1]$, the pair $(V^{(k)}, M)$ is affiliated.

Our strategy is now to establish an analogue of \cref{eq:star_aff} for each pair $(V^{(k)}, M)$ and then use an averaging argument to obtain
the desired inequality for $(V_1, M)$. To that end, for every fixed $a$ and every $k \in [n]$, define $\phi_{k,a}(t) := \Pr(V^{(k)} < a \mid M \ge t)$ for all $t$ such that $\Pr(M \ge t) > 0$. We will show that $\phi_{k,a}(t)$ is non-increasing in $t$ for $k \in [n-1]$ (for $k = n$ and $t \ge a$, it is zero). Fix $k \in [n-1]$ and $s < t$ with $\Pr(M \ge t) > 0$, and define the events:
\begin{align*}
    &L := \{V^{(k)} < a\},
    &&H :=\{V^{(k)} \ge a\}, \\
    &C_1 := \{s \le M < t\},
    &&C_2 := \{M \ge t\}.
\end{align*}
If $\Pr(C_1) = 0$, then $\phi_{k,a}(s) = \phi_{k,a}(t)$, so the desired inequality holds. Henceforth suppose $\Pr(C_1) > 0$. Let $h_k(x,z)$ denote the joint density of $(V^{(k)}, M)$. Since $h_k$ is affiliated, for every $x < a \le y$ and $s \le z < t \le w$, we have
\[
    h_k(x,z)h_k(y,w) \ge h_k(x,w)h_k(y,z).
\]
Integrating this inequality over $x < a$, $y \ge a$, $s \le z < t$, and $w \ge t$ gives:
\begin{align*}
    &\int_{x<a}\int_{y\ge a}\int_s^t\int_t^\infty
    h_k(x,z)h_k(y,w)
    \,dw\,dz\,dy\,dx\\
    &\qquad\ge
    \int_{x<a}\int_{y\ge a}\int_s^t\int_t^\infty
    h_k(x,w)h_k(y,z)
    \,dw\,dz\,dy\,dx.
\end{align*}
The left-hand integral factors over $(x,z)$ and $(y,w)$, while the right-hand integral factors over $(x,w)$ and $(y,z)$. Therefore, this inequality is equivalent to:
\
\begin{equation}\label{eq:tp2_rectangle_aff}
    \Pr(L\cap C_1)\Pr(H\cap C_2) \ge \Pr(L\cap C_2)\Pr(H\cap C_1),
\end{equation}
which is equivalent to $\Pr(L \mid C_1) \ge \Pr(L \mid C_2)$ by conditioning. Moreover, note $\{M \ge s\}$ is the disjoint union of $C_1$ and $C_2$. Thus $\Pr(L \mid M \ge s)$ is a convex combination of $\Pr(L\mid C_1)$ and $\Pr(L \mid C_2)$. It follows that $\Pr(L \mid M \ge s) \ge \Pr(L \mid C_2) = \Pr(L \mid M \ge t)$. Thus, $\phi_{k,a}(s) \ge \phi_{k,a}(t)$, and we conclude that $\phi_{k,a}$ is non-increasing.

Now we will translate these order-statistic comparisons back to an individual valuation. By exchangeability, conditional on the symmetric event $\{M \ge t\}$, every coordinate has the same probability of lying below $a$. Therefore,
\begin{align}\label{eq:coordinate_order_stat_aff}
    \Pr(V_1 < a \mid M\ge t) 
    = \frac{1}{n} \E\left[\sum_{i=1}^n \mathbf 1\{V_i<a\} \middle| M \ge t \right]
    = \frac{1}{n} \sum_{k=1}^n \phi_{k, a}(t)
    = \frac{1}{n} \sum_{k=1}^{n-1} \phi_{k, a}(t),
\end{align}
where the final equality follows from noting that when $t \ge a$, the event $\{M \ge t\}$ implies that $M \ge a$, and the final term is zero. As every term on the right-hand side is non-increasing in $t$, it follows that the probability $\Pr(V_1 < a \mid M \ge t)$ is also non-increasing for $t\ge a$. Taking $a = p_1$ and $t \in \{p_1, \hat p\}$ gives $\Pr(V_1 < p_1 \mid A_1) \ge \Pr(V_1 < p_1 \mid A_2)$ which proves \cref{eq:star_aff}. Substituting this inequality into \cref{eq:goal_aff} and combining it with \cref{eq:reduce_two_tiers_aff}, gives that $\rev_{DP}(\bp ; F) \le \max\left\{\rev_{UP}(p_1;F), \rev_{UP}(\hat p ; F) \right\}$ which completes the induction. 

Thus every discriminatory price vector is weakly dominated by a uniform price chosen from its own price vector, and this implies that $\rev_{DP} =\rev_{UP}$ as desired.
\hfill\Halmos\end{proof}

%%%%%%%%%%%%%%%%%%%%%%%%%%%%%%%%%%%%%%%%%%%%%
\vskip 10pt
%%%%%%%%%%%%%%%%%%%%%%%%%%%%%%%%%%%%%%%%%%%%%

\begin{proof}{Proof of \cref{thm:exchang}.}
As the upper-bound is handled by Theorem 3(i) of \cite{elmachtoub2021power}, here we will construct a lower-bounding instance where the gap in revenue between discriminatory pricing and uniform pricing for min-buying customers is logarithmic. To that end, suppose there are $n$ items and for each $i \in [n]$, define $r_i := 4^{n - i}$, $\lambda_i := \frac{3}{8 r_i}$, and $k_i := \max\{1, \left\lfloor\frac{n}{2 (n - i + 1)}\right\rfloor\}$. Consider the distribution of valuations $\bV = (V_1, \ldots, V_n) \sim F$ such that there are $n$ valuation types. Each type $i \in [n]$ occurs with probability $\lambda_i$ and a type-$i$ valuation vector is constructed by choosing a subset $S \subseteq [n]$ of cardinality $k_i$ uniformly at random among all subsets, then setting $V_j = 2 r_i$ if $j \in S$ and $V_j = 0$ if $j \notin S$. With the remaining probability $1 - \sum_i \lambda_i$, let $\bV = \textbf{0}$. Note this distribution over valuations is clearly exchangeable and that this is a valid probability distribution since,
\
\begin{equation*}
    \sum_{i=1}^n \lambda_i = \frac{3}{8}\sum_{i=1}^n 4^{i - n} 
    = \frac{3}{8} \sum_{i = 0}^{n - 1} 4^{-i} 
    = \frac{1}{2}\left(1 - 4^{-n}\right) < 1.
\end{equation*}

% For convenience, define \(r_{N+1}=0\).

Now to upper-bound the revenue of an optimal uniform pricing, consider any uniform price $p \in (0, 2 r_1]$  (since if $p > 2 r_1$ no buyer purchases). Let $i \in [n]$ be the unique index such that $2 r_{i+1} < p \le 2 r_i$ (letting $r_{n+1} = 0$). For any type-$t$ valuation vector where $t > i$, we have
\[
    \max_{j \in [n]} V_j = 2 r_t \le 2r_{i+1} < p
\]
and thus they will not purchase any item at price $p$. Consequently,
\
\begin{align*}
    \rev_{UP}(p) & \le p \sum_{j = 1}^i \lambda_j 
    \le 2 r_i \sum_{j = 1}^i \frac{3}{8 r_j} 
    = \frac{3}{4} \sum_{j = 1}^i\frac{r_i}{r_j} 
    = \frac{3}{4} \sum_{j = 0}^{i - 1} 4^{-j}
    = 1 - 4^{-i} < 1,
\end{align*}
which implies that $\max_p \rev_{UP}(p) \le 1$. 

Now we will lower-bound the revenue of optimal discriminatory pricing by considering the policy which uses prices $\bp = (r_1, r_2, \ldots, r_n)$, indexed from high to low. Consider a type-$i$ valuation vector, i.e., the valuation vector where $|S| = k_i$ items have valuation $2 r_i$ chosen u.a.r., and consider the event where the highest index item with non-zero valuation is $i$, denoted by $E_i$. Under event $E_i$, the item $i$ is valued as $2 r_i$ which exceeds its price of $r_i$, and it is the cheapest item which gives non-negative utility thus item $i$ is purchased. Further, note that since $S$ is a uniformly random, size $k_i$ subset of $[n]$,
\
\begin{equation}\label{eq:binom_lb}
    \Pr(E_i \mid \text{Valuation of type $i$}) = \frac{\binom{i-1}{k_i-1}}{\binom{n}{k_i}}.
\end{equation}
Now, if we can prove that $\Pr(E_i \mid \text{Valuation of type $i$}) \ge \frac{1}{4 (n - i + 1)}$, it will imply our desired result since $\rev_{UP} \le 1$ and
\
\begin{align*}
    \rev_{DP} \ge \rev_{DP}(r_1, \ldots, r_n) 
    &\ge \sum_{i=1}^n r_i \Pr(E_i \mid \text{Valuation of type $i$}) \lambda_i \\
    &\ge \sum_{i=1}^n r_i \frac{1}{4(n - i + 1)} \frac{3}{8 r_i} \\
    &= \frac{3}{32} \sum_{i=1}^n \frac{1}{n-i+1} \\
    &= \frac{3}{32} H_n,
\end{align*}
where $H_n$ is the $n^{th}$ harmonic number. Thus, in the remainder of this proof we will focus on lower-bounding the expression in \cref{eq:binom_lb}. 

We will consider two cases. First suppose that $i$ is such that $k_i = 1$. In this case, $\Pr\left(E_i \mid \text{Valuation of type $i$}\right) =  \frac{1}{n}$ by \cref{eq:binom_lb}, and  $k_i = 1$ implies that $\frac{n}{2(n - i + 1)} < 2$ by definition. Thus $\Pr\left(E_i \mid \text{Valuation of type $i$} \right) = \frac{1}{n}  > \frac{1}{4(n - i + 1)}$ as desired. Now suppose $i$ is such that $k_i \ge 2$. Since $k_i = \left\lfloor\frac{n}{2(n - i + 1)}\right\rfloor$, we have $k_i \le \frac{n}{2(n - i + 1)} < k_i + 1$. By \cref{eq:binom_lb} we obtain:
\
\begin{align*}
    \Pr\left(E_i \mid \text{Valuation of type $i$}\right) 
    &= \frac{k_i}{n} \prod_{j = 1}^{k_i - 1} \frac{i - j}{n - j} \\
    &= \frac{k_i}{n} \prod_{j = 1}^{k_i - 1} \left(1 - \frac{n - i}{n - j}\right) \\
    &\ge \frac{k_i}{n} \left(1 - \sum_{j=1}^{k_i - 1} \frac{n - i}{n - j}\right) \\
    &\ge \frac{k_i}{n} \left(1 - \frac{(k_i - 1)(n - i)}{n - k_i + 1} \right) \\
    &\ge \frac{k_i}{n} \left(1 - \frac{k_i - 1}{2 k_i}\right) \\
    &\ge \frac{k_i}{2 (n - i + 1) (k_i + 1)} \left(1 - \frac{k_i - 1}{2 k_i}\right) \\
    &\ge \frac{1}{4 (n - i + 1)}
\end{align*}
where the first inequality follows from $\prod_j (1 - a_j) \ge 1 - \sum_j a_j$ for any numbers $\{a_j\} \subset [0,1]$, the third inequality follows from $2k_i(n - i) \le n - 2k_i \le n - k_i + 1$, the fourth inequality follows from $\frac{n}{2(n - i + 1)} < k_i + 1$, and the final inequality is by rearranging. Thus $\Pr\left(E_i \mid \text{Valuation of type $i$}\right) \ge \frac{1}{4(n - i + 1)}$ for all $i$, as desired. 
\hfill\Halmos\end{proof}

\subsection{Omitted Proofs from \cref{sec:reg}}
\label{sec:omitted_pfs_reg}

\begin{proof}{Proof of \cref{lem:reg-refined-ratio-reduction}.}
Suppose the valuations $V_i \sim F_i$, $i \in [n]$, are independent, continuous, and regular. Let $\p = (p_1, \ldots, p_n)$ be a revenue-maximizing differentiated pricing, and relabel the items so that $p_1 \le \ldots \le p_n$. Define $q_i := \Pr(V_i \ge p_i)$ as the acceptance probability at price $p_i$. We will construct an instance of triangular-revenue distributions whose optimal differentiated-pricing revenue is at least $\rev_{DP}$, while its optimal uniform-pricing revenue is no greater than that of the original instance.

Note that WLOG we may assume $q_i \in (0,1)$ for all $i \in [n]$. Indeed, if $q_i = 0$, then item $i$ never sells under $\p$ and can be omitted without changing the revenue of this price vector or increasing the uniform-pricing revenue. If $q_i = 1$ for some $i$, then a min-buying customer never pays more than $p_i$. Since $\p$ is optimal, $\rev_{DP} \le p_i = \rev_{UP}(p_i) \le \rev_{UP} \le \rev_{DP}$. Thus, $\rev_{DP} = \rev_{UP} = p_i$. A single item with distribution $\Tri(2p_i, 2p_i)$ has optimal uniform- and differentiated-pricing revenues equal to $p_i$, giving the required replacement. 

Now, for each item $i$, define $a_i := p_i q_i/(1-q_i)$ and replace $V_i$ by an independent valuation distribution $\widetilde V_i \sim \Tri(a_i, p_i)$, with CDF $\widetilde F_i$. By construction, the acceptance probability at the price $p_i$ is unchanged since $\Pr(\widetilde V_i \ge p_i) = \frac{a_i}{a_i+p_i} = q_i$. Next, recall that differentiated-pricing revenue under $\p$ is:
\[
    \rev_{DP}(\p) = \sum_{i=1}^n p_i q_i\prod_{j=1}^{i-1}(1-q_j).
\]
Since the prices and acceptance probabilities are unchanged, the transformation preserves the revenue of the designated price vector. Thus, $\widetilde{\rev}_{DP} \ge \rev_{DP}$, and it only remains to show that uniform-pricing revenue is weakly decreased over this new instance. 

To see this, note that by definition the revenue-quantile function for triangular-revenue distributions, $\widetilde F_i = \Tri(a_i, p_i)$ is such that for any $x \in (q_i, 1)$, $\rev_{\widetilde F_i}(x) = a_i(1-x) = p_i q_i \frac{1-x}{1-q_i}$. Further, because $F_i$ is regular, the concavity of its revenue-quantile function implies:
{\small
\begin{align*}
    \rev_{F_i}(x) = x F_i^{-1}(1-x) \ge
    \frac{1-x}{1-q_i}\rev_{F_i}(q_i) + \frac{x - q_i}{1-q_i}\rev_{F_i}(1)
    \ge \frac{1-x}{1-q_i}\rev_{F_i}(q_i)
    = p_i q_i \frac{1 - x}{1 - q_i} = \rev_{\widetilde F_i}(x) = x \widetilde F_i^{-1}(1 - x),
\end{align*}
}
where the first inequality follows from Jensen's inequality and the second inequality uses $\rev_{F_i}(1) \ge 0$. Dividing both sides of the above by $x > 0$ gives $F_i^{-1}(1-x) \ge \widetilde F_i^{-1}(1-x)$ for all $i \in [n]$. Now fix some uniform price $0 < \hat p < p_i$, and consider $x: = a_i/(a_i + \hat p) \in (q_i, 1)$. Then $\widetilde F_i^{-1}(1 - x) = \hat p$, and $\Pr(V_i \ge \hat p) \ge x = \Pr(\widetilde V_i \ge \hat p)$. Further note, if $\hat p = p_i$, the acceptance probabilities are equal by construction, at $\hat p = 0$ both acceptance probabilities are one, and for $\hat p > p_i$ for $\hat p > p_i$ the transformed item's acceptance probability is zero. Consequently, $\Pr(\widetilde V_i \ge \hat p) \le \Pr(V_i \ge \hat p)$ for every $\hat p \ge 0$.

By independence, the transformation therefore weakly decreases the probability of sale at every uniform price. Hence $\widetilde{\rev}_{UP} \le \rev_{UP}$, completing the proof.
\hfill\Halmos\end{proof}

%%%%%%%%%%%%%%%%%%%%%%%%%%%%%%%%%%%%%%%%%%%%%
\vskip 10pt
%%%%%%%%%%%%%%%%%%%%%%%%%%%%%%%%%%%%%%%%%%%%%

\begin{proof}{Proof of \cref{lem:reg-refined-canonicalization}.}
Fix a triangular-revenue instance $\cI$ such that the revenue of the optimal uniform pricing over that instance is equal to 1 and suppose $\p =  (p_1, \ldots, p_n)$ is an optimal differentiated pricing for $\cI$. For ease of exposition, we will use the notation $\rev_{DP}(\p; \cI)$ to denote the revenue of the price vector $\p$ over the instance $\cI$, and use $\rev_{DP}(\cI)$ and $\rev_{UP}(\cI)$ for the optimal differentiated- and uniform-pricing revenues over $\cI$, respectively. We will show that $\cI$ can always be transformed into a canonical instance over which uniform-pricing earns revenue of at most one and the differentiated pricing which sets the price of each item at its cap (what we will call \textit{cap-pricing}) earns revenue within $\ep > 0$ of the original differentiated-pricing optimum. We note that some steps of this proof parallel the triangular-revenue reductions in \citet{alaei2019optimal,jin2020tight}, however the details here differ (and often simplify) as the differentiated-pricing revenue in our model is governed by the min-buying continuation recursion.

First, we reduce to the instance such that each item is priced at its cap, i.e., $p_i = \v_i$. Consider an item $V_i \sim \Tri(a_i, \v_i)$ offered at price $p_i$. An item priced either at zero or above its cap never earns revenue or sells, respectively, and may be removed so that we may assume that $0 < p_i \le \v_i$ WLOG. Now consider, for each $i \in [n]$, replacing $V_i$ by $\widehat V_i \sim \Tri(a_i, p_i)$ and note that at the price $p_i$, the acceptance probabilities are preserved, i.e., $\Pr(\widehat V_i \ge p_i) = \frac{a_i}{a_i + p_i} = \Pr(V_i \ge p_i)$. Since the replacement preserves the item's acceptance probability, applying this replacement to each item preserves the total revenue under $\p$. This replacement also weakly reduces uniform-price demand, by a nearly identical argument as in the proof of \cref{lem:reg-refined-ratio-reduction} (we will omit it for brevity). Thus, there is a triangular-revenue instance with the same optimal differentiated pricing revenue as in the original instance where each item is priced at its cap.

Next, we will show that we can aggregate items with a common cap. Having reduced to the instance where items are priced at their caps, now suppose $m$ items have cap $\hat p$ and parameters $a_1, \ldots, a_m$. Consider the transformation that replaces them all by one item with distribution $\Tri(\hat a, \hat p)$, where $\hat a$ is chosen such that $\frac{\hat p}{\hat p + \hat a} = \prod_{j=1}^m \frac{\hat p}{\hat p + a_j}$, or equivalently, 

\begin{equation}\label{eq:rep_a}
    \hat a = \hat p \left(\prod_{j=1}^m \left(1 + \frac{a_j}{\hat p}\right) - 1\right).
\end{equation}
Note then, if $z$ is the continuation revenue from all items more expensive than $\hat p$, the original $m$ items with cap $\hat p$ produce continuation revenue:
\[
    \hat p \left(1 - \prod_{j=1}^m \frac{\hat p}{\hat p + a_j}\right)
    + z \prod_{j=1}^m \frac{\hat p}{\hat p + a_j},
\]
which, by construction, is unchanged by the replacement. It remains to show that the replacement by $\Tri(\hat a, \hat p)$ weakly reduces demand at every lower uniform price. To see this, consider any uniform price $t \in (0, \hat{p}) $, and note the original $m$ items have a combined rejection probability at this price of $\prod_{j=1}^m \frac{t}{t + a_j}$, whereas the new replacement item has rejection probability $\frac{t}{t + \hat a}$. Thus we need only show that $\frac{t}{t + \hat a} \ge \prod_{j=1}^m \frac{t}{t + a_j}$. To this end, let $\lambda := \hat p/t \ge 1$, $b_j := a_j/\hat p$, and plugging these in along with the definition of $\hat a$ in \cref{eq:rep_a}, this comparison can be rewritten as:
\[
    1 + \lambda\left(\prod_{j=1}^m(1 + b_j) - 1\right) \le \prod_{j=1}^m (1 + \lambda b_j).
\]
To verify this inequality, expand both products. Every term of degree $d \ge 1$ is multiplied by $\lambda$ on the left and by $\lambda^d$ on the right. Since $\lambda \ge 1$, the inequality follows. Finally, at $t = 0$, both rejection probabilities are zero and at $t > \hat p$, all $m$ original items and the replacement item are always rejected. Thus, aggregation preserves the cap-pricing revenue and weakly reduces uniform-pricing revenue.

Let $N \le n$ denote the number of remaining items after aggregating all items with common caps, and index their distinct caps in decreasing order, $p_1 > \ldots > p_N > 0$. For the cap-priced and aggregated instance, the continuation revenue can be written recursively as:
\[
    C_0:=0,
    \qquad
    C_k := \frac{a_k + C_{k-1}}{1 + a_k/p_k}, \qquad k \in [N],
\]
where $C_k$ is the revenue generated by the first $k$ items, with the recursion proceeding from the highest cap (and thus price) to the lowest. In particular, $C_N$ is the cap-pricing revenue of the aggregated instance. In the remainder of this proof, we will remove all items whose caps are at most one and then replace the most expensive items with a single high-cap item and at most one boundary item. The resulting instance will retain the desired differentiated-pricing revenue, up to an $\ep$ loss, while its optimal uniform-pricing revenue remains at most one.

To that end, first note we may suppose that $C_N > 1$ since, if $C_N \le 1$, then choosing $P > 1$ sufficiently large that $\frac{P}{P+1} \ge C_N - \ep$ and considering the instance consisting only of one item with valuation $\Tri(1,P)$ has uniform-pricing revenue below one and the desired cap-pricing revenue. Thus it is a canonical instance with $N = 0$ and the result follows in this case. 

Now assume that $C_N > 1$. We begin by removing all items whose caps are at most one. If any such items exist, let $k$ be the first index such that $p_k \le 1$. Note the continuation revenue at $k$ satisfies:
\[
    C_k = \frac{a_k}{a_k + p_k}p_k + \frac{p_k}{a_k + p_k} C_{k-1} \le \max\{p_k, C_{k-1}\},
\]
which, by iterating, implies that for every $j \ge k$, we have $C_j \le \max\{p_k, C_{k-1}\}$. Since $C_N > 1$ and $p_k\le1$, it follows that $C_N \le C_{k-1}$. Thus, removing all items with caps at most one weakly increases the cap-pricing revenue. Furthermore, removing these items does not affect uniform-pricing revenue at prices above one and can only decrease it at prices at most one. Thus the uniform-pricing bound is preserved.

Finally, we now turn to replacing the most expensive items. Let $r$ be the first index at which continuation revenue exceeds one, i.e., such that $C_{r-1} \le 1 < C_r$. Note that $C_r$ is a convex combination of $C_{r-1}$ and $p_r$, and can be written as $C_r = \frac{a_r}{a_r + p_r} p_r + \frac{p_r}{a_r + p_r} C_{r-1}$. Because $C_{r-1} \le 1 < C_r$, it follows that $1 < C_r < p_r$. We will replace the first $r$ items, which have the highest caps and prices, with a single high-cap item and at most one boundary item, preserving their continuation revenue up to an arbitrarily small loss.

Specifically, consider the transformation which replaces the first $r$ items by one item with distribution $\Tri(1, P)$, where $P$ is large enough that $P > p_r$ and $\frac{P}{P + 1} \ge 1 - \ep$, and another item with distribution $\Tri(\bar a, p_r)$, where $\bar a := \frac{p_r (C_r - 1)}{p_r - C_r}$. The continuation revenue of the two items with distribution $\Tri(1, P)$ and $\Tri(\bar a, p_r)$ is:
\[
    \frac{\bar a}{\bar a + p_r} p_r + \frac{p_r}{\bar a + p_r} \frac{P}{P + 1} \ge \frac{\bar a + 1}{1 + \bar a/p_r} - \frac{p_r}{\bar a + p_r} \ep \ge C_r - \ep,
\]
where the inequality follows from recalling $\frac{P}{P + 1} \ge 1 - \ep$. Thus, the replacement preserves the continuation revenue of the first $r$ items up to a loss of $\ep$. Since the revenue of later items only depends on the revenue of the first $r$ items through $C_r$, this replacement also preserves the revenue generated by all items (up to a loss of $\ep$).

Finally we must show this replacement does not improve the uniform pricing revenue. For uniform prices $t > p_r$, the items with caps less than or equal to $p_r$ can be omitted as they will never be purchased. Thus the uniform pricing revenue satisfies $\widetilde \rev_{UP}(t) \le \frac{t}{t + 1} \le 1$ for all $t > p_r$. Now consider uniform prices $t \le p_r$. For uniform pricing revenue to be reduced, the purchasing probability of the items with distributions $\Tri(1, P)$ and $\Tri(\bar a, p_r)$ after the replacement must satisfy:
\
\begin{equation}\label{eq:demand_dominance}
    1 - \frac{t}{t + 1} \frac{t}{t + \bar a} \le 1 - \prod_{i=1}^{r} \frac{t}{t + a_i}.
\end{equation}
To prove \cref{eq:demand_dominance}, define $Q(t) := \prod_{i=1}^{r-1} \frac{t}{t + a_i}$ as the probability that the first $r - 1$ items are all rejected at a uniform price $t$. Since they are triangular-revenue distributions, we can show that $Q(t)$ is bounded by:
\begin{equation}\label{eq:reg-refined-prefix-bound-application}
    Q(t) \le \frac{t}{t + C_{r-1}},
    \qquad 0 < t \le p_r,
\end{equation}
where $C_{r-1}$ is the continuation revenue of the first $r - 1$ items. For brevity, we relegate the proof details of \cref{eq:reg-refined-prefix-bound-application} to \cref{lem:reg-refined-prefix-rejection} in the next subsection. If we can prove $1 - \frac{t}{t + 1} \frac{t}{t + \bar a} \le 1 - \frac{t}{t + C_{r-1}} \frac{t}{t + a_r}$, then we directly obtain \cref{eq:demand_dominance} using \cref{eq:reg-refined-prefix-bound-application}. Therefore, we only need to show $1 - \frac{t}{t + 1} \frac{t}{t + \bar a} \le 1 - \frac{t}{t + C_{r-1}} \frac{t}{t + a_r}$ holds after the replacement, or equivalently,
\
\begin{equation}\label{eq:purchase_prob_dom}
    (t + 1)(t + \bar a) \le (t + C_{r-1})(t + a_r).
\end{equation}

To prove \cref{eq:purchase_prob_dom} is true, we first rewrite $a_r$ and $\frac{p_r}{p_r + a_r}$ as:
\
\begin{equation}\label{eq:formula_ar}
    a_r = \frac{p_r(C_r - C_{r-1})}{p_r - C_r},
    \qquad
    \frac{p_r}{p_r + a_r} = \frac{p_r - C_r}{p_r - C_{r-1}},
\end{equation}
where the equalities follow by reorganizing the continuation revenue equation $C_r = \frac{a_r}{a_r + p_r} p_r + \frac{p_r}{a_r + p_r} C_{r-1}$. Substituting the formulas for $a_r$ and $\bar a$ into $(t + C_{r-1})(t + a_r)$ and $(t + 1)(t + \bar a) $, then gives:
\[
    (t + C_{r-1})(t + a_r) - (t + 1)(t + \bar a) 
    = \frac{1 - C_{r-1}}{p_r - C_r} \left(C_r t + p_r\left(1 + C_{r-1} - C_r\right)\right),
\]
where $1 - C_{r-1} \ge 0$ by the definition of $C_{r - 1}$, and $p_r - C_r > 0$ since $C_r = \frac{a_r}{a_r + p_r} p_r + \frac{p_r}{a_r + p_r} C_{r-1}$ and $p_r > 1$. Now, to show \cref{eq:purchase_prob_dom} is true, we only need to show that $1 + C_{r-1} - C_r \ge 0$.

Note that at uniform price $p_r$, the first $r$ items are the only items that can be purchased as all other items have caps below $p_r$. Since the optimal uniform pricing revenue is at most 1, $\rev_{UP}(p_r) = p_r \left(1 - Q(p_r)\frac{p_r}{p_r + a_r}\right) \le 1$. Together with \cref{eq:reg-refined-prefix-bound-application}, this implies:
\begin{equation}\label{eq:reg-refined-crossing-uniform-bound}
    p_r \left(1 - \frac{p_r}{p_r + C_{r-1}} \frac{p_r}{p_r + a_r}\right) \le 1 \Longleftrightarrow 1 - \frac{1}{p_r} \le \frac{p_r}{p_r + C_{r-1}} \frac{p_r}{p_r + a_r}.
\end{equation}
Substituting the expressions of $a_r$ and $\frac{p_r}{p_r + a_r}$ from \cref{eq:formula_ar} into \cref{eq:reg-refined-crossing-uniform-bound}, we obtain $1 - \frac{1}{p_r} \le \frac{p_r}{p_r + C_{r-1}} \frac{p_r - C_r}{p_r - C_{r-1}}$. Rearranging this inequality yields:
\begin{equation}\label{eq:reg-refined-crossing-bound}
    C_r \le 1 + \frac{(p_r - 1)C_{r-1}^2}{p_r^2} \le 1 + C_{r-1},
\end{equation}
where the last inequality uses the assumption $0 \le C_{r-1} \le 1$ and $p_r > 1$. With \cref{eq:reg-refined-crossing-bound} established, we can now conclude that the replacement preserves the revenue of differentiated pricing at $\p$ (up to a loss of $\ep$) and weakly decreases the optimal uniform pricing revenue.

Finally, we will summarize the transformations above and combine the resulting revenue comparisons to verify that they yield a canonical triangular-revenue distribution. Let $L \ge r$ be the largest index with $p_L > 1$. We replace the original instance $\cI$ with a new instance $\widetilde \cI$ which contains two items with distributions $\Tri(1, P)$ and $\Tri(\bar a, p_r)$ respectively, and the original items $ r+1, \ldots, L$. By the reasoning above, we know that when $P$ is sufficiently large, the revenue of the new instance $\widetilde \cI$ priced at the items' caps $\widetilde \p$ satisfies $\rev_{DP} (\widetilde \p; \widetilde \cI) \ge \rev_{DP}(\cI) - \ep$, and the uniform pricing revenue satisfies $\rev_{UP}(\widetilde \cI) \le \rev_{UP}(\cI)$.
\hfill\Halmos\end{proof}

%%%%%%%%%%%%%%%%%%%%%%%%%%%%%%%%%%%%%%%%%%%%%
\vskip 10pt
%%%%%%%%%%%%%%%%%%%%%%%%%%%%%%%%%%%%%%%%%%%%%

\begin{proof}{Proof of \cref{lem:reg-refined-Tri-bound}.}
Consider a canonical instance where items are valued independently according to $V_0 \sim \Tri(1, P)$, and $V_i \sim \Tri(a_i, \v_i)$, $i \in [N]$, where $P > \v_1 > \ldots > \v_N > 1$, and suppose over this instance that $\rev_{UP} \le 1$. If $N = 0$, its cap-pricing revenue is $P/(P+1) < 1/x^\star$, and the result is immediate. So suppose $N \ge 1$ and define $x_i: = 1/\v_i$ as the reciprocal cap of item $i$, and note $0 < x_1 < \ldots < x_N < 1$. By \cref{eq:continuation}, the continuation revenues for a canonical instance satisfy:
\[
    C_0 = \frac{P}{P + 1} < 1,
    \qquad
    C_k = \frac{a_k + C_{k-1}}{1 + a_k x_k}, \quad k \in [N].
\]
We will show that the cap-pricing revenue, $C_N$, is at most $1/x^\star$. Specifically, we will bound the maximum continuation revenue for any given set of parameters $\{a_1, \ldots, a_k\}$ and $\{x_1, \ldots, x_k\}$ by analyzing the uniform pricing revenue constraints the parameters must satisfy, deriving a relaxed upper-bound, and then analyzing the continuous extension of that upper-bound. 

To begin, consider the constraints induced by the assumption $\rev_{UP} \le 1$. Fix some $k \in [N]$ and consider a uniform price of $\v_k$. At this price, the terminal item (i.e., the item valued according to $V_0 \sim \Tri(1, P)$) is rejected with probability $\v_k/(\v_k + 1)$, and only the terminal item and the first $k$ items can be purchased, since all remaining items have caps below $\v_k$. Thus the uniform pricing revenue at price $\v_k$ is $\rev_{UP}(\v_k) = \v_k \left(1 - \frac{\v_k}{\v_k + 1}\prod_{i=1}^k \frac{\v_k}{\v_k + a_i}\right)$, and it must be less than or equal to 1 which, by rearranging, is equivalent to:
\begin{equation*}
     1 - \frac{1}{\v_k^2} \le \prod_{i=1}^k \frac{\v_k}{\v_k + a_i}.
\end{equation*}
Taking logarithms of both sides of the above, substituting $\v_k$ with $\frac{1}{x_k}$, and considering each cap price gives a system of constraints that the instance parameters must satisfy:
\begin{equation*}
    \sum_{i=1}^k\log(1 + a_i x_k) \le - \log(1 - x_k^2), \quad k \in [N].
\end{equation*}
We can relax and simplify these constraints by noting that since $\log(1 + ax)$ is concave when $a > 0$, by Jensen's inequality for any $x \in [0, 1]$ we have $\log(1 + ax) \ge x \log(1 + a) + (1 - x) \log 1 = x \log(1 + a)$. Applying this bound to the LHS, dividing both sides by $x_k$, and recalling $A(x) := -\frac{\log(1 - x^2)}{x}$ as defined in \cref{sec:independent-regular} yields:
\
\begin{equation}\label{eq:reg-refined-cumulative-budget}
    \sum_{i=1}^k \log(1 + a_i) \le A(x_k), \quad k \in [N].
\end{equation}

Now we will use the system of constraints in \cref{eq:reg-refined-cumulative-budget} to bound $C_N$. For ease of exposition, let $s_0 := 0$ and $s_k := \sum_{i=1}^k \log(1 + a_i)$ denote the LHS partial sums for the $k^{th}$ item in \cref{eq:reg-refined-cumulative-budget} and define the \textit{binding} reciprocal caps by $\widehat x_0 = 0$, and $\widehat x_k := A^{-1}(s_k)$ for $k \in [N]$. By \cref{eq:reg-refined-cumulative-budget} and the fact that $A(x)$ is strictly increasing (this follows directly from the derivative which we will analyze later in the proof), we have $\widehat x_k \le x_k$ for every $k \in [N]$, and $0 = \widehat x_0 < \widehat x_1 < \ldots < \widehat x_N < 1$. Now, define the continuation revenue recursion for the binding reciprocal caps as:
\begin{equation*}
    \widehat C_0 := 1,
    \qquad
    \widehat C_k := \frac{a_k + \widehat C_{k-1}}{1 + a_k \widehat x_k},
    \qquad k \in [N],
\end{equation*}
where $\widehat C_k$ for $k \in [N]$ are the \textit{relaxed} continuation revenues. Since $C_0 < \widehat C_0$ and $\widehat x_k \le x_k$, iterating gives $C_k \le \widehat C_k$ for every $k \in [N]$. Thus, $\widehat C_N$ upper bounds $C_N$. In the remainder of this proof, we will bound $\widehat C_N$ by passing to an appropriate continuous limit.

Specifically, we will construct a continuous upper bound on the relaxed continuation revenues. Note that, when the constraints in \cref{eq:reg-refined-cumulative-budget} all bind, the item parameters are determined by the binding reciprocal caps, i.e., $a_k = \exp\left(A(\widehat x_k) - A(\widehat x_{k-1})\right) - 1$. Thus, choosing an increasing sequence $0 = \widehat x_0 < \widehat x_1 < \ldots < \widehat x_N < 1$ determines both the parameters $a_k$ and the resulting relaxed continuation revenues, which can be expressed through the difference equation:
\[
    \widehat C_k - \widehat C_{k-1} = a_k \bigl(1 - \widehat x_k \widehat C_k \bigr).
\]
We can interpret this in the space of the binding reciprocal caps as associating the interval $[\widehat x_{k-1}, \widehat x_k]$ with item $k$, which contributes a $\log(1 + a_k) = A(\widehat x_k) - A(\widehat x_{k-1})$ increase in the cumulative ``budget'' $A$ over that interval and leads to a marginal change of $a_k \bigl(1 - \widehat x_k \widehat C_k \bigr)$ in relaxed continuation revenue. Using this representation, we can describe how the relaxed continuation revenues evolve as the number of items increases and the corresponding intervals become increasingly small. When consecutive reciprocal caps approach a common value $x$, the ratio $a_k/(\widehat x_k - \widehat x_{k-1})$ converges to $A'(x)$ by differentiability of $A$ and the first-order expansion of the exponential function. Further, dividing the difference equation by $\widehat x_k - \widehat x_{k-1}$ suggests that, in the continuous limit, the relaxed continuation revenue (call it $c$) is governed by:
\[
    c'(x) = A'(x)\bigl(1 - x c(x)\bigr),
    \qquad c(0)=1,
\]
which defines the function $c:[0,1)\to\mathbb R$ introduced in \cref{eq:reg-refined-ode}. Here, $c(x)$ represents the relaxed continuation revenue when the budget is allocated through infinitesimally small items rather than finitely many discrete items. The rate $A'(x)$ determines how much additional budget becomes available as the reciprocal cap increases, while the factor $1 - x c(x)$ accounts for the revenue generated by these additional items net of the continuation revenue they displace.

The limiting calculation motivates our choice of $c$, but does not by itself establish an upper bound for finite instances. We first establish the required properties of $A$ and $c$, showing that $c(x) \ge 1$ and that $c$ attains its maximum $1/x^\star$ at the unique point $x^\star$ satisfying $x^\star c(x^\star)=1$. We then prove that, starting from a continuation value of at least one, the continuous evolution over any budget-binding reciprocal-cap interval weakly dominates the corresponding single discrete step. Applying this comparison successively gives $\widehat C_k \le c(\widehat x_k)$ for every $k \in [N]$. Combining these results will bound the original cap-pricing revenue by $1/x^\star$.

First, we will need to establish a few properties of $A$ and $c$. As $A(x) = -\log(1 - x^2)/x$ for $x \in (0, 1)$ and with $A(0) = 0$, its continuous extension has derivative $A'(x) = \frac{2}{1-x^2} + \frac{\log(1-x^2)}{x^2}$ with $A'(0) = 1$, from which we can see that $A(x)$ is strictly increasing and invertible, and $\lim_{x \to 1} A(x) = \infty$. Further, we can upper bound $A'(x)$ for $x \in (0, 1)$ by $A'(x) \le \frac{2}{1-x^2} - 1$, where the inequality follows from the fact that $\log(1 - x^2) \le -x^2$ for $x \in [0, 1)$. Multiplying both sides by $(1 - x)^2$ gives
\[
    A'(x)(1 - x)^2 \le \frac{(1 - x)(1 + x^2)}{1 + x} \le 1 - x ,
\]
where the second inequality uses $x^2 \le x$. 

We next establish the required properties of $c$. As $c$ solves $c'(x) = A'(x)\bigl(1 - x c(x)\bigr)$ with boundary condition $c(0) = 1$, it is defined by a linear differential equation whose coefficients are continuous on every compact sub-interval of $[0,1)$, so it has a unique solution throughout $[0, 1)$. Since $A'(x) > 0$, the sign of $c'(x)$ is determined by $1 - x c(x)$. Define $g(x) := 1 - x c(x)$, we will next show that $g$ has exactly one zero in $(0, 1)$.

First, suppose that $g$ has no zero in $(0, 1)$. Since $g(0) = 1$ and $g$ is continuous, we would have $g(x) > 0$ for all $x \in [0, 1)$, or equivalently, $1/x > c(x)$ for all $x \in (0,1)$ by definition of $g$. However, since $A'(x) > 0$, this also implies that $c'(x) > 0$ and thus $c$ is strictly increasing. Since $c(0) = 1$ and it is strictly increasing, and since $1/x$ is strictly decreasing and equals 1 when $x = 1$, there must be an $x_0 \in [0, 1)$ such that $c(x_0) > 1/x_0$ contradicting the assumption. Thus, $g$ has at least one zero in $(0,1)$. We will now prove that this zero is unique.

To that end, note that $g'(x) = -c(x) - x c'(x)$. At any zero of $g$, we have $c(x) = 1/x$ by definition of $g$ and $c'(x) = 0$ by the defining ODE, and hence $g'(x)=-1/x<0$. Thus, every zero is a strict crossing from positive to negative. Furthermore, there cannot be a second zero since after the first crossing, $g$ is negative, and its first subsequent return to zero would have a non-negative left derivative, contradicting the strictly negative derivative at every zero. Therefore, there is a unique $x^\star \in (0, 1)$ such that $x^\star c(x^\star) = 1$, with $g(x) > 0$ for $x \in (0, x^\star)$ and $g(x) < 0$ for $x \in (x^\star, 1)$. It follows that $c$ is strictly increasing up to $x^\star$ and strictly decreasing afterward. Moreover, $c(x) \ge c(0) = 1$ for $x \in (0, x^\star]$, while $g(x) < 0$ implies $c(x) > 1/x > 1$ for $x \in (x^\star, 1)$. Consequently,
\begin{equation}\label{eq:reg-refined-envelope-maximum}
    c(x) \ge 1 \quad \text{ for } x \in [0, 1),
    \qquad
    \max_{0 \le x < 1} c(x) = c(x^\star) = \frac1{x^\star}.
\end{equation}
Numerically solving the ODE which defines $c(x)$ and locating the unique root of $x c (x) = 1$ gives $x^\star \approx 0.698636$, and hence the maximum value is $c(x^\star) = 1/x^\star \approx 1.43136$.

We now show that the continuous evolution dominates a single budget-binding step of the discrete recursion. Fix $y\in[0,1)$ and an incoming continuation value $z\ge1$, and let $\widehat c:[y, 1) \to \mathbb R$ solve:
\[
    \widehat c'(x) = A'(x)\bigl(1 - x\widehat c(x)\bigr), \qquad \widehat c(y) = z.
\]
As above, this solution exists and is unique throughout $[y, 1)$. For $x \in [y, 1)$, define $a(x) := \exp\left(A(x) - A(y)\right) - 1$ and $d(x) := \frac{a(x) + z}{1 + x a(x)}$. For $x > y$, $\log(1 + a(x)) = A(x) - A(y)$ is the increment in the cumulative ``budget,'' and $d(x)$ is the continuation revenue obtained by adding an item with distribution $\Tri(a(x), 1/x)$ priced at its cap. We will prove that $d(x) \le \widehat c(x)$, for $x \in [y, 1)$.

To compare $d(x)$ and $\widehat c(x)$, we show that an appropriately weighted difference between them is non-decreasing. Specifically, define the positive integrating factor $\mu(x) := \exp\left(\int_y^x t A'(t) dt\right)$ and the weighted difference $h(x) := \mu(x)\bigl(\widehat c(x) - d(x)\bigr)$. Since $d(y) = z = \widehat c(y)$, we have $h(y) = 0$. Further, since $\mu(x) > 0$, it suffices to show that $h'(x) \ge 0$ for $x \in [y, 1)$. The definitions of $a$ and $\mu$ give:
\[
    a'(x)=A'(x)\bigl(1+a(x)\bigr), \qquad \mu'(x) = x A'(x) \mu(x).
\]
Using the differential equation for $\widehat c$, we obtain:
\begin{align}\label{eq:reg-refined-weighted-gap-intermediate}
    h'(x)
    &= \mu(x)\left(\widehat c'(x) - d'(x) + x A'(x)\bigl(\widehat c(x) - d(x)\bigr) \right) \notag\\
    &= \mu(x)\left(A'(x)\bigl(1 - x d(x)\bigr) -d'(x)\right).
\end{align}
Differentiating $d(x)$ gives,
\[
    d'(x) = \frac{A'(x)(1 + a(x))(1 - x z) - a(x)(a(x) + z)}{(1 + x a(x))^2}.
\]
Substituting this expression into \cref{eq:reg-refined-weighted-gap-intermediate} and simplifying yields:
\begin{equation}\label{eq:reg-refined-weighted-gap-derivative}
    h'(x)
    =
    \mu(x)\frac{a(x)}{(1+xa(x))^2}
    \left(
        a(x)+z-A'(x)(1-x)(1-xz)
    \right).
\end{equation}
Since $z \ge1$, we have $1 - x z \le 1 - x$. Together with $A'(x) \ge 0$ and our earlier upper bound on $A'(x)$, this implies that,
\[
    A'(x)(1-x)(1 - x z) \le A'(x)(1 - x)^2 \le 1 \le a(x) + z,
\]
where the last inequality follows from $a(x) \ge 0$ and $z \ge 1$. Thus, the expression in parentheses in \cref{eq:reg-refined-weighted-gap-derivative} is non-negative, as are all remaining factors, implying $h'(x) \ge 0$ and $h(x) \ge h(y) = 0$. Since $\mu(x) > 0$, this proves $d(x) \le \widehat c(x)$, for $x \in [y, 1)$.

We can now show $\widehat C_k \le c(\widehat x_k)$ for all $k$ by induction on $k$. The base case is $\widehat C_0 = 1 = c(0) = c(\widehat x_0)$. Now suppose, for some $k \in [N]$ that $\widehat C_{k-1} \le c(\widehat x_{k-1})$. Since the recursion is increasing in its incoming continuation value,
\[
    \widehat C_k = \frac{a_k+\widehat C_{k-1}}{1+a_k\widehat x_k} \le \frac{a_k + c(\widehat x_{k-1})}{1 + a_k\widehat x_k}.
\]
By \cref{eq:reg-refined-envelope-maximum}, $c(\widehat x_{k-1}) \ge 1$, and we may apply the result that $d(x) \le \widehat c(x)$ with $y = \widehat x_{k-1}$, $x = \widehat x_k$, and $z = c(\widehat x_{k-1})$. The identity $a_k = \exp\left(A(\widehat x_k) - A(\widehat x_{k-1})\right) - 1$ ensures that $d(\widehat x_k)$ is exactly the discrete expression on the right-hand side above. Moreover, by uniqueness of the ODE solution, the continuous evolution starting from $c(\widehat x_{k-1})$ coincides with $c$ on $[\widehat x_{k-1},1)$. Hence,
\[
    \frac{a_k+c(\widehat x_{k-1})}{1+a_k\widehat x_k} \le c(\widehat x_k),
\]
which completes the induction.

Finally, combining $C_N\le\widehat C_N$, the induction above, and \cref{eq:reg-refined-envelope-maximum}, we obtain
\[
    \rev_{DP}(P,\v_1, \ldots, \v_N) = C_N \le \widehat C_N \le c(\widehat x_N) \le \max_{0 \le x< 1} c(x) = \frac{1}{x^\star},
\]
as desired.
\hfill\Halmos\end{proof}

%%%%%%%%%%%%%%%%%%%%%%%%%%%%%%%%%%%%%%%%%%%%%
\vskip 10pt
%%%%%%%%%%%%%%%%%%%%%%%%%%%%%%%%%%%%%%%%%%%%%

\begin{proof}{Proof of \cref{lem:lower_bound}.}
For any fixed $\tilde x \in (0, 1)$, we will construct a sequence of canonical instances with optimal uniform-pricing revenue exactly one and cap-pricing revenue converging to $c(\tilde x)$. The lemma then follows by taking $\tilde x = x^\star$, since the proof of \cref{lem:reg-refined-Tri-bound} establishes that $c(x^\star) = 1/x^\star$.

Specifically, for each $m \ge 1$, we will build a canonical instance consisting of a terminal item with independent valuation $V_0 \sim \Tri(1,P)$ and $m$ additional items valued independently as $V_k \sim \Tri(a_k, \v_k)$, $k \in [m]$, where $P$ is sufficiently large. Define $h_m := A(\tilde x)/m$ and assign all additional items the same parameter $a_k := h_m$. We will choose their caps so that the original uniform-pricing constraint binds at each cap price. Unlike the relaxed constraints used in the proof of \cref{lem:reg-refined-Tri-bound}, these constraints directly control the uniform-pricing revenue of the constructed instance.

To determine these caps, for each $k \in [m]$, define $G_k(x) := (1 - x^2)(1 + h_m x)^k$ for $x \in [0,1)$, and let $x_k \in (0, 1)$ be the unique positive solution to $G_k(x) = 1$. To see that such solutions exist, are unique, and are ordered, first note that differentiating $G_k$ gives:
\[
    G_k'(x) = (1 + h_m x)^{k-1}\left[k h_m - 2x - (k + 2)h_m x^2\right].
\]
The first factor is strictly positive, while the expression in brackets is strictly decreasing on $[0,1]$, positive at zero, and negative at one. Thus, $G_k'$ changes sign exactly once, from positive to negative, so $G_k$ first strictly increases and then strictly decreases. Since $G_k(0) = 1$ and $\lim_{x \to 1} G_k(x) = 0$, the equation $G_k(x) = 1$ has a unique positive solution $x_k \in (0,1)$, and $G_k(x) \le 1$ for every
$x \in [x_k, 1)$.

Furthermore, for all $x \in (0, 1)$ and $k \in [m-1]$, $G_{k+1}(x) = G_k(x)(1 + h_m x) > G_k(x)$. In particular, $G_{k+1}(x_k) > G_k(x_k) = 1$, so its positive solution $x_{k+1}$ lies to the right of $x_k$. Consequently, $0 < x_1 < \ldots < x_m < 1$. Set $\v_k := 1/x_k$ for each additional item $k \in [m]$ and choose $P > \v_1$. The caps then satisfy $P > \v_1 > \ldots > \v_m > 1$, as required for a canonical instance.

Next, we show that the optimal uniform-pricing revenue of the constructed instance is exactly one. First, consider any price $p \in (1, \v_1]$, let $x := 1/p$, and suppose that exactly the first $k$ additional items have caps at least $p$. These items and the terminal item are the only items that can be purchased at this price. The terminal item is rejected with probability $1/(1 + 1/p)$, and each of the first $k$ additional items is rejected with probability $1/(1 + h_m x)$. Thus, by independence,
\[
    \rev_{UP}(p) = \frac{1}{x}\left(1 - \frac{1}{(1 + x)(1 + h_m x)^k}\right).
\]
Rearranging the above gives $\rev_{UP}(p)\le1$ if and only if $G_k(1/p) = (1-p^{-2})(1 + h_m/p)^k \le 1$. Since $\v_k \ge p > 1$, we have $1/p \in [x_k, 1)$. The previously established bound $G_k(x) \le 1$ for $x \in [x_k, 1)$ therefore implies $\rev_{UP}(p) \le 1$ for every $p \in (1, \v_1]$. Moreover, at each cap price $p = \v_k$, the defining equality $G_k(x_k) = 1$ gives $\rev_{UP}(\v_k) = 1$. For $p \in (\v_1, P]$, only the terminal item can be purchased, yielding uniform-pricing revenue $p/(p+1) < 1$. For $p \in [0,1]$, the bound follows immediately from $\rev_{UP}(p) \le p \le 1$, while prices above $P$ earn zero because they exceed every item's cap. Hence, the constructed instance satisfies $\rev_{UP} = 1$.

It remains to show that the cap-pricing revenue approaches $c(\tilde x)$ as the number of items and the terminal cap grow. Let $C_k^{(P, m)}$ denote the continuation revenue after incorporating the terminal item and the first $k$ of $m$ additional items in decreasing-cap order. Then,
\[
    C_0^{(P, m)} = \frac{P}{P + 1},
    \qquad
    C_k^{(P, m)} = \frac{h_m + C_{k-1}^{(P, m)}}{1 + h_m x_k}, \qquad k\in[m].
\]
Let $C_k^{(\infty, m)}$ denote the same recursion with initial continuation revenue one, i.e.,
\[
    C_0^{(\infty, m)} = 1,
    \qquad
    C_k^{(\infty, m)} = \frac{h_m + C_{k-1}^{(\infty, m)}}{1 + h_m x_k}, \qquad k \in [m].
\]
Subtracting the two recursions gives $C_k^{(\infty, m)} - C_k^{(P, m)} = \frac{C_{k-1}^{(\infty, m)} - C_{k-1}^{(P, m)}}{1 + h_m x_k}$ for any $k \in [m]$. The initial difference is $C_0^{(\infty, m)} - C_0^{(P, m)} = 1/(P + 1)$. Since $1 + h_m x_k > 1$, each subsequent difference remains non-negative and cannot exceed the preceding difference. Thus, iterating gives $0 \le C_k^{(\infty, m)} - C_k^{(P, m)} \le \frac{1}{P+1}$ for all $k$. Thus it suffices to show that $C_m^{(\infty, m)}\to c(\tilde x)$ as $m \to \infty$.

To prove this convergence, we first show that the reciprocal caps $x_k$ are close to the grid $A^{-1}(k h_m)$ in the transformed coordinate $s = A(x)$. We then solve the continuation recursion explicitly and show that its products and sums converge to the corresponding continuous integrals. All $O(\cdot)$ terms below are taken as $h_m \to 0$, with constants that may depend on $\tilde x$ but are independent of $m$ and $k$.

To begin, taking logarithms of $G_k(x_k)=1$ and dividing by $x_k$ gives $A(x_k) = -\frac{\log(1 - x_k^2)}{x_k} = k\frac{\log(1 + h_m x_k)}{x_k}$. Since $\log(1 + u) \le u$ for $u \ge 0$, we have $A(x_k) \le k h_m \le A(\tilde x)$. As shown in the proof of \cref{lem:reg-refined-Tri-bound}, $A$ is strictly increasing, so $x_k \le \tilde x$. Moreover, applying $\log(1 + u)\ge u - u^2/2$ gives
\[
    0 \le k h_m - A(x_k)
    = \frac{k}{x_k} \left(h_m x_k - \log(1 + h_m x_k)\right)
    \le \frac{k h_m^2 x_k}{2}
    \le \frac{A(\tilde x)\tilde x}{2} h_m.
\]
Hence, $\max_{k \in [m]}\left|A(x_k) - k h_m\right| = O(h_m)$. Since $A'$ is continuous and strictly positive on $[0, \tilde x]$, $A^{-1}$ is Lipschitz on $[0, A(\tilde x)]$. Applying this Lipschitz bound to $A(x_k)$ and $k h_m$ yields:
\begin{equation}\label{eq:reg-refined-tight-grid}
    x_k = A^{-1}(k h_m) + O(h_m),
\end{equation}
uniformly in $k$. Thus, although the reciprocal caps are not equally spaced, their images under $A$ lie within $O(h_m)$ of the equally spaced grid $k h_m$, which explains the use of the transformed coordinate $s = A(x)$.

We next turn to the continuation recursion. We remove the successive denominators by multiplying the continuation recursion by their cumulative product, allowing the resulting identities to telescope. Define $D_0 := 1$ and $D_k := \prod_{j = 1}^k(1 + h_m x_j)$ for $k \in [m]$. Note that the reciprocal caps $x_k$ and the products $D_k$ also depend on $m$; however, we suppress this dependence in the notation. Now, multiplying the recursion for $C_k^{(\infty, m)}$ by $D_k = D_{k-1}(1 + h_m x_k)$ gives $D_k C_k^{(\infty, m)} = D_{k-1} C_{k-1}^{(\infty, m)} + h_m D_{k-1}$. Summing this identity from $k=1$ to $m$, and using $C_0^{(\infty, m)} = D_0 = 1$, yields
\begin{equation}\label{eq:reformulation_C_m}
    C_m^{(\infty, m)} = \frac{1 + h_m \sum_{k=0}^{m-1} D_k}{D_m}.
\end{equation}
It remains to identify the limits of the numerator and denominator. Let $\Phi(s) := \int_0^s A^{-1}(u) du$ for $s \in [0, A(\tilde x)]$. Using \cref{eq:reg-refined-tight-grid}, the bound $x_j \le \tilde x$, and that $\log(1 + u) = u + O(u^2)$, we have
\begin{align}\label{eq:sum_D_K}
    \log D_k
    = \sum_{j=1}^k\log(1 + h_m x_j)
    = h_m \sum_{j=1}^k x_j + O(k h_m^2)
    = h_m \sum_{j=1}^k A^{-1}(j h_m) + O(h_m),
    \qquad k\in[m],
\end{align}
uniformly in $k$, where the last equality uses $k h_m \le A(\tilde x)$. The final sum is a Riemann sum for $\Phi(kh_m)$. Because $A^{-1}$ is Lipschitz on $[0, A(\tilde x)]$, its approximation error is $O(h_m)$ uniformly in $k$. Thus, $\log D_k = \Phi(k h_m) + O(h_m)$. Since $\Phi$ is bounded on $[0, A(\tilde x)]$, exponentiating gives $D_k = e^{\Phi(k h_m)} + O(h_m)$ uniformly for $k = 0, \ldots, m$. In particular, $D_m \to e^{\Phi(A(\tilde x))}$. Because the error is uniform in $k$, its contribution to $h_m\sum_{k=0}^{m-1}D_k$ is $O(mh_m^2)=O(h_m)$, which vanishes as $m\to\infty$. Applying the Riemann-sum argument again gives
\[
    h_m \sum_{k=0}^{m - 1} D_k \to \int_0^{A(\tilde x)} e^{\Phi(s)}\,ds.
\]
Substituting these limits into \cref{eq:reformulation_C_m}, we obtain:
\begin{equation}\label{eq:convergence_C_m}
    C_m^{(\infty, m)} \to e^{-\Phi(A(\tilde x))}\left(1 + \int_0^{A(\tilde x)} e^{\Phi(s)} ds\right).
\end{equation}

Finally, we identify this limit with $c(\tilde x)$. Define $y(s):=c(A^{-1}(s))$ for $s\in[0,A(\tilde x)]$. By the chain rule and the defining ODE $c'(x)=A'(x)(1-xc(x))$, the change of variables $s=A(x)$ gives
\[
    y'(s) = 1 - A^{-1}(s)y(s), \qquad y(0) = 1.
\]
Because $\Phi'(s) = A^{-1}(s)$, the integrating factor is $e^{\Phi(s)}$, and hence $\frac{d}{ds} \left(e^{\Phi(s)} y(s)\right) = e^{\Phi(s)}$. Integrating this identity from $0$ to $A(\tilde x)$ and using $\Phi(0) = 0$ and $y(0) = 1$ yields
\[
    y(A(\tilde x)) =e^{-\Phi(A(\tilde x))} \left(1 + \int_0^{A(\tilde x)} e^{\Phi(s)} ds\right).
\]
Since $A^{-1}(A(\tilde x)) = \tilde x$, we have $y(A(\tilde x)) = c(A^{-1}(A(\tilde x))) = c(\tilde x)$. Comparing this expression with \cref{eq:convergence_C_m} proves that $C_m^{(\infty, m)}\to c(\tilde x)$.

Taking $\tilde x = x^\star$, we have $c(x^\star) = 1/x^\star$. For any $\ep > 0$, first choose $m$ sufficiently large that $C_m^{(\infty, m)} > 1/x^\star - \ep/2$, and then choose $P > \v_1$ sufficiently large that $1/(P+1) < \ep/2$. The bound on the terminal approximation gives
\[
    \rev_{DP} \ge C_m^{(P, m)} \ge C_m^{(\infty, m)} - \frac{1}{P + 1} > \frac{1}{x^\star} - \ep.
\]
The first inequality holds because cap-pricing is a feasible differentiated price vector. Combining this with $\rev_{UP} = 1$ completes the proof.
\hfill\Halmos\end{proof}

\subsection{Additional Omitted Proofs}
\label{sec:omitted_pfs_2}
In this subsection, we state and prove the auxiliary lemmas used in the preceding proofs.

\begin{lemma}[Rejection Probability Bound]
\label{lem:reg-refined-prefix-rejection}
Consider $m$ items with independent triangular-revenue valuations $V_i \sim \Tri(a_i,\v_i)$, $i \in [m]$, whose caps satisfy $\v_1 \ge \ldots \ge \v_m \ge t$ for some $t > 0$. Suppose customers are min-buying and each item is priced at its cap. Let $C_i$ denote the continuation revenue after incorporating the first $i$ items in decreasing-cap order, so that $C_0 := 0$ and $C_i := \frac{a_i + C_{i-1}}{1 + a_i/\v_i}$ for all $i \in [m]$. Let $Q_m(t) := \prod_{i=1}^m \frac{t}{t + a_i}$ be the probability that all items are rejected at the uniform price $t$. Then,
\begin{equation}\label{eq:reg-refined-prefix-rejection-bound}
    Q_m(t) \le \frac{t}{t + C_m}.
\end{equation}
\end{lemma}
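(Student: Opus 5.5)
We prove \cref{eq:reg-refined-prefix-rejection-bound} by induction on $m$. It is convenient to work with reciprocals: the claim is equivalent to
\[
    \prod_{i=1}^m \Bigl(1 + \frac{a_i}{t}\Bigr) \ge 1 + \frac{C_m}{t}.
\]
The base case $m = 0$ is trivial, since $Q_0(t) = 1 = t/(t + C_0)$. For the inductive step, assume $\prod_{i=1}^{m-1}(1 + a_i/t) \ge 1 + C_{m-1}/t$. This hypothesis is valid because the first $m-1$ caps are also at least $t$. Multiplying by $1 + a_m/t$, it suffices to show
\[
    \Bigl(1 + \frac{C_{m-1}}{t}\Bigr)\Bigl(1 + \frac{a_m}{t}\Bigr) \ge 1 + \frac{C_m}{t},
    \qquad
    C_m = \frac{(a_m + C_{m-1})\v_m}{\v_m + a_m}.
\]

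Multiplying through by $t^2$, the target becomes
\[
    t(C_{m-1} + a_m) + C_{m-1}a_m \ge t\,C_m .
\]
Since $C_m = (a_m + C_{m-1})\v_m/(\v_m + a_m) \le a_m + C_{m-1}$, we get $tC_m \le t(a_m + C_{m-1})$. The remaining term $C_{m-1}a_m$ is nonnegative, so the inequality holds outright. This crude bound already suffices and never uses $\v_m \ge t$. The cap condition is needed only to ensure that $Q_m(t)$ is the correct rejection probability at price $t$, i.e., that each item is rejected with probability $t/(t + a_i)$ rather than $1$.

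I see no real obstacle. The one step to check carefully is the direction of the inequality after inverting. We have $Q_m(t) = \prod_i t/(t + a_i)$, so the claim is exactly $\prod_i (1 + a_i/t) \ge 1 + C_m/t$. The inductive product bound, combined with $C_m \le a_m + C_{m-1}$, gives this directly.
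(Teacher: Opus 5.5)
Your proof is correct and follows the paper's argument essentially step for step: induction on $m$, reducing to $t(t+C_m) \le (t+C_{m-1})(t+a_m)$, then closing with $C_m \le a_m + C_{m-1}$ and $a_m C_{m-1} \ge 0$. You are also right that the ordering $\v_m \ge t$ is never used in the algebra; it only justifies reading $Q_m(t)$ as the rejection probability at the uniform price $t$.
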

\begin{proof}{Proof of \cref{lem:reg-refined-prefix-rejection}.}
We prove \cref{eq:reg-refined-prefix-rejection-bound} by induction on the number of items. The claim is immediate for $m = 0$, since $C_0 = 0$ and $Q_0(t) = 1$, where the empty product is defined to be one. Now suppose the claim holds for the first $m - 1$ items, i.e., $Q_{m-1}(t) \le t/(t + C_{m-1})$. Adding item $m$ gives continuation revenue, $C_m = \frac{a_m + C_{m-1}}{1 + a_m/\v_m}$ and total rejection probability $Q_m(t) = Q_{m-1}(t)t/(t+a_m)$. By the induction hypothesis,
\begin{equation}\label{eq:induction_m}
    Q_m(t) \le \frac{t}{t + C_{m-1}}\frac{t}{t + a_m}.
\end{equation}
It remains to show that the right-hand side of \cref{eq:induction_m} is at most $t/(t+C_m)$. Because $t > 0$ and all denominators are positive, this is equivalent to showing $t(t + C_m) \le (t + C_{m-1})(t + a_m)$. By the definition of continuation revenue,
\[
    C_m = \frac{a_m + C_{m-1}}{1 + a_m/\v_m} \le a_m + C_{m-1}.
\]
Hence, the difference between the right- and left-hand sides satisfies:
\[
    (t + C_{m-1})(t + a_m) - t(t + C_m) = t(a_m + C_{m-1} - C_m) + a_m C_{m-1} \ge 0,
\]
where we also use $C_{m-1}\ge0$. Combining this inequality with \cref{eq:induction_m} gives $Q_m(t) \le t/(t + C_m)$, completing the induction.
\hfill \Halmos \end{proof}

%%%%%%%%%%%%%%%%%%%%%%%%%%%%%%%%%%%%%%%%%%%%%
\vskip 10pt
%%%%%%%%%%%%%%%%%%%%%%%%%%%%%%%%%%%%%%%%%%%%%

\begin{lemma}[Continuous Approximation of Triangular Instances] \label{lem:continuous_approx}
Consider a finite instance $\cI$ consisting of $m$ items with independent triangular-revenue valuations $V_i \sim \Tri(a_i, \v_i)$, $i \in [m]$, and suppose customers are min-buying. Let $\p =(\v_1, \ldots, \v_m)$ be its cap-pricing vector. If $\rev_{UP}(\cI) \le 1$, then, for every sufficiently small $\delta \in (0, 1)$, there exists an independent continuous regular instance $\cI^\delta$ such that
\[
    \rev_{DP}(\p;\cI^\delta) = \rev_{DP}(\p;\cI),
    \qquad
    \rev_{UP}(\cI^\delta) \le \frac{1}{1-\delta}.
\]
\end{lemma}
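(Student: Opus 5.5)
We would smooth each triangular-revenue distribution by replacing its atom at the cap with a continuous tail lying above the cap, while keeping the lower branch $\Pr(V\le t)=t/(t+a_i)$ unchanged. Concretely, for item $i$ we take $V_i^\delta$ with CDF $t/(t+a_i)$ on $[0,\v_i)$. On $[\v_i,\infty)$ we attach an equal-revenue-type tail: its survival function is $\Pr(V_i^\delta\ge t)=\frac{a_i}{a_i+\v_i}\bigl(\v_i/t\bigr)^{1/(1-\delta)}$, or equivalently a tail whose revenue-quantile curve is a line of small positive slope through the peak point $(a_i/(a_i+\v_i),\,a_i\v_i/(a_i+\v_i))$. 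Continuity at $\v_i$ holds because the survival function from the left is $a_i/(a_i+\v_i)$, which matches the mass of the former atom. Since $\Pr(V_i^\delta\ge \v_i)=a_i/(a_i+\v_i)=\Pr(V_i\ge \v_i)$, every acceptance probability at the cap prices is unchanged. The product/sum formula \cref{eq:rev_dp} then gives $\rev_{DP}(\p;\cI^\delta)=\rev_{DP}(\p;\cI)$ exactly.

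Next we verify regularity. In quantile space the revenue curve is $a_i(1-q)$ for $q\ge a_i/(a_i+\v_i)$, exactly as before. For $q$ below the peak, the tail $\Pr(V\ge t)=q_i(\v_i/t)^{\alpha}$ with $\alpha=1/(1-\delta)>1$ gives $\rev(q)=\v_i q_i^{1/\alpha}q^{1-1/\alpha}$. This is concave and increasing, with slope at the peak equal to $(1-1/\alpha)\v_i=\delta\v_i$, which is less than the left slope $\v_i$ of the triangle. However, concavity at the kink also requires the left slope to be at least the right slope $-a_i$, and $\delta\v_i\ge -a_i$ holds trivially. So the revenue curve is concave everywhere and the distribution is regular. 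It is also absolutely continuous: the density is positive and finite on both pieces.

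The main step is bounding uniform-pricing revenue. For $t\le \v_i$ every item's survival probability is unchanged. For $t>\v_i$ item $i$'s survival rises from $0$ to $q_i(\v_i/t)^{\alpha}$, and $t\cdot q_i(\v_i/t)^{\alpha}\le q_i\v_i t^{1-\alpha}\v_i^{\alpha-1}$, which is at most $q_i\v_i$ and decays. Our plan is to write the smoothed instance as a coupling: $V_i^\delta=V_i\cdot W_i$, where $W_i\ge1$ is an independent multiplier equal to one off the atom and Pareto with index $\alpha$ on the atom. A cleaner route is to note that $\Pr(V_i^\delta\ge t)\le\Pr(V_i\ge t(1-\delta))$ for all $t$, provided the tail is chosen with $\Pr(V_i^\delta\ge \v_i/(1-\delta))=0$. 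To get this we truncate the tail: let it decrease continuously on $[\v_i,\v_i/(1-\delta)]$ along the line of revenue curve slope $\ge0$, then hit zero. This keeps the revenue curve concave, since it runs from the peak back to $q=0$ along a straight chord with value at most the peak. For $t\le\v_i$ the domination holds because $t/(t+a_i)$ is increasing. Hence $\Pr(\text{sale at }t;\cI^\delta)\le\Pr(\text{sale at }(1-\delta)t;\cI)$ by independence. Then
\[
\rev_{UP}(t;\cI^\delta)\le \tfrac{1}{1-\delta}\rev_{UP}((1-\delta)t;\cI)\le\tfrac1{1-\delta}.
\]
The expected obstacle is ensuring the chord-shaped tail is simultaneously continuous, regular, and dominated by the scaled original survival function on the whole interval $(\v_i,\v_i/(1-\delta)]$. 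This needs $q(t)\le a_i/(a_i+\v_i)$ throughout, which holds because the tail survival is nonincreasing from the atom's mass. Checking this is where care with "sufficiently small $\delta$" enters.
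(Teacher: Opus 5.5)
Much of your argument is correct. Keeping the CDF $t/(t+a_i)$ below the cap preserves every cap acceptance probability, so the min-buying product formula gives $\rev_{DP}(\p;\cI^\delta)=\rev_{DP}(\p;\cI)$. Moreover, for \emph{any} continuous tail of mass $q_i$ supported in $[\v_i,\v_i/(1-\delta)]$, the domination $\Pr(V_i^\delta\ge t)\le\Pr(V_i\ge(1-\delta)t)$ holds. By independence this yields $\rev_{UP}(t;\cI^\delta)\le\frac{1}{1-\delta}\rev_{UP}((1-\delta)t;\cI)\le\frac{1}{1-\delta}$ for every $t$. This multiplicative comparison is slightly cleaner than the paper's additive coupling $V_i\le V_i^\delta\le V_i+\delta$, which needs a separate case for $p\le 1$. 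You were also right to drop the Pareto tail. For example, $N$ items $\Tri(a,1)$ with $Na/(1+a)$ large have $\rev_{UP}\le 1$, yet their unbounded tails push uniform revenue far above $1/(1-\delta)$.

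The gap is in the one object the lemma is really about: the tail. No tail of the kind you describe exists. Bounded support forces $\rev(0)=0$, since $\rev(q)\le q\,\v_i/(1-\delta)$. The only straight chord from $(0,0)$ to the peak $(q_i,\v_iq_i)$ is $\rev(q)=\v_iq$, which means $F^{-1}(1-q)\equiv\v_i$ on $(0,q_i]$; that is exactly the original atom. A chord $\rev(q)=r_0+sq$ with $r_0>0$ gives $F^{-1}(1-q)=r_0/q+s\to\infty$. The support is then unbounded and your domination step fails. Literally truncating the Pareto tail at $\v_i/(1-\delta)$ instead leaves an atom of mass $q_i(1-\delta)^{1/(1-\delta)}$ at the top, which is precisely what the lemma must eliminate.

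The missing idea is that on $[0,q_i]$ the revenue curve must be \emph{strictly} concave through the origin. The simplest choice is to spread the atom uniformly over $[\v_i,\v_i+L]$ with $L:=\delta\v_i/(1-\delta)$, i.e., survival decreasing linearly from $q_i$ to $0$. This gives $\rev(q)=\v_iq+Lq(1-q/q_i)$ on $[0,q_i]$, which is strictly concave with left slope $\v_i-L$ at $q_i$. Concavity at the kink then requires $\v_i-L\ge -a_i$, which holds, e.g., for $\delta\le 1/2$. This kink condition, not the domination step (which needs no smallness), is where ``sufficiently small $\delta$'' actually enters. With this tail your proof goes through. It is then the paper's construction with a multiplicative rather than additive spread: the paper spreads over $[\v_i,\v_i+\delta]$ and needs $\delta\le a_i+\v_i$.
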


\begin{proof}{Proof of \cref{lem:continuous_approx}.}
Consider an item with $V_i\sim\Tri(a_i,\v_i)$ in the triangular instance $\cI$, and let $q_i:=a_i/(a_i+\v_i)$ be its acceptance probability at its cap $\v_i$. By the definition of a triangular-revenue distribution, its revenue-quantile function is
\[
    \rev_{\Tri(a_i, \v_i)}(q)
    =
    \begin{cases}
        \v_i q, & 0 \le q \le q_i,\\
        a_i(1 - q), & q_i \le q \le 1.
    \end{cases}
\]
Choose $0<\delta<1$ sufficiently small that $\delta \le a_i+\v_i$ for every $i \in [m]$, and define the perturbed revenue-quantile function
\[
    \rev_i^\delta(q)
    =
    \begin{cases}
        \v_i q + \delta q \left(1 - \dfrac{q}{q_i}\right), & 0 \le q \le q_i,\\
        a_i(1 - q), & q_i \le q \le1.
    \end{cases}
\]
The two branches agree at $q_i$. The first branch is strictly concave, and the second branch is linear. Moreover, the left and right derivatives at $q_i$ satisfy
\[
    (\rev_i^\delta)'(q_i^-) = \v_i-\delta, \qquad (\rev_i^\delta)'(q_i^+) = -a_i.
\]
Because $\delta\le a_i+\v_i$, the slope weakly decreases at the joining point $q_i$. Thus, $\rev_i^\delta$ is concave.

The upper quantile function induced by $\rev_i^\delta$ is
\[
    Q_i^\delta(q)
    = \frac{\rev_i^\delta(q)}{q}
    =
    \begin{cases}
        \v_i + \delta\left(1-\dfrac{q}{q_i}\right), & 0 < q \le q_i,\\
        \dfrac{a_i(1 - q)}{q}, & q_i \le q \le 1.
    \end{cases}
\]
Since $Q_i^\delta$ is continuous and strictly decreasing on $(0,1]$, we can invert it to obtain the CDF
\[
    F_i^\delta(x)
    =
    \begin{cases}
        0, & x<0,\\
        \dfrac{x}{a_i+x}, & 0\le x \le \v_i,\\
        1 - \dfrac{q_i}{\delta}(\v_i + \delta - x), & \v_i \le x \le \v_i + \delta,\\
        1, & x \ge \v_i + \delta.
    \end{cases}
\]
The branches agree at their endpoints, so $F_i^\delta$ is continuous. Since its revenue-quantile function $\rev_i^\delta$ is concave, the induced distribution is regular. Replacing each item in $\cI$ by an independent valuation $V_i^\delta \sim F_i^\delta$ therefore gives an independent continuous regular instance $\cI^\delta$.

Next, we show that the replacement preserves the revenue of the original cap-pricing vector. Since $Q_i^\delta(q_i) = \v_i$ and $F_i^\delta$ is continuous, we have
\[
    \Pr(V_i^\delta \ge \v_i) = q_i = \Pr(V_i \ge \v_i), \qquad i\in[m].
\]
Thus, pricing every item at its original cap preserves each acceptance probability. Independence and the ordering of the designated prices are also unchanged, so min-buying revenue is preserved exactly:
\[
    \rev_{DP}(\p;\cI^\delta) = \rev_{DP}(\p;\cI).
\]

It remains to bound uniform-pricing revenue. The construction leaves the probability mass below $\v_i$ unchanged and spreads the atom at $\v_i$ over $[\v_i,\v_i+\delta]$. Thus, the original and
perturbed valuations can be coupled so that $V_i \le V_i^\delta \le V_i + \delta$. Consequently, for every $p > \delta$,
\[
    \Pr(V_i^\delta < p) \ge \Pr(V_i < p - \delta).
\]
We use strict rejection probabilities here because the original triangular-revenue distributions have atoms at their caps. By independence, for every $p > 1$,
\begin{align*}
    \rev_{UP}(p;\cI^\delta)
    &= p\left(1-\prod_{i=1}^m\Pr(V_i^\delta<p)\right)\\
    &\le p\left(1-\prod_{i=1}^m\Pr(V_i<p-\delta)\right)\\
    &= \frac{p}{p-\delta}\rev_{UP}(p-\delta;\cI)\\
    &\le \frac{p}{p-\delta}\\
    &\le \frac{1}{1-\delta},
\end{align*}
where the last two inequalities follow from $\rev_{UP}(\cI)\le1$, $p>1$, and $0<\delta<1$. For $0\le p\le1$, uniform-pricing revenue is at most $p\le1$. Taking the supremum over all nonnegative uniform prices therefore gives $\rev_{UP}(\cI^\delta)\le1/(1-\delta)$, completing the proof.
\hfill\Halmos\end{proof}

\end{APPENDICES}

% \newpage
% \bibliographystyle{ormsv080}
% \bibliography{bibliography.bib}

%%%%%%%%%%%%%%%%%
\end{document}